\documentclass[12pt, reqno]{amsart}

\makeatletter
\g@addto@macro{\endabstract}{\@setabstract}
\makeatother

\makeatletter
\def\@seccntformat#1{%
	\ifstrequal{#1}{subsection}%
	{\bfseries\csname the#1\endcsname.}%
	{\csname the#1\endcsname.}%
}
\makeatother

\usepackage{etoolbox}
\usepackage{amsmath, amssymb, amsthm}
\usepackage{amsfonts}
\usepackage{graphicx}
\usepackage[round,authoryear]{natbib}
\usepackage{xcolor}
\usepackage[citecolor=purple, colorlinks=true, linkcolor=blue, urlcolor=blue]{hyperref}
\usepackage{mathrsfs}
\usepackage[left=1.25in, right=1.25in, top=1.0in, bottom=1.15in, includehead, includefoot]{geometry}
\usepackage{enumitem}
\usepackage{booktabs}
\usepackage{multirow}
\usepackage{makecell}
\usepackage{adjustbox}
\usepackage{lscape}
\usepackage{float}
\usepackage{caption}
\usepackage{subcaption}
\usepackage[flushleft]{threeparttable}
\usepackage{bm}
\usepackage{tikz}
\usepackage{pgfplots}
\pgfplotsset{compat=1.15}

\renewcommand{\epsilon}{\varepsilon}

\theoremstyle{plain}
\newtheorem{theorem}{Theorem}[section]
\newtheorem{corollary}{Corollary}[section]
\newtheorem{lemma}{Lemma}[section]
\newtheorem{proposition}{Proposition}[section]

\theoremstyle{definition}

\newtheorem{example}{Example}[section]

\newtheorem{assumption}{Assumption}[section]

\usepackage{manyfoot}
\DeclareNewFootnote{default}
\DeclareNewFootnote{symbol}[fnsymbol]

\makeatletter
\renewcommand{\@makefnmark}{%
	\hbox{\@textsuperscript{\normalfont\footnotesize\@thefnmark}}%
}
\makeatother

\begin{document}

\title{}

\date{\today}

\begin{center}
	\LARGE
	From Centrality Discounts to Centrality Premia:\\
	Interoperability and Platform Competition in Social Networks


	\vspace{1em}
	\normalsize
	Weiming Li \quad
	Jing Sun \quad
	Xinxi Song \quad
	Bin Wu\footnote{Corresponding author.
	Email address: \texttt{bwu.econ@gmail.com}}

	\vspace{0.6em}
	International School of Economics and Management,\\
	Capital University of Economics and Business

	\par

	\vspace{1em}
	\normalsize{\today}
\end{center}

\begin{abstract}
We study how interoperability reshapes competitive price discrimination when consumers are embedded in a social network. Two differentiated platforms set personalized prices; consumers benefit from neighbors' consumption of the same platform and, under interoperability, of the rival. Equilibrium prices obtain in closed form for arbitrary networks and contain a network-position term, proportional to Katz–Bonacich centrality, whose sign is determined by whether interoperability exceeds product substitutability. Below this threshold, platforms contest central consumers and grant centrality discounts; above it, central consumers become gateways to a shared cross-platform network and pay premia; at the threshold, prices are independent of network position. Interoperability softens price competition, can make platforms favor denser consumer networks, and reverses which side of the market gains from price discrimination.\\
	\vspace{1em}
	\noindent
	\textit{JEL Classifications:} D43; D85; L13; L14; L15; L86 \\
	\textit{Keywords:} Social networks; platform interoperability; competitive pricing; price discrimination; network externalities; Katz-Bonacich centrality.
\end{abstract}

\section{Introduction}

On March 7, 2024, the European Union's Digital Markets Act became binding on the platforms it designates as gatekeepers---and with it a requirement that reshapes how digital platforms compete for users. Under Article 7, Meta must make the core functionalities of WhatsApp and Facebook Messenger interoperable with rival messaging services; in a rollout that began in late 2025, European WhatsApp users who opt in can exchange text messages, images, voice messages, videos, and files with users of the first third-party apps that have chosen to interoperate. The mandate targets a familiar source of market power. A messaging app is valuable to a consumer not for its global user count but because her own contacts are reachable on it. That is what makes a dominant network hard to dislodge: a contact adds value only if she is on the same app. Interoperability severs that tie. Once messages cross platform boundaries, a contact generates value for a consumer regardless of which app the contact uses.

This change does more than lower switching costs. It alters the strategic value of individual consumers---and therefore the prices platforms set for them. Because consumers are embedded in a social network, they are not interchangeable to a platform. A well-connected consumer is worth more than a peripheral one, not only for her own demand but because attracting her helps a platform reach the consumers around her. Platforms compete fiercely for such consumers, and much of what the industry calls growth marketing is in effect position-contingent pricing: referral bonuses pay users to bring in their contacts, influencer and ambassador programs subsidize high-degree nodes, and seeding campaigns target local hubs. The underlying logic is business stealing: winning an influential consumer tilts her neighbors toward one platform and away from its rival. Interoperability acts directly on this logic, because it determines whether the neighbors a hub delivers remain exclusive to the platform that wins her.

The DMA is one instance of a broader movement. Instant-payment systems increasingly let an account with one provider pay an account with another; cross-play lets gamers on rival consoles share the same match; federated social networks let a post on one service reach followers on another. Across these settings the same parameter varies---the degree to which network value spills across platform boundaries---from fully closed ecosystems, $\theta=0$, to full interoperability, $\theta=1$. This paper asks how that parameter reshapes competitive pricing in social networks: as platform boundaries open, what happens to the relationship between a consumer's network position and the price she pays?

We study two differentiated platforms selling to consumers connected by a fixed social network. The platforms observe the network and simultaneously set a personalized price for every consumer; consumers then choose continuous quantities of both services, and may patronize the two at once. The degree to which one service crowds out the marginal value of the other is measured by a substitutability parameter $\beta$. A consumer's value from a platform rises with her neighbors' consumption of the same platform---the familiar within-platform network externality---and, at a fraction $\theta$ of that rate, with her neighbors' consumption of the rival---the cross-platform externality that interoperability creates. We characterize the unique symmetric equilibrium of the pricing game in closed form for arbitrary networks. When $\theta=0$, the model collapses to the competitive network-pricing framework of  \citet{ChenZenouZhou2018RAND}, so any departure from its predictions is attributable to interoperability alone.

The model delivers a sharp answer. Equilibrium prices decompose into three parts: a monopoly benchmark, a competitive markdown driven by product substitutability, and a network-position term proportional to the network-generated component of each consumer's weighted Katz--Bonacich centrality. The sign of this last term---the paper's central result---turns on a single comparison between interoperability $\theta$ and substitutability $\beta$. When $\theta<\beta$, the standard logic prevails: a central consumer is a contested, exclusive asset---winning her delivers her neighbors to one platform and denies them to the other---so competition drives firms to subsidize centrality, and the well-connected receive discounts. Interoperability dissolves this exclusivity. As $\theta$ rises, a hub's neighbors generate value for her no matter which platform they patronize; capturing the hub therefore no longer locks in her surroundings, and the business-stealing motive behind centrality discounts fades. At $\theta=\beta$, the two forces cancel exactly---for every network architecture and every admissible strength of network effects---and equilibrium prices cease to depend on network position altogether. Once $\theta>\beta$, the sign flips: central consumers, now high-value gateways into a large cross-platform neighborhood rather than prizes to be fought over, command a premium instead of a discount.

The intuition is that interoperability changes the ownership of network value. In a closed ecosystem, the network benefit created by attracting a central consumer is firm-specific: her presence raises her neighbors' willingness to pay for the same platform, and for that platform only---which is precisely why platforms fight for her. Interoperability lets part of this value spill across the platform boundary, so winning her no longer confers an exclusive advantage. At the same time, the central consumer herself becomes more valuable to serve: she now draws access value from her entire neighborhood rather than from the subset that shares her platform, and this network-inflated willingness to pay accrues to her whichever platform she patronizes---so it can be priced rather than fought over. The business-stealing force scales with substitutability; the access-value force scales with interoperability. When the latter dominates---exactly when $\theta$ exceeds $\beta$---firms no longer subsidize centrality; they monetize it.

We then turn to welfare. Interoperability creates network value directly, by letting valuable interactions cross platform boundaries, but it also softens price competition, because network value is no longer exclusive to either platform; its net effect on consumer surplus and total welfare is therefore ambiguous in general. Regular networks, which shut down centrality heterogeneity, isolate the competitive channel: there, interoperability raises equilibrium prices and platform profits, while its effect on consumer surplus depends on product substitutability and the strength of network effects. Interoperability thus embodies a genuine trade-off---it enlarges the surplus the network generates while letting platforms capture more of it---rather than an unambiguous competitive gain.

Next, we examine how interoperability affects platforms' incentives toward the consumer network itself. In  \citet{ChenZenouZhou2018RAND}, denser networks can reduce profits: when products are sufficiently close substitutes, stronger network effects intensify the contest for influential consumers and the rents are competed away. Interoperability mitigates this competitive cost. When cross-platform externalities are sufficiently strong, platforms prefer denser networks even under parameter values for which, absent interoperability, they would prefer sparser ones. Interoperability can therefore convert the consumer network from a source of competitive pressure into a source of shared, monetizable value---with corresponding implications for whether platforms encourage or suppress social connectivity among their users.

Finally, we compare uniform and discriminatory pricing. In non-regular networks, we show that the welfare and distributional consequences of price discrimination are governed by the same comparison between $\theta$ and $\beta$. When $\theta<\beta$, discrimination plays out as competing centrality discounts: consumer surplus rises and platforms would jointly prefer to commit to uniform prices---the familiar prisoner's-dilemma logic of competitive price discrimination. When $\theta>\beta$, the roles reverse: discrimination becomes an instrument for extracting centrality premia, so platforms gain from it while consumers would prefer it banned. The threshold that fixes the sign of centrality pricing thus also determines which side of the market gains from the ability to price on network position at all.

Taken together, the results carry a distributional message for interoperability policy. Mandates such as the DMA's Article 7 are usually debated in terms of entry, tipping, and the intensity of platform competition. Our analysis shows that they also determine which consumers pay for access: interoperability compresses and eventually inverts the centrality discounts of closed ecosystems, shifting surplus between the well-connected and the peripheral---and it can soften the very price competition it is meant to invigorate.

The paper makes three contributions. First, it extends models of competitive pricing in social networks by introducing cross-platform local network externalities, while preserving full tractability: equilibrium prices obtain in closed form on any network. Second, it identifies a centrality-reversal mechanism---interoperability can overturn the classic centrality discount and generate centrality premia---and locates the exact threshold, $\theta=\beta$, at which the sign flips. Third, it traces the consequences of this consumer-level mechanism for welfare, for platforms' incentives toward the formation of the network itself, and for who gains from price discrimination.

The rest of the paper is organized as follows. Section 2 reviews the related literature. Section 3 presents the model. Section 4 characterizes equilibrium prices and establishes the centrality-reversal result. Section 5 studies welfare and distributional effects. Section 6 designs interoperability. Section 7 analyzes network formation and platform incentives. Section 8 compares uniform and discriminatory pricing.

\section{Related Literature}

This paper is related to three strands of literature.

First, it contributes to the literature on pricing in social networks and network games. A large body of work builds on the linear-quadratic network game framework of \citet{BallesterCalvoArmengolZenou2006Econometrica}, in which equilibrium actions are proportional to Bonacich centrality. \citet{BlochQuerou2013GEB} studies monopoly pricing in social networks and identifies cases in which the monopolist does not discriminate across network positions. \citet{CandoganBimpikisOzdaglar2012OR} studies optimal monopoly pricing for divisible goods with local network effects and characterizes prices in terms of a network-independent component, a discount proportional to a consumer's influence, and a markup proportional to the influence exerted on that consumer. \citet{FainmesserGaleotti2016REStud} studies how information about consumers' network positions affects pricing when consumers generate network effects. \citet{FainmesserGaleotti2020AEJMicro} studies influencer marketing in oligopoly markets, where firms gather information on consumers' influence and price discriminate using this information. The closest paper to ours is \citet{ChenZenouZhou2018RAND}, who show that competing firms charge lower prices to more central consumers. We follow their linear-quadratic competitive pricing framework closely, but introduce cross-platform local network externalities. This allows us to ask whether the centrality-discount result survives when rival platforms become interoperable.

Second, our paper is related to work on price competition under local network externalities. \citet{Aoyagi2018JET} studies Bertrand competition when buyers are located on a network and the value of each firm's good depends on neighbors' adoption decisions. His analysis shows that, under linear local externalities, marginal-cost pricing and bipartition pricing can be consistent with equilibria, and applies the framework to platform competition in two-sided markets. Other contributions study price competition, market segmentation, and strategic interaction under local network effects \citep[e.g.,][]{BanerjiDutta2009IJIO,Jullien2011AEJMicro}. Our model differs from this literature in two respects. Consumers choose continuous consumption levels from differentiated platforms, and interoperability allows the consumption of one platform to generate local network benefits for consumers of the rival platform. Our focus is therefore not only on the existence or structure of equilibrium prices, but on how interoperability changes the sign of centrality-based price discrimination.

Third, our paper speaks to the literature on platforms, compatibility, interoperability, and multihoming. Classic contributions study how network externalities and compatibility affect adoption and competition \citep{Rohlfs1974Bell,KatzShapiro1985AER,FarrellSaloner1985RAND,FarrellSaloner1986AER,Economides1996IJIO}. The two-sided platform literature studies pricing when platforms internalize cross-group network effects \citep{Armstrong2006RAND,RochetTirole2003JEEA,RochetTirole2006RAND,CaillaudJullien2003RAND}. More recent work examines platform pricing, multihoming, and competition in richer platform environments \citep{HagiuWright2015IJIO,TanZhou2021REStud,JullienPavanRysman2021Handbook,TehLiuWrightZhou2023AEJMicro}. \citet{DoganogluWright2006IJIO} studies multihoming and compatibility in markets with network effects, and \citet{HuangTanTehZhou2026SSRN} develops a network approach to interoperability by modeling interoperability as a weighted network of connections among competing platforms and studying how different interoperability configurations affect equilibrium prices and welfare. These papers primarily study compatibility, multihoming, or interoperability at the platform level. By contrast, we take interoperability as a parameter governing cross-platform local network externalities among consumers and study its implications for individualized prices and the distribution of surplus within a social network.

Our contribution is to connect these literatures at the level of consumer network positions. In the social-network pricing literature, central consumers are typically valuable competitive targets because they help a firm expand its own network. In the platform-interoperability literature, cross-platform connections change how network value is shared across platforms. We show that when these forces interact, the sign of centrality-based price discrimination is no longer fixed. Interoperability can overturn the classic centrality discount and generate centrality premia. It can also change firms' incentives toward network formation, turning dense consumer networks from a source of competitive pressure into a source of shared monetizable value.

\section{Model}

Consider a market with two competing platforms, denoted by \(A\) and \(B\). The two platforms provide differentiated but substitutable services to a set of consumers
\[
	N=\{1,\ldots,n\}.
\]
Consumers are embedded in an undirected social network represented by an \(n\times n\) weighted adjacency matrix
\[
	\bm G=(g_{ij})_{i,j\in N}.
\]
We assume that \(g_{ij}\geq 0\), \(g_{ii}=0\) for all \(i\in N\), and \(g_{ij}=g_{ji}\). Thus, \(g_{ij}\) measures the strength of the social link between consumers \(i\) and \(j\), and \(g_{ij}=0\) means that there is no direct link between them. No regularity of the network is imposed unless explicitly stated.

Each consumer can consume services from both platforms. Let \(x_i^k\geq 0\) denote consumer \(i\)'s consumption of platform \(k\in\{A,B\}\). Let \(p_i^k\) be the price charged by platform \(k\) to consumer \(i\), and let \(a_i^k\) denote consumer \(i\)'s intrinsic marginal utility from platform \(k\). Consumer \(i\)'s utility is given by
\begin{align}
\begin{aligned}
	u_i
	=&\ a_i^A x_i^A+a_i^B x_i^B
	-\left\{\frac{1}{2}(x_i^A)^2+\frac{1}{2}(x_i^B)^2+\beta x_i^A x_i^B\right\} \\
	&+\delta\sum_{j\in N}g_{ij}x_i^A x_j^A
	+\delta\sum_{j\in N}g_{ij}x_i^B x_j^B \\
	&+\theta\delta\sum_{j\in N}g_{ij}x_i^A x_j^B
	+\theta\delta\sum_{j\in N}g_{ij}x_i^B x_j^A \\
	&-p_i^A x_i^A-p_i^B x_i^B .
\end{aligned}
\label{eq:utility}
\end{align}

The utility function has four components. The first line captures the consumer's intrinsic benefits and the standard quadratic costs of consuming the two services. The parameter
\[
	\beta\in[0,1)
\]
measures product substitutability: a higher \(\beta\) means that consuming one platform reduces the marginal utility of consuming the other platform more strongly.

The second line captures within-platform local network externalities. If consumer \(j\) consumes platform \(A\), then consumer \(i\)'s utility from consuming platform \(A\) increases in proportion to \(g_{ij}\); the same applies to platform \(B\). The parameter \(\delta\geq 0\) measures the strength of these same-platform network externalities.

The third line is the key departure from the standard competitive pricing model in social networks. It captures cross-platform local network externalities induced by interoperability. The parameter
\[
	\theta\in[0,1]
\]
measures the degree of interoperability between the two platforms. When \(\theta=0\), there is no cross-platform network externality: consumer \(i\)'s value from platform \(A\) is not directly affected by neighbors' consumption of platform \(B\), and vice versa. This does not mean that the two platforms are independent, since they may still be substitutes through \(\beta\). When \(\theta=1\), cross-platform network externalities are as strong as same-platform network externalities along the existing consumer network. Intermediate values of \(\theta\) represent partial interoperability.

For example, in online gaming, interoperability allows a user on one platform to interact with friends using another platform. In messaging or social-media services, interoperability may allow messages, content, or social interactions to travel across platform boundaries. In such environments, a consumer's value from one platform may depend not only on neighbors who consume the same platform, but also on neighbors who consume a rival interoperable platform.

The last line of \eqref{eq:utility} is consumer \(i\)'s total expenditure on the two platforms. We focus on the interior region of the linear-quadratic model, where the equilibrium quantities characterized below are nonnegative.

Platform \(k\)'s marginal cost of serving consumer \(i\) is denoted by \(c_i^k\). We assume
\[
	a_i^k>c_i^k,\qquad i\in N,\quad k\in\{A,B\}.
\]
Let
\[
	\bm p^k=(p_1^k,\ldots,p_n^k)^\top,\quad
	\bm x^k=(x_1^k,\ldots,x_n^k)^\top,\quad
	\bm c^k=(c_1^k,\ldots,c_n^k)^\top.
\]
Platform \(k\)'s profit is
\begin{align}
	\Pi^k
	=
	\left\langle \bm p^k-\bm c^k,\bm x^k\right\rangle
	=
	\sum_{i\in N}(p_i^k-c_i^k)x_i^k .
\label{eq:profit}
\end{align}
Our baseline analysis allows platforms to charge personalized prices. We later discuss uniform pricing and compare it with discriminatory pricing.

We impose the following stability condition.

\begin{assumption}[Network stability]
\label{ass:stability}
The parameters satisfy
\[
	\delta(1+\theta)\rho(\bm G)<1+\beta
	\quad\text{and}\quad
	\delta(1-\theta)\rho(\bm G)<1-\beta,
\]
where \(\rho(\bm G)\) is the spectral radius of \(\bm G\).
\end{assumption}

Assumption \ref{ass:stability} guarantees that the consumption game has a unique interior equilibrium for any price vectors in the relevant range. Equivalently, the matrices
\[
	(1+\beta)\bm I-\delta(1+\theta)\bm G
	\quad\text{and}\quad
	(1-\beta)\bm I-\delta(1-\theta)\bm G
\]
are nonsingular and well behaved. Notice that when \(\theta=1\), the second condition becomes \(0<1-\beta\), which is automatically satisfied because \(\beta<1\).

\begin{lemma}[Stability in the centrality-premium region]
\label{lem:stability_premium_region}
Suppose Assumption \ref{ass:stability} holds. Then
\[
	\delta(2-\theta)\rho(\bm G)<2-\beta.
\]
Consequently, the matrix
\[
	(2-\beta)\bm I-\delta(2-\theta)\bm G
\]
is nonsingular, and the Katz-Bonacich term appearing in the equilibrium price formula is well defined.

Moreover, when \(\theta\geq\beta\), the first stability condition,
\[
	\delta(1+\theta)\rho(\bm G)<1+\beta,
\]
already implies
\[
	\delta(2-\theta)\rho(\bm G)<2-\beta.
\]
Thus, the centrality-premium region \(\theta>\beta\) does not require an additional stability restriction beyond Assumption \ref{ass:stability}.
\end{lemma}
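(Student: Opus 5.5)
Write $\rho=\rho(\bm G)\ge 0$. The proof is elementary: the inequality $\delta(2-\theta)\rho<2-\beta$ is a combination of the two stability inequalities in Assumption~\ref{ass:stability}, and nonsingularity then follows from a spectral argument.

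\emph{Main inequality.} I would use the decomposition
\[
2-\theta=\tfrac{3}{2}(1-\theta)+\tfrac{1}{2}(1+\theta), \qquad 2-\beta=\tfrac{3}{2}(1-\beta)+\tfrac{1}{2}(1+\beta).
\]
Multiply the second condition, $\delta(1-\theta)\rho<1-\beta$, by $3/2$ and the first, $\delta(1+\theta)\rho<1+\beta$, by $1/2$, then add. Both weights are positive and the identities hold for every $\theta$ and $\beta$, so this yields $\delta(2-\theta)\rho<2-\beta$ with no case distinction. I regard finding these weights as the only real step.

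\emph{Nonsingularity.} Since $\bm G$ is real symmetric, its eigenvalues $\lambda$ are real with $|\lambda|\le\rho$. The eigenvalues of $(2-\beta)\bm I-\delta(2-\theta)\bm G$ are $(2-\beta)-\delta(2-\theta)\lambda$. Because $2-\theta\ge 1>0$ and $\delta\ge0$,
\[
(2-\beta)-\delta(2-\theta)\lambda \;\ge\; (2-\beta)-\delta(2-\theta)\rho \;>\;0,
\]
so the matrix is positive definite, hence nonsingular. When $\delta(2-\theta)\rho<2-\beta$, the Neumann series $\sum_{k\ge 0}\bigl(\delta(2-\theta)/(2-\beta)\bigr)^k\bm G^k$ converges. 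Its limit is $(2-\beta)\bigl[(2-\beta)\bm I-\delta(2-\theta)\bm G\bigr]^{-1}$, which is nonnegative entrywise, so the Katz--Bonacich term is well defined.

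\emph{The ``moreover'' part.} Assume $\theta\ge\beta$. If $\delta\rho=0$ the claim is trivial, so take $\delta\rho>0$. The first condition gives $\delta\rho<(1+\beta)/(1+\theta)$, so it suffices to show
\[
(2-\theta)(1+\beta)\le(2-\beta)(1+\theta).
\]
Expanding, the difference of the right side minus the left side is $3(\theta-\beta)\ge0$. Hence $\delta(2-\theta)\rho<(2-\theta)(1+\beta)/(1+\theta)\le 2-\beta$.

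This shows that in the region $\theta>\beta$ the second stability condition is not needed for the price formula's Katz--Bonacich term. No step here is a genuine obstacle.
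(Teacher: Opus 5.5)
Your proof is correct, but your route to the main inequality differs from the paper's. The paper splits into cases. For $\theta\ge\beta$ it uses only the first condition together with $\frac{1+\beta}{1+\theta}\le\frac{2-\beta}{2-\theta}$. For $\theta<\beta$ it uses only the second condition together with $\frac{1-\beta}{1-\theta}<\frac{2-\beta}{2-\theta}$. You instead take the fixed positive combination $\tfrac12(\text{first})+\tfrac32(\text{second})$, which works uniformly with no case split.

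Your weights are not ad hoc. Write $\bm D=(1+\beta)\bm I-\delta(1+\theta)\bm G$, $\bm K=(1-\beta)\bm I-\delta(1-\theta)\bm G$ and $\bm V=(2-\beta)\bm I-\delta(2-\theta)\bm G$. Your weights are the scalar form of the matrix identity $2\bm V=\bm D+3\bm K$. This is the same identity that turns $(3\bm K+\bm D)^{-1}$ into $\tfrac12\bm V^{-1}$ in the paper's derivation of $\bm p^*$. Read at the matrix level, your argument says $\bm V$ is positive definite because $\bm D$ and $\bm K$ are.

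What the paper's case split buys is finer information: which single condition suffices in each region. That is exactly the content of the ``moreover'' clause, so the paper gets it for free. You have to redo the $\theta\ge\beta$ computation separately, which you do correctly via $(2-\beta)(1+\theta)-(2-\theta)(1+\beta)=3(\theta-\beta)$.

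Your treatment of nonsingularity is also more complete than the paper's one-line appeal to symmetry. You make the real-spectrum and positive-definiteness step explicit, and you verify convergence and entrywise nonnegativity of the Neumann series defining the Katz--Bonacich vector.
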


Lemma \ref{lem:stability_premium_region} clarifies that the case \(\theta>\beta\), in which central consumers may face premia rather than discounts, is not a mathematically unstable region of the model. The same stability assumption that guarantees a well-defined consumption game also guarantees that the pricing expressions used below are well defined.

We will use two symmetry assumptions at different points in the analysis.

\begin{assumption}[Platform symmetry]
\label{ass:platform_symmetry}
For every consumer \(i\in N\),
\[
	a_i^A=a_i^B=a_i,
	\qquad
	c_i^A=c_i^B=c_i .
\]
\end{assumption}

Assumption \ref{ass:platform_symmetry} imposes symmetry across platforms for each consumer, while allowing consumers to differ in their intrinsic valuations and marginal costs. This is the baseline assumption for our main pricing characterization.

\begin{assumption}[Full symmetry]
\label{ass:full_symmetry}
For every consumer \(i\in N\),
\[
	a_i^A=a_i^B=a,
	\qquad
	c_i^A=c_i^B=c .
\]
\end{assumption}

Assumption \ref{ass:full_symmetry} further removes ex ante heterogeneity across consumers. Under this assumption, consumers differ only in their positions in the social network. We use this stronger assumption when deriving transparent comparative statics and examples.

The timing is as follows. In the first stage, platforms \(A\) and \(B\) simultaneously choose price vectors \(\bm p^A\) and \(\bm p^B\). In the second stage, after observing both price vectors, consumers simultaneously choose their consumption bundles
\[
	\bm x_i=(x_i^A,x_i^B).
\]
The solution concept is subgame-perfect Nash equilibrium. We solve the game by backward induction: first characterizing the consumption equilibrium for given prices, and then deriving the platforms' equilibrium pricing strategies.


\section{Equilibrium Pricing}
\label{sec:equilibrium_pricing}

We solve the game by backward induction. This section first characterizes the interior consumption equilibrium for price vectors that induce nonnegative quantities. We then derive the equilibrium pricing strategies under platform symmetry and show how interoperability changes the relationship between network centrality and prices. All proofs are collected in the Appendix.

For notational convenience, define
\[
	\bm M^+
	=
	\left[(1+\beta)\bm I-\delta(1+\theta)\bm G\right]^{-1},
	\qquad
	\bm M^-
	=
	\left[(1-\beta)\bm I-\delta(1-\theta)\bm G\right]^{-1}.
\]
Under Assumption \ref{ass:stability}, both matrices are well defined.

\begin{proposition}[Consumption equilibrium]
\label{prop:consumption}
Suppose Assumption \ref{ass:stability} holds and the price vectors are such that the interior solution in \eqref{eq:consumption_equilibrium} is nonnegative. Then the consumption stage has a unique interior equilibrium characterized by 
\begin{align}
\begin{aligned}
	\bm x^A
	&=
	\frac{\bm M^++\bm M^-}{2}(\bm a^A-\bm p^A)
	+
	\frac{\bm M^+-\bm M^-}{2}(\bm a^B-\bm p^B),
	\\
	\bm x^B
	&=
	\frac{\bm M^++\bm M^-}{2}(\bm a^B-\bm p^B)
	+
	\frac{\bm M^+-\bm M^-}{2}(\bm a^A-\bm p^A).
\end{aligned}
\label{eq:consumption_equilibrium}
\end{align}
\end{proposition}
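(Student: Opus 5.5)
The plan is to write down the first-order conditions of each consumer's utility maximization, stack them in vector form, and diagonalize the resulting $2n$-dimensional linear system with the sum/difference change of variables.

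\textbf{First-order conditions.} Consumer $i$'s utility \eqref{eq:utility} is strictly concave in her own bundle $(x_i^A,x_i^B)$, since the Hessian $\begin{pmatrix}-1&-\beta\\-\beta&-1\end{pmatrix}$ is negative definite for $\beta\in[0,1)$. So at an interior point her best response is given by the first-order conditions
\[
a_i^A-p_i^A-x_i^A-\beta x_i^B+\delta\sum_j g_{ij}x_j^A+\theta\delta\sum_j g_{ij}x_j^B=0,
\]
together with the symmetric condition for $B$. In vector form this reads
\[
\bm x^A+\beta\bm x^B-\delta\bm G\bm x^A-\theta\delta\bm G\bm x^B=\bm a^A-\bm p^A,
\]
\[
\bm x^B+\beta\bm x^A-\delta\bm G\bm x^B-\theta\delta\bm G\bm x^A=\bm a^B-\bm p^B.
\]

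\textbf{Decoupling.} Define $\bm s=\bm x^A+\bm x^B$ and $\bm d=\bm x^A-\bm x^B$. Adding the two equations gives
\[
[(1+\beta)\bm I-\delta(1+\theta)\bm G]\bm s=(\bm a^A-\bm p^A)+(\bm a^B-\bm p^B).
\]
Subtracting them gives the analogous equation with $(1-\beta)\bm I-\delta(1-\theta)\bm G$ acting on $\bm d$, with right-hand side equal to the difference of the two vectors $\bm a^k-\bm p^k$. Under Assumption \ref{ass:stability}, both matrices are nonsingular. Indeed, $\bm G$ is symmetric, so its eigenvalues are real and bounded in absolute value by $\rho(\bm G)$. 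Hence every eigenvalue of $(1\pm\beta)\bm I-\delta(1\pm\theta)\bm G$ is at least $(1\pm\beta)-\delta(1\pm\theta)\rho(\bm G)>0$. Both matrices are therefore positive definite, and $\bm s=\bm M^+(\cdot)$ and $\bm d=\bm M^-(\cdot)$ are uniquely determined. Recovering $\bm x^A=(\bm s+\bm d)/2$ and $\bm x^B=(\bm s-\bm d)/2$ and collecting terms yields exactly \eqref{eq:consumption_equilibrium}.

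\textbf{Equilibrium and uniqueness.} The linear system has exactly one solution. Whenever that solution is nonnegative, each consumer's first-order conditions hold at it, and by concavity her bundle is then a best response. So it is a Nash equilibrium of the consumption game, and it is the unique interior one.

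The main subtlety is making sure that satisfying the FOCs really characterizes equilibrium, which is handled by own-concavity. One could additionally note that the game is a potential game: its potential is a quadratic form whose Hessian is $-\begin{pmatrix}\bm I-\delta\bm G&\beta\bm I-\theta\delta\bm G\\ \beta\bm I-\theta\delta\bm G&\bm I-\delta\bm G\end{pmatrix}$, and this is negative definite by the same eigenvalue argument after the sum/difference rotation. That makes uniqueness hold even among all equilibria, but for the interior statement the linear-algebra argument suffices.
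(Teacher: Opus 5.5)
Your proof is correct and follows the same route as the paper: write the first-order conditions in vector form, take sums and differences to decouple the system into the $\bm M^+$ and $\bm M^-$ blocks, and invert. You also supply steps the paper leaves implicit---own-concavity so that the first-order conditions are sufficient, and the eigenvalue bound showing both matrices are positive definite under Assumption~\ref{ass:stability}---which tightens the argument without changing it.
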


The two matrices \(\bm M^+\) and \(\bm M^-\) correspond to the sum and difference of the two consumption systems. The matrix \(\bm M^+\) captures the propagation of aggregate consumption across platforms, while \(\bm M^-\) captures the propagation of relative consumption between the two platforms. Interoperability strengthens the former through \(1+\theta\) and weakens the latter through \(1-\theta\). Thus, interoperability makes the two platform goods more connected in the network externality channel, even though they may remain substitutes through \(\beta\).

Under Assumption \ref{ass:platform_symmetry}, we have \(\bm a^A=\bm a^B=\bm a\) and \(\bm c^A=\bm c^B=\bm c\). In any symmetric pricing equilibrium with \(\bm p^A=\bm p^B=\bm p\), Proposition \ref{prop:consumption} implies
\[
	\bm x^A=\bm x^B=\bm x=\bm M^+(\bm a-\bm p).
\]
We now characterize the equilibrium price vector.

For any scalar \(\alpha\) satisfying \(\alpha\rho(\bm G)<1\) and any vector \(\bm y\in\mathbb R^n\), define the weighted Katz-Bonacich centrality vector by
\[
	\bm b(\bm G,\alpha,\bm y)
	=
	(\bm I-\alpha\bm G)^{-1}\bm y .
\]
In our pricing formula, the relevant discount factor is
\[
	\alpha
	=
	\frac{\delta(2-\theta)}{2-\beta}.
\]
Assumption \ref{ass:stability} guarantees that \(\alpha\rho(\bm G)<1\).

\begin{proposition}[Equilibrium price]
\label{prop:pricing}
Suppose Assumptions \ref{ass:stability} and \ref{ass:platform_symmetry} hold. Then the pricing game admits a unique symmetric equilibrium satisfying
\[
	\bm p^{A*}=\bm p^{B*}=\bm p^*.
\]
The equilibrium price vector is
\begin{align}
\boxed{
\bm p^*
=
\frac{\bm a+\bm c}{2}
-
\frac{\beta}{2(2-\beta)}(\bm a-\bm c)
+
\frac{\theta-\beta}{(2-\beta)(2-\theta)}
\left[
	\bm b\left(
		\bm G,
		\frac{\delta(2-\theta)}{2-\beta},
		\bm a-\bm c
	\right)
	-(\bm a-\bm c)
\right].
}
\label{eq:price_decomposition}
\end{align}
Equivalently,
\begin{align}
\bm p^*
=
\bm c
+
\frac{1-\theta}{2-\theta}(\bm a-\bm c)
+
\frac{\theta-\beta}{(2-\beta)(2-\theta)}
\bm b\left(
	\bm G,
	\frac{\delta(2-\theta)}{2-\beta},
	\bm a-\bm c
\right).
\label{eq:price_kb}
\end{align}
\end{proposition}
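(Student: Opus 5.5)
The plan is to derive the platforms' first-order conditions from the consumption equilibrium in Proposition \ref{prop:consumption}, impose symmetry, and solve the resulting linear system. I would solve it eigenvalue by eigenvalue, using the fact that every matrix involved is a rational function of the symmetric matrix $\bm G$.

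\textbf{Step 1: best-response conditions.} Write $\bm S=\tfrac12(\bm M^++\bm M^-)$ and $\bm D=\tfrac12(\bm M^+-\bm M^-)$. Since $\bm G$ is symmetric, $\bm M^\pm$, $\bm S$ and $\bm D$ are symmetric and pairwise commute. Given $\bm p^B$, Proposition \ref{prop:consumption} gives platform $A$'s profit as
\[
\Pi^A=\langle \bm p^A-\bm c,\ \bm S(\bm a-\bm p^A)+\bm D(\bm a-\bm p^B)\rangle .
\]
This is a quadratic in $\bm p^A$ with Hessian $-2\bm S$. I will show that $\bm S$ is positive definite. Diagonalize $\bm G$ with real eigenvalues $\lambda$, $|\lambda|\le\rho(\bm G)$. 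Under Assumption \ref{ass:stability}, the eigenvalues $1/[(1\pm\beta)-\delta(1\pm\theta)\lambda]$ of $\bm M^\pm$ are positive. Hence $\bm M^\pm\succ0$ and $\bm S\succ0$, so $\Pi^A$ is strictly concave. The first-order condition $\bm x^A-\bm S(\bm p^A-\bm c)=\bm 0$ therefore characterizes the unique best response, and the same holds for platform $B$.

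\textbf{Step 2: symmetric equilibrium.} Set $\bm p^A=\bm p^B=\bm p$, so that $\bm x=\bm M^+(\bm a-\bm p)$. The first-order condition becomes
\[
(\bm M^++\bm S)(\bm p-\bm c)=\bm M^+(\bm a-\bm c).
\]
The matrix $\bm M^++\bm S$ is positive definite, so this system has exactly one solution. That solution is the unique symmetric equilibrium, and concavity makes it a mutual best response. As throughout, interiority of quantities is assumed.

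\textbf{Step 3: closed form.} Work on each eigenvalue $\lambda$ of $\bm G$. Write $m^\pm$ for the corresponding eigenvalues of $\bm M^\pm$. The multiplier mapping $\bm a-\bm c$ to $\bm p-\bm c$ is
\[
\frac{2m^+}{3m^++m^-}=\frac{(1-\beta)-\delta(1-\theta)\lambda}{(2-\beta)-\delta(2-\theta)\lambda}.
\]
To get the second expression, divide numerator and denominator by $m^+m^-$ and simplify. A partial-fraction step then splits it as
\[
\frac{1-\theta}{2-\theta}+\frac{\theta-\beta}{(2-\theta)(2-\beta)}\cdot\frac{1}{1-\alpha\lambda},
\qquad \alpha=\frac{\delta(2-\theta)}{2-\beta}.
\]
The check is that $(1-\beta)(2-\theta)-(1-\theta)(2-\beta)=\theta-\beta$. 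Reassembling over the eigenbasis gives \eqref{eq:price_kb}, with $(\bm I-\alpha\bm G)^{-1}$ well defined by Lemma \ref{lem:stability_premium_region}. Subtracting and adding $\tfrac{\theta-\beta}{(2-\beta)(2-\theta)}(\bm a-\bm c)$ then yields \eqref{eq:price_decomposition}. For this, the scalar identity to verify is
\[
\frac{1-\theta}{2-\theta}+\frac{\theta-\beta}{(2-\beta)(2-\theta)}=\frac{1}{2}-\frac{\beta}{2(2-\beta)} .
\]

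The main obstacle is the algebra in Step 3, specifically keeping the matrix manipulation legitimate. Working in the common eigenbasis of $\bm G$, and relying on commutativity, reduces everything to scalar rational identities. The denominators $(1\pm\beta)-\delta(1\pm\theta)\lambda$ and $(2-\beta)-\delta(2-\theta)\lambda$ must be shown nonzero, and this follows from Assumption \ref{ass:stability} and Lemma \ref{lem:stability_premium_region}.
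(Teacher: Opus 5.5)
Your proposal is correct and follows essentially the same route as the paper. Both derive the same first-order condition and the same symmetric system $(3\bm M^+ + \bm M^-)\bm p = 2\bm M^+\bm a + (\bm M^+ + \bm M^-)\bm c$, reduce it to $\bm p^* - \bm c = \bm K\bm V^{-1}(\bm a-\bm c)$ with $\bm K=(1-\beta)\bm I-\delta(1-\theta)\bm G$ and $\bm V=(2-\beta)\bm I-\delta(2-\theta)\bm G$, and use the split $\bm K = \frac{1-\beta}{2-\beta}\bm V + \frac{\delta(\theta-\beta)}{2-\beta}\bm G$, which your eigenvalue-wise partial fraction reproduces in the spectral basis of $\bm G$.

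If anything, your version is more complete in two respects. First, the positive definiteness of $\bm S$ and of $3\bm M^+ + \bm M^-$ gives the strict concavity and uniqueness that the paper only asserts. Second, your scalar partial fraction holds at $\delta=0$ as well, whereas the paper divides by $\alpha$ (hence by $\delta$) when it applies $\bm G(\bm I-\alpha\bm G)^{-1}=\frac{1}{\alpha}[(\bm I-\alpha\bm G)^{-1}-\bm I]$.
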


Equation \eqref{eq:price_decomposition} decomposes equilibrium prices into three components. The first term,
\[
	\frac{\bm a+\bm c}{2},
\]
is the standard monopoly benchmark under linear demand. The second term,
\[
	-\frac{\beta}{2(2-\beta)}(\bm a-\bm c),
\]
captures the direct effect of product substitutability. It is zero when \(\beta=0\), and becomes more negative as the two platforms become closer substitutes.

The third term is the network-position component. Its magnitude depends on weighted Katz-Bonacich centrality, while its sign is governed by
\[
	\theta-\beta.
\]
The subtraction of \(\bm a-\bm c\) isolates the part of Katz-Bonacich centrality generated by network interactions. In particular, when \(\delta=0\), we have
\[
	\bm b\left(
		\bm G,
		\frac{\delta(2-\theta)}{2-\beta},
		\bm a-\bm c
	\right)
	=
	\bm a-\bm c,
\]
so the network-position component vanishes.
This term captures the central mechanism of the paper. Product substitution creates a business-stealing incentive: each platform wants to attract influential consumers because doing so helps it win their neighbors. This force pushes firms to offer discounts to central consumers. Interoperability creates an opposing force: when cross-platform interactions are strong, the value generated by central consumers is no longer exclusive to one platform. Central consumers become high-value access points to a broader cross-platform network, which weakens the incentive to subsidize them and may generate centrality premia.

When \(\theta=0\), there is no cross-platform network externality, and the model reduces to the competitive pricing environment of \citet{ChenZenouZhou2018RAND} with product-specific local network effects. When \(\beta=\theta\), the product-substitution effect and the interoperability effect exactly offset each other in the network-position component. In that case, equilibrium prices depend on consumer fundamentals \(\bm a-\bm c\), but not on network position through Katz-Bonacich centrality. When \(\theta>\beta\), the sign of the centrality component is reversed relative to the standard centrality-discount result.

The following corollary states this implication in the cleanest case, where consumers differ only in their network positions.

\begin{corollary}[Centrality discount and centrality premium]
\label{cor:centrality_reversal}
Suppose Assumptions \ref{ass:stability} and \ref{ass:full_symmetry} hold. Let
\[
	\alpha=\frac{\delta(2-\theta)}{2-\beta},
	\qquad
	\bm b(\bm G,\alpha,\bm 1)=(b_1,\ldots,b_n)^\top.
\]
Then for each consumer \(i\),
\begin{align}
p_i^*
=
\frac{a+c}{2}
-
\frac{\beta}{2(2-\beta)}(a-c)
+
\frac{\theta-\beta}{(2-\beta)(2-\theta)}(a-c)(b_i-1).
\label{eq:individual_price}
\end{align}
Consequently, for any two consumers \(i\) and \(j\),
\[
	p_i^*-p_j^*
	=
	\frac{\theta-\beta}{(2-\beta)(2-\theta)}(a-c)(b_i-b_j).
\]
Hence, if \(\theta<\beta\), more central consumers pay lower prices; if \(\theta>\beta\), more central consumers pay higher prices; and if \(\theta=\beta\), network centrality has no effect on equilibrium prices.
\end{corollary}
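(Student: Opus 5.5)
The corollary follows by specializing Proposition~\ref{prop:pricing} to Assumption~\ref{ass:full_symmetry}, which is a special case of Assumption~\ref{ass:platform_symmetry}. That proposition therefore applies and delivers the unique symmetric equilibrium price vector in the form \eqref{eq:price_decomposition}.

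Under full symmetry, $\bm a=a\bm 1$ and $\bm c=c\bm 1$, so $\bm a-\bm c=(a-c)\bm 1$. The Katz--Bonacich operator $\bm y\mapsto(\bm I-\alpha\bm G)^{-1}\bm y$ is linear, and it is well defined because $\alpha\rho(\bm G)<1$ by Lemma~\ref{lem:stability_premium_region}. Hence
\[
\bm b\bigl(\bm G,\alpha,(a-c)\bm 1\bigr)=(a-c)\,\bm b(\bm G,\alpha,\bm 1).
\]
Substituting this into \eqref{eq:price_decomposition} and reading off the $i$-th coordinate gives \eqref{eq:individual_price} directly. In particular, the first two terms are common to all consumers.

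Subtracting \eqref{eq:individual_price} for consumers $i$ and $j$ cancels the common terms and yields
\[
p_i^*-p_j^*=\frac{\theta-\beta}{(2-\beta)(2-\theta)}(a-c)(b_i-b_j).
\]
For the sign statements, note the following:
\begin{itemize}
\item $2-\beta>0$ because $\beta<1$;
\item $2-\theta\ge 1>0$ because $\theta\le 1$;
\item $a-c>0$ by the standing assumption $a_i^k>c_i^k$.
\end{itemize}
The coefficient of $b_i-b_j$ therefore has the sign of $\theta-\beta$.

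If $\theta<\beta$, a larger $b_i$ (more central) means a lower price. If $\theta>\beta$, it means a higher price. If $\theta=\beta$, the coefficient vanishes and prices are identical across network positions.

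The only point needing care is what "more central" means: we rank consumers by the Katz--Bonacich centrality $b_i$ computed with the discount factor $\alpha$ itself. One could also note that $b_i\ge 1$, since $(\bm I-\alpha\bm G)^{-1}=\sum_k\alpha^k\bm G^k$ has nonnegative entries and identity leading term. This bound is not needed for the comparison, so no real obstacle arises; the argument is a direct specialization of Proposition~\ref{prop:pricing}.
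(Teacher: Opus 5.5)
Your proposal is correct. It is the natural route: specialize Proposition \ref{prop:pricing} using $\bm a-\bm c=(a-c)\bm 1$, pull the scalar out of the linear map $\bm b(\bm G,\alpha,\cdot)$, and read off the sign of the coefficient $\frac{\theta-\beta}{(2-\beta)(2-\theta)}(a-c)$. The paper's appendix leaves this proof as ``To be completed,'' so your argument supplies exactly the specialization the main text has in mind.
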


Corollary \ref{cor:centrality_reversal} is the main pricing result. In closed or weakly interoperable platforms, the business-stealing force dominates, and central consumers receive influence-based discounts. Under sufficiently strong interoperability, the cross-platform access-value force dominates, and the same central consumers may be charged premia. Thus, interoperability can transform central consumers from subsidized influencers into premium consumers.

We finally record a useful benchmark for regular networks. This case shuts down centrality heterogeneity and isolates the effect of interoperability on the common equilibrium price.

\begin{corollary}[Regular networks]
\label{cor:regular_price}
Suppose Assumptions \ref{ass:stability} and \ref{ass:full_symmetry} hold, and suppose that \(\bm G\) is \(d\)-regular. Then all consumers face the same equilibrium price:
\begin{align}
	p^*
	=
	c+
	\frac{(1-\beta)-\delta d(1-\theta)}
	{(2-\beta)-\delta d(2-\theta)}
	(a-c).
\label{eq:regular_price}
\end{align}
Moreover,
\begin{align}
	\frac{\partial p^*}{\partial\theta}
	=
	\frac{\delta d(1-\delta d)}
	{\left[(2-\beta)-\delta d(2-\theta)\right]^2}
	(a-c)
	>
	0.
\label{eq:regular_price_derivative}
\end{align}
\end{corollary}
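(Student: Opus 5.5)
The plan is to specialize the Katz--Bonacich form \eqref{eq:price_kb} of Proposition \ref{prop:pricing} to the regular case, simplify algebraically, and then differentiate in $\theta$.

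\textbf{Centrality vector.} Under Assumption \ref{ass:full_symmetry}, $\bm a-\bm c=(a-c)\bm 1$. If $\bm G$ is $d$-regular, then $\bm G\bm 1=d\bm 1$, so $\bm 1$ is an eigenvector. Since $\bm G$ is nonnegative and symmetric, its spectral radius is $\rho(\bm G)=d$. With $\alpha=\delta(2-\theta)/(2-\beta)$, Lemma \ref{lem:stability_premium_region} gives $\alpha d<1$. Hence
\[
\bm b\bigl(\bm G,\alpha,(a-c)\bm 1\bigr)=\frac{a-c}{1-\alpha d}\,\bm 1=\frac{(2-\beta)(a-c)}{D}\,\bm 1,
\qquad D:=(2-\beta)-\delta d(2-\theta)>0 .
\]
The price vector is therefore a constant multiple of $\bm 1$, so all consumers face the same price.

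\textbf{Price formula.} Substituting into \eqref{eq:price_kb} gives
\[
p^*=c+(a-c)\left[\frac{1-\theta}{2-\theta}+\frac{\theta-\beta}{(2-\theta)D}\right]
=c+(a-c)\,\frac{(1-\theta)D+\theta-\beta}{(2-\theta)D}.
\]
Expanding the numerator, we use $(1-\theta)(2-\beta)+\theta-\beta=2-2\beta-\theta+\theta\beta=(2-\theta)(1-\beta)$. The remaining numerator term is $-\delta d(1-\theta)(2-\theta)$. The numerator therefore equals $(2-\theta)\bigl[(1-\beta)-\delta d(1-\theta)\bigr]$. Cancelling $2-\theta$ yields \eqref{eq:regular_price}.

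\textbf{Derivative.} Write $p^*=c+(a-c)N/D$ with $N=(1-\beta)-\delta d(1-\theta)$. Then $\partial N/\partial\theta=\partial D/\partial\theta=\delta d$, so
\[
\frac{\partial p^*}{\partial\theta}=(a-c)\,\frac{\delta d\,(D-N)}{D^2}.
\]
Moreover, $D-N=1-\delta d(2-\theta)+\delta d(1-\theta)=1-\delta d$, which is the expression in \eqref{eq:regular_price_derivative}.

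\textbf{Sign.} I expect this to be the only step that needs care. Adding the two inequalities of Assumption \ref{ass:stability} with $\rho(\bm G)=d$ gives $2\delta d<2$, so $\delta d<1$. Together with $a>c$ and $D^2>0$, the derivative is nonnegative. It is strictly positive exactly when $\delta d>0$, i.e., nontrivial network effects on a nonempty network. When $\delta d=0$ the derivative is zero, so the strict inequality implicitly presumes $\delta>0$ and $d>0$.
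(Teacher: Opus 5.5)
Your proof is correct and follows essentially the paper's route. The paper specializes the intermediate form $\bm p^*=\bm c+\bm K\bm V^{-1}(\bm a-\bm c)$ rather than the Katz--Bonacich form. Since $\bm K\bm 1=[(1-\beta)-\delta d(1-\theta)]\bm 1$ and $\bm V^{-1}\bm 1=\bm 1/[(2-\beta)-\delta d(2-\theta)]$, this gives \eqref{eq:regular_price} in one line without your cancellation of $2-\theta$; the paper obtains the derivative only later, in the proof of Proposition \ref{prop:price_theta_design}, by specializing a matrix derivative. Your caveat is also right: the strict inequality in \eqref{eq:regular_price_derivative} needs $\delta d>0$, which the paper's appeal to Assumption \ref{ass:stability} silently presumes.
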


Corollary \ref{cor:regular_price} shows that, when all consumers have the same network position, interoperability raises the equilibrium price. The reason is that higher interoperability softens the competition for exclusive network participation. Since the regular network eliminates centrality-based price discrimination, this result should be interpreted as an aggregate pricing benchmark rather than as a statement about which consumers gain or lose from interoperability.

\section{Welfare and Distributional Effects}

This section studies how interoperability affects consumption, consumer surplus, platform profits, and total welfare. Throughout this section, we maintain Assumptions \ref{ass:stability} and \ref{ass:platform_symmetry}, so that the equilibrium is symmetric across platforms:
\[
	\bm p^{A*}=\bm p^{B*}=\bm p^*,
	\qquad
	\bm x^{A*}=\bm x^{B*}=\bm x^*.
\]
Let
\[
	\bm r=\bm a-\bm c,
\]
and define
\[
	\bm D=(1+\beta)\bm I-\delta(1+\theta)\bm G,
	\qquad
	\bm K=(1-\beta)\bm I-\delta(1-\theta)\bm G,
\]
\[
	\bm V=(2-\beta)\bm I-\delta(2-\theta)\bm G.
\]
Under Assumption \ref{ass:stability}, these matrices are nonsingular. Since they are all affine functions of \(\bm G\), they commute with each other.

The equilibrium consumption vector can be written as
\[
	\bm x^*
	=
	(\bm I-\delta\bm G)\bm D^{-1}\bm V^{-1}\bm r.
\]
It is useful to define the following operators:
\[
	\bm\Phi^{X}
	=
	(\bm I-\delta\bm G)\bm D^{-1}\bm V^{-1},
\]
\[
	\bm\Phi^{CS}
	=
	(1+\beta)(\bm\Phi^{X})^2,
\]
and
\[
	\bm\Phi^{\Pi}
	=
	(\bm I-\delta\bm G)\bm D^{-1}\bm K\bm V^{-2}.
\]

\begin{proposition}[Equilibrium welfare objects]
\label{prop:welfare_objects}
Suppose Assumptions \ref{ass:stability} and \ref{ass:platform_symmetry} hold. In the symmetric equilibrium,
\[
	\bm x^*=\bm\Phi^X\bm r.
\]
Moreover, consumer surplus, each platform's profit, and total welfare are given by
\begin{align}
	CS^*
	&=
	\left\langle
		\bm r,\bm\Phi^{CS}\bm r
	\right\rangle,
\label{eq:cs_operator}
\\
	\Pi^*
	&=
	\left\langle
		\bm r,\bm\Phi^{\Pi}\bm r
	\right\rangle,
\label{eq:profit_operator}
\\
	TW^*
	&=
	CS^*+2\Pi^*
	=
	\left\langle
		\bm r,
		(\bm\Phi^{CS}+2\bm\Phi^{\Pi})\bm r
	\right\rangle.
\label{eq:tw_operator}
\end{align}

At the individual level, consumer \(i\)'s equilibrium surplus is
\begin{align}
	CS_i^*
	=
	(1+\beta)(x_i^*)^2 .
\label{eq:individual_cs}
\end{align}
\end{proposition}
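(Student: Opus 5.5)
The plan is to take the closed-form symmetric equilibrium from Proposition~\ref{prop:pricing} and Proposition~\ref{prop:consumption}, and reduce everything to products of the commuting matrices $\bm D,\bm K,\bm V$ and $\bm I-\delta\bm G$. Two facts do most of the work. First, all of these are polynomials in $\bm G$, so they commute. Second, $\bm G$ is symmetric, so each matrix and each inverse is symmetric. Under Assumption~\ref{ass:platform_symmetry}, Proposition~\ref{prop:consumption} with $\bm p^A=\bm p^B=\bm p^*$ gives $\bm x^*=\bm M^+(\bm a-\bm p^*)=\bm D^{-1}(\bm a-\bm p^*)$. It therefore suffices to compute the two vectors $\bm a-\bm p^*$ and $\bm p^*-\bm c$ in operator form.

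\textbf{Step 1: the margins.} I will start from \eqref{eq:price_kb}. With $\alpha=\delta(2-\theta)/(2-\beta)$ we have $(\bm I-\alpha\bm G)^{-1}=(2-\beta)\bm V^{-1}$, so that
\[
\bm a-\bm p^*=\frac{1}{2-\theta}\Bigl[\bm V-(\theta-\beta)\bm I\Bigr]\bm V^{-1}\bm r .
\]
The bracket simplifies to $\bm V-(\theta-\beta)\bm I=(2-\theta)\bm I-\delta(2-\theta)\bm G=(2-\theta)(\bm I-\delta\bm G)$. Hence $\bm a-\bm p^*=(\bm I-\delta\bm G)\bm V^{-1}\bm r$, and therefore $\bm x^*=(\bm I-\delta\bm G)\bm D^{-1}\bm V^{-1}\bm r=\bm\Phi^X\bm r$.

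Subtracting from $\bm r$ gives
\[
\bm p^*-\bm c=\bigl[\bm V-(\bm I-\delta\bm G)\bigr]\bm V^{-1}\bm r=\bm K\bm V^{-1}\bm r .
\]

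\textbf{Step 2: profit.} Each platform earns $\Pi^*=\langle \bm K\bm V^{-1}\bm r,\ (\bm I-\delta\bm G)\bm D^{-1}\bm V^{-1}\bm r\rangle$. I will move $\bm K\bm V^{-1}$ to the other side using symmetry and then reorder the factors by commutativity. This yields $\langle \bm r,(\bm I-\delta\bm G)\bm D^{-1}\bm K\bm V^{-2}\bm r\rangle=\langle\bm r,\bm\Phi^\Pi\bm r\rangle$.

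\textbf{Step 3: consumer surplus.} This is the step needing care, because surplus must be the full equilibrium utility \eqref{eq:utility}, including the network terms. With $x_i^A=x_i^B=x_i$ and $p_i^A=p_i^B=p_i$, utility becomes
\[
u_i=2(a_i-p_i)x_i-(1+\beta)x_i^2+2\delta(1+\theta)x_i(\bm G\bm x)_i .
\]
Consumer $i$'s interior first-order condition in the symmetric profile reads
\[
(a_i-p_i)+\delta(1+\theta)(\bm G\bm x)_i=(1+\beta)x_i ,
\]
which is the row form of $\bm D\bm x=\bm a-\bm p$. Substituting it gives $u_i=2(1+\beta)x_i^2-(1+\beta)x_i^2=(1+\beta)x_i^2$, which is \eqref{eq:individual_cs}.

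Summing over consumers gives $CS^*=(1+\beta)\langle\bm\Phi^X\bm r,\bm\Phi^X\bm r\rangle$. Since $\bm\Phi^X$ is a product of commuting symmetric matrices, it is itself symmetric, so this equals $\langle\bm r,(1+\beta)(\bm\Phi^X)^2\bm r\rangle$. Finally, $TW^*=CS^*+2\Pi^*$ follows by adding the two platforms' identical profits.

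\textbf{Main obstacle.} I expect the only delicate points to be two. The first is confirming that the simplification $\bm V-(\theta-\beta)\bm I=(2-\theta)(\bm I-\delta\bm G)$ is exact. The second is justifying the symmetry and commutation used to transpose operators inside the inner products. Both rest on $g_{ij}=g_{ji}$ and on invertibility from Assumption~\ref{ass:stability} and Lemma~\ref{lem:stability_premium_region}.
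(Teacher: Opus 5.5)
Your proposal is correct and follows essentially the same route as the paper. Both arguments write $\bm p^*-\bm c=\bm K\bm V^{-1}\bm r$ and use $\bm V-\bm K=\bm I-\delta\bm G$ to get $\bm x^*=\bm D^{-1}(\bm I-\delta\bm G)\bm V^{-1}\bm r$. They then substitute the symmetric first-order condition $\bm D\bm x^*=\bm a-\bm p^*$ into utility to obtain $(1+\beta)$ times squared consumption, and use symmetry and commutativity to rearrange the profit inner product. The differences are minor. You recover $\bm K\bm V^{-1}\bm r$ from the Katz--Bonacich form \eqref{eq:price_kb} rather than from the intermediate expression in the paper's pricing proof. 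You also carry out the utility substitution consumer by consumer, which proves \eqref{eq:individual_cs} explicitly; the paper only does the aggregate version.
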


Proposition \ref{prop:welfare_objects} shows that the welfare effects of interoperability operate through equilibrium consumption and markups. Equation \eqref{eq:individual_cs} makes clear that the distribution of consumer surplus is tied to the distribution of equilibrium consumption across network positions. In general networks, the effect of interoperability on \(x_i^*\), \(CS_i^*\), and \(\Pi^*\) is ambiguous because interoperability both creates additional cross-platform network value and changes the intensity of price competition.

To obtain sharper results, we first consider regular networks. This benchmark shuts down centrality heterogeneity and isolates the aggregate effect of interoperability.

\begin{proposition}[Welfare objects in regular networks]
\label{prop:regular_welfare}
Suppose Assumptions \ref{ass:stability} and \ref{ass:full_symmetry} hold, and suppose that \(\bm G\) is \(d\)-regular. Then
\begin{align}
	\bm x^*
	=
	\frac{(a-c)(1-\delta d)}
	{\left[(1+\beta)-\delta d(1+\theta)\right]
	 \left[(2-\beta)-\delta d(2-\theta)\right]}
	\bm 1.
\label{eq:regular_consumption}
\end{align}
Consumer surplus and each platform's profit are
\begin{align}
	CS^*
	&=
	n(1+\beta)
	\left[
	\frac{(a-c)(1-\delta d)}
	{\left[(1+\beta)-\delta d(1+\theta)\right]
	 \left[(2-\beta)-\delta d(2-\theta)\right]}
	\right]^2,
\label{eq:regular_cs}
\\
	\Pi^*
	&=
	\frac{
	n(a-c)^2(1-\delta d)
	\left[(1-\beta)-\delta d(1-\theta)\right]
	}
	{
	\left[(1+\beta)-\delta d(1+\theta)\right]
	\left[(2-\beta)-\delta d(2-\theta)\right]^2
	}.
\label{eq:regular_profit}
\end{align}
\end{proposition}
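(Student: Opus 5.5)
The plan is to specialize the operator expressions of Proposition \ref{prop:welfare_objects} to the regular case. The key observation is that when $\bm G$ is $d$-regular, $\bm G\bm 1=d\bm 1$. Under Assumption \ref{ass:full_symmetry} we have $\bm r=(a-c)\bm 1$, so $\bm 1$ is an eigenvector of every polynomial (and inverse of a polynomial) in $\bm G$. Its eigenvalue is obtained by replacing $\bm G$ with $d$.

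Concretely, I will show that $\bm D^{-1}\bm 1=\bm 1/[(1+\beta)-\delta d(1+\theta)]$, and similarly for $\bm V^{-1}$ and $\bm K$. Nonsingularity follows from Assumption \ref{ass:stability} together with Lemma \ref{lem:stability_premium_region}: since $\rho(\bm G)=d$ for a $d$-regular nonnegative symmetric matrix, the scalars $(1+\beta)-\delta d(1+\theta)$, $(1-\beta)-\delta d(1-\theta)$ and $(2-\beta)-\delta d(2-\theta)$ are all positive. Applying $\bm\Phi^X=(\bm I-\delta\bm G)\bm D^{-1}\bm V^{-1}$ to $(a-c)\bm 1$ then gives \eqref{eq:regular_consumption} immediately.

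For consumer surplus, there are two consistent routes. One is to use \eqref{eq:individual_cs} and sum over $i$: $CS^*=\sum_i(1+\beta)(x_i^*)^2$, which is $n(1+\beta)$ times the squared common consumption level and yields \eqref{eq:regular_cs}. The other is to evaluate $\langle \bm r,(1+\beta)(\bm\Phi^X)^2\bm r\rangle$ with $\langle\bm 1,\bm 1\rangle=n$. I will use the first and remark that it agrees with the second. For profit, I will evaluate $\Pi^*=\langle\bm r,\bm\Phi^{\Pi}\bm r\rangle$. Acting on $\bm 1$, the operator $\bm\Phi^{\Pi}$ has eigenvalue $(1-\delta d)[(1-\beta)-\delta d(1-\theta)]/\{[(1+\beta)-\delta d(1+\theta)][(2-\beta)-\delta d(2-\theta)]^2\}$. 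Multiplying by $(a-c)^2 n$ gives \eqref{eq:regular_profit}.

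As a consistency check, I will also verify the profit directly as $n(p^*-c)x^*$, using Corollary \ref{cor:regular_price}. There the markup is $[(1-\beta)-\delta d(1-\theta)](a-c)/[(2-\beta)-\delta d(2-\theta)]$, and multiplying by $x^*$ reproduces \eqref{eq:regular_profit}.

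There is no real obstacle here; the computation is routine once the eigenvector reduction is made. The only point needing care is justifying that the operator identities of Proposition \ref{prop:welfare_objects} may be applied eigenvalue-wise. This holds because all the matrices involved are rational functions of $\bm G$ with denominators nonvanishing at $d$.
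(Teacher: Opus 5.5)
Your proposal is correct and follows the paper's proof: both use $\bm G\bm 1=d\bm 1$ to reduce $\bm\Phi^X$ and $\bm\Phi^{\Pi}$, applied to $(a-c)\bm 1$, to scalar multiples of $\bm 1$, and both obtain $CS^*=(1+\beta)(\bm x^*)^\top\bm x^*$ and $\Pi^*=(a-c)^2\bm 1^\top\bm\Phi^{\Pi}\bm 1$. Your justification of positivity via $\rho(\bm G)=d$ and Lemma \ref{lem:stability_premium_region}, and your cross-check of the profit through Corollary \ref{cor:regular_price}, are sound additions that the paper leaves implicit.
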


The regular-network benchmark highlights a basic trade-off. Interoperability increases the value of cross-platform interactions, but it also softens competition by reducing the exclusivity of network benefits. These two forces need not affect consumers and platforms in the same direction.

Let
\[
	s=1-\delta d,
	\qquad
	m^+=(1+\beta)-\delta d(1+\theta),
\]
\[
	m^-=(1-\beta)-\delta d(1-\theta),
	\qquad
	v=(2-\beta)-\delta d(2-\theta).
\]
Under Assumption \ref{ass:stability}, these terms are positive.

\begin{proposition}[Interoperability, consumers, and platforms]
\label{prop:cs_profit_comparative}
Suppose Assumptions \ref{ass:stability} and \ref{ass:full_symmetry} hold, and suppose that \(\bm G\) is \(d\)-regular with \(\delta d>0\). Then
\[
	\operatorname{sign}
	\left\{
		\frac{\partial x^*}{\partial\theta}
	\right\}
	=
	\operatorname{sign}
	\left\{
		\frac{\partial CS^*}{\partial\theta}
	\right\}
	=
	\operatorname{sign}
	\left\{
		1-2\beta-\delta d+2\delta d\theta
	\right\}.
\]
Moreover,
\[
	\frac{\partial \Pi^*}{\partial\theta}>0.
\]
\end{proposition}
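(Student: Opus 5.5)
The plan is to differentiate the closed forms of Proposition \ref{prop:regular_welfare} logarithmically in \(\theta\), using the shorthand \(s,m^+,m^-,v\). First I record the facts needed about these terms. For a \(d\)-regular graph \(\rho(\bm G)=d\). Adding the two inequalities in Assumption \ref{ass:stability} then gives \(2\delta d<2\), so \(s=1-\delta d>0\). Together with \(m^\pm>0\) and \(v>0\), this makes \(x^*\), \(CS^*\) and \(\Pi^*\) strictly positive, so signs of derivatives equal signs of logarithmic derivatives. Note that \(s\) does not depend on \(\theta\). The \(\theta\)-derivatives of the other terms are
\[
\partial_\theta m^+=-\delta d,\qquad \partial_\theta m^-=\delta d,\qquad \partial_\theta v=\delta d .
\]

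For consumption, \(x^*=(a-c)s/(m^+v)\), so
\[
\frac{\partial\log x^*}{\partial\theta}=\frac{\delta d}{m^+}-\frac{\delta d}{v}=\frac{\delta d\,(v-m^+)}{m^+v}.
\]
A direct computation gives \(v-m^+=1-2\beta-\delta d+2\delta d\theta\). Since \(\delta d>0\) and \(m^+v>0\), the sign claim for \(x^*\) follows. By \eqref{eq:regular_cs}, \(CS^*=n(1+\beta)(x^*)^2\) with \(x^*>0\), so \(\partial CS^*/\partial\theta=2n(1+\beta)x^*\,\partial x^*/\partial\theta\) has the same sign.

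For profits, \eqref{eq:regular_profit} gives \(\Pi^*=n(a-c)^2 s\,m^-/(m^+v^2)\), so
\[
\frac{\partial\log\Pi^*}{\partial\theta}=\delta d\Big(\frac{1}{m^-}+\frac{1}{m^+}-\frac{2}{v}\Big).
\]
This step is the only nontrivial one. Positivity is equivalent to \(v(m^++m^-)>2m^+m^-\). To establish it, I use \(m^++m^-=2(1-\delta d)=2s\) and AM--GM, which give \(2m^+m^-\le (m^++m^-)^2/2=(m^++m^-)s\). It therefore suffices that \(v>s\). This holds because \(v-s=(1-\beta)-\delta d(1-\theta)=m^->0\). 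The inequality is strict, so \(\partial\Pi^*/\partial\theta>0\).
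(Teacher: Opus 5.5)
Your proof is correct and follows essentially the paper's route: logarithmic differentiation of the closed forms in Proposition \ref{prop:regular_welfare}, with $v-m^+$ governing consumption and consumer surplus, and the identities $m^++m^-=2s$ and $v=s+m^-$ driving the profit result. The only difference is the last step. You close the profit inequality with AM--GM and $v-s=m^->0$, whereas the paper rewrites $\frac{1}{m^-}+\frac{1}{m^+}-\frac{2}{v}$ as $\frac{2[s^2-m^-s+(m^-)^2]}{m^-m^+v}$ and completes the square, an exact expression it then reuses as the profit effect in Proposition \ref{prop:total_welfare}.
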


Proposition \ref{prop:cs_profit_comparative} shows that platforms benefit from greater interoperability in regular networks, while consumers may gain or lose. The consumer effect depends on
\[
	1-2\beta-\delta d+2\delta d\theta.
\]

The condition \(\delta d>0\) rules out the degenerate case in which interoperability has no bite. If \(\delta d=0\), equilibrium consumption and consumer surplus are independent of \(\theta\). If \(\delta d>0\), consumer surplus is strictly increasing in \(\theta\) on \([0,1]\) if
\[
	\beta\leq\frac{1-\delta d}{2},
\]
and strictly decreasing in \(\theta\) on \([0,1]\) if
\[
	\beta\geq\frac{1+\delta d}{2}.
\]
For the intermediate region,
\[
	\frac{1-\delta d}{2}
	<
	\beta
	<
	\frac{1+\delta d}{2},
\]
the sign changes at
\[
	\theta^{CS}
	=
	\frac{2\beta+\delta d-1}{2\delta d}.
\]
In this case, the consumer-surplus effect is negative when \(\theta<\theta^{CS}\) and positive when \(\theta>\theta^{CS}\). Hence, it is not generally correct to say that consumers always prefer either complete interoperability or no interoperability based only on whether \(\beta\) is below or above \(1/2\). The strength of network effects and the current level of interoperability also matter.

The monotonic increase in platform profit reflects a competition-softening mechanism. As interoperability increases, attracting a consumer becomes less valuable as an exclusive competitive asset, because part of the network value generated by that consumer is shared across platforms. This reduces the intensity of competition for exclusive network participation and allows platforms to sustain higher markups.

We next turn to total welfare. From \eqref{eq:tw_operator},
\[
	\frac{\partial TW^*}{\partial\theta}
	=
	\bm r^\top
	\left(
		\frac{\partial \bm\Phi^{CS}}{\partial\theta}
		+
		2\frac{\partial \bm\Phi^{\Pi}}{\partial\theta}
	\right)
	\bm r,
\]
whose sign is generally ambiguous. In regular networks, however, the derivative can be decomposed into a consumer-surplus effect and a profit effect.

\begin{proposition}[Interoperability and total welfare]
\label{prop:total_welfare}
Suppose Assumptions \ref{ass:stability} and \ref{ass:full_symmetry} hold, and suppose that \(\bm G\) is \(d\)-regular with \(\delta d>0\). Then
\begin{align}
\operatorname{sign}
\left\{
	\frac{\partial TW^*}{\partial\theta}
\right\}
=
\operatorname{sign}
\left\{
	\underbrace{s(1+\beta)(v-m^+)}_{\text{consumer-surplus effect}}
	+
	\underbrace{2m^+\left[s^2-m^-s+(m^-)^2\right]}_{\text{profit effect}}
\right\}.
\label{eq:tw_derivative_sign}
\end{align}
\end{proposition}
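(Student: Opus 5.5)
The plan is to differentiate the closed forms of Proposition~\ref{prop:regular_welfare} directly, working with logarithmic derivatives. I will write $k=\delta d>0$ and $r=a-c$, and use $s,m^+,m^-,v$ as defined above, all positive under Assumption~\ref{ass:stability}.

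The first step records how these four quantities depend on $\theta$. Since $s$ does not depend on $\theta$, we have $\partial s/\partial\theta=0$. The others give $\partial m^+/\partial\theta=-k$, $\partial m^-/\partial\theta=k$ and $\partial v/\partial\theta=k$. I will also use two identities that follow by direct substitution:
\[
	m^++m^-=2s,\qquad v=s+m^-.
\]

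The second step differentiates consumer surplus. From \eqref{eq:regular_cs}, $CS^*=n(1+\beta)r^2s^2/\{(m^+)^2v^2\}$. Taking logs,
\[
	\frac{\partial\log CS^*}{\partial\theta}=2\Bigl(\frac{k}{m^+}-\frac{k}{v}\Bigr).
\]
Hence
\[
	\frac{\partial CS^*}{\partial\theta}=\frac{2n(1+\beta)r^2ks^2(v-m^+)}{(m^+)^3v^3}.
\]

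The third step differentiates profit. From \eqref{eq:regular_profit}, $\log\Pi^*=\text{const}+\log m^--\log m^+-2\log v$. This gives
\[
	\frac{\partial\Pi^*}{\partial\theta}=\Pi^*k\Bigl(\frac1{m^-}+\frac1{m^+}-\frac2v\Bigr)=\frac{nr^2ks\,[(m^++m^-)v-2m^+m^-]}{(m^+)^2v^3}.
\]
Using $m^++m^-=2s$, the bracket equals $2(sv-m^+m^-)$.

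The fourth step adds the two pieces. Summing $\partial CS^*/\partial\theta+2\,\partial\Pi^*/\partial\theta$ and factoring out the strictly positive quantity $2nr^2ks/\{(m^+)^3v^3\}$ leaves
\[
	s(1+\beta)(v-m^+)+2m^+\bigl(sv-m^+m^-\bigr).
\]

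The only nonmechanical step is rewriting the profit bracket in the form stated in \eqref{eq:tw_derivative_sign}. First substitute $m^+=2s-m^-$, which gives $m^+m^-=2sm^--(m^-)^2$. Then substitute $v=s+m^-$, which gives $sv=s^2+sm^-$. Together these yield
\[
	sv-m^+m^-=s^2-m^-s+(m^-)^2.
\]
This is exactly the profit-effect term, so the sign claim follows.

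I would also remark that this term is positive, since $s^2-m^-s+(m^-)^2>0$, which is consistent with Proposition~\ref{prop:cs_profit_comparative}. The main obstacle is purely bookkeeping: spotting the identities $v=s+m^-$ and $m^++m^-=2s$ so that the profit term collapses to the stated quadratic form.
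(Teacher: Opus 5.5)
Your proof is correct and follows the same route as the paper. Like the paper, you log-differentiate the regular-network closed forms of Proposition~\ref{prop:regular_welfare}, use $m^++m^-=2s$ and $v=s+m^-$ to reduce the profit bracket to $s^2-m^-s+(m^-)^2$, and factor out the positive term $2n(a-c)^2\delta d\,s/\{(m^+)^3v^3\}$. Your intermediate profit derivative correctly has $(m^+)^2v^3$ in the denominator. The paper's displayed line writes $(m^+)^3v^3$ there, a typo that its final factorization (with the extra factor $m^+$ in the profit effect) implicitly corrects.
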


The first term in \eqref{eq:tw_derivative_sign} is proportional to the consumer-surplus effect. Since
\[
	v-m^+
	=
	1-2\beta-\delta d+2\delta d\theta,
\]
this term may be positive or negative. The second term is strictly positive, reflecting the positive effect of interoperability on platform profits. Therefore, total welfare increases with interoperability whenever the consumer-surplus effect is nonnegative. In particular,
\[
	\theta
	\geq
	\frac{2\beta-1+\delta d}{2\delta d}
	\quad
	\Longrightarrow
	\quad
	\frac{\partial TW^*}{\partial\theta}>0.
\]
When the consumer-surplus effect is negative, the welfare effect is ambiguous and depends on whether the profit gain is large enough to offset the reduction in consumer surplus.

This result highlights the central welfare trade-off created by interoperability. On the one hand, interoperability creates additional network value by allowing interactions across platform boundaries. On the other hand, it softens competition by reducing the value of exclusive network participation, thereby increasing equilibrium markups. Whether interoperability improves total welfare depends on the relative strength of these two forces.

Finally, although the regular-network case is useful for transparent aggregate comparative statics, it eliminates differences in network centrality. In nonregular networks, interoperability also has distributional effects across consumers. The pricing results in Section \ref{sec:equilibrium_pricing} show that the same central consumer may receive a discount when \(\theta<\beta\) but face a premium when \(\theta>\beta\). Since individual surplus is \(CS_i^*=(1+\beta)(x_i^*)^2\), the welfare consequences of interoperability depend not only on aggregate competition and network-value effects, but also on how prices and consumption change across network positions.

\medskip
\noindent\textbf{Distributional effects in nonregular networks.}
The preceding regular-network analysis shuts down heterogeneity in network position. We now use a local approximation to show how interoperability affects consumers with different degrees in a general network. This result directly addresses which consumers gain or lose from stronger interoperability.

Throughout this part, suppose Assumptions \ref{ass:stability} and \ref{ass:full_symmetry} hold. Recall that
\[
	\bm d=\bm G\bm 1,
	\qquad
	d_i=\sum_{j=1}^n g_{ij}.
\]
For sufficiently small \(\delta\), the discriminatory price satisfies
\begin{align}
p_i^*
=
c
+
(a-c)
\left[
	\frac{1-\beta}{2-\beta}
	+
	\frac{\delta(\theta-\beta)}{(2-\beta)^2}d_i
\right]
+
O(\delta^2),
\label{eq:local_price_degree}
\end{align}
and equilibrium consumption satisfies
\begin{align}
x_i^*
=
\frac{a-c}{(2-\beta)(1+\beta)}
\left[
	1
	+
	\delta
	\left(
		\frac{1+\theta}{1+\beta}
		-
		\frac{\theta-\beta}{2-\beta}
	\right)
	d_i
\right]
+
O(\delta^2).
\label{eq:local_consumption_degree}
\end{align}

\begin{proposition}[Local distributional effect of interoperability]
\label{prop:local_distributional_theta}
Suppose Assumptions \ref{ass:stability} and \ref{ass:full_symmetry} hold. For sufficiently small \(\delta\), the effect of interoperability on the discriminatory price and individual consumer surplus is given by
\begin{align}
\frac{\partial p_i^*}{\partial\theta}
&=
\frac{\delta(a-c)}{(2-\beta)^2}d_i
+
O(\delta^2),
\label{eq:dpidtheta_local}
\\
\frac{\partial CS_i^*}{\partial\theta}
&=
\frac{
	2\delta(a-c)^2(1-2\beta)
}{
	(2-\beta)^3(1+\beta)^2
}
d_i
+
O(\delta^2).
\label{eq:dcsidtheta_local}
\end{align}
Consequently, for any consumer with \(d_i>0\), stronger interoperability locally raises her price. Its effect on her consumer surplus is positive if \(\beta<1/2\), negative if \(\beta>1/2\), and zero to the first order if \(\beta=1/2\). The magnitude of both effects is increasing in the consumer's degree \(d_i\).
\end{proposition}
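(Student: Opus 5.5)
The plan is to differentiate the local expansions \eqref{eq:local_price_degree} and \eqref{eq:local_consumption_degree} with respect to \(\theta\). First, though, I will check that these expansions hold with remainders whose \(\theta\)-derivatives are also \(O(\delta^2)\). For this I return to the exact formulas. By Corollary \ref{cor:centrality_reversal},
\[
p_i^*=c+\tfrac{1-\theta}{2-\theta}(a-c)+\tfrac{\theta-\beta}{(2-\beta)(2-\theta)}(a-c)b_i,
\qquad
\bm b=(\bm I-\alpha\bm G)^{-1}\bm 1=\sum_{k\ge0}\alpha^k\bm G^k\bm 1,
\]
with \(\alpha=\delta(2-\theta)/(2-\beta)\). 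Similarly, \(\bm x^*=\bm\Phi^X\bm r\) comes from Proposition \ref{prop:welfare_objects}. All of these objects are real-analytic in \((\delta,\theta)\) on a neighborhood of \(\delta=0\) uniformly over \(\theta\in[0,1]\), because the Neumann series converge uniformly once \(\delta\rho(\bm G)\) is small. So each is a power series in \(\delta\) whose coefficients are smooth in \(\theta\), and the remainder after the first-order term is \(\delta^2\) times a function bounded together with its \(\theta\)-derivative. Hence term-by-term differentiation in \(\theta\) is legitimate.

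\textbf{Price.} Expanding \(b_i=1+\alpha d_i+O(\delta^2)\) gives
\[
\tfrac{1-\theta}{2-\theta}+\tfrac{\theta-\beta}{(2-\beta)(2-\theta)}=\tfrac{1-\beta}{2-\beta}
\]
as the zeroth-order coefficient, and \(\tfrac{\theta-\beta}{(2-\beta)(2-\theta)}\cdot\tfrac{\delta(2-\theta)}{2-\beta}d_i=\tfrac{\delta(\theta-\beta)}{(2-\beta)^2}d_i\) as the first-order term. This confirms \eqref{eq:local_price_degree}. Differentiating in \(\theta\) then gives \eqref{eq:dpidtheta_local}, because the zeroth-order term is independent of \(\theta\). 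The sign claim follows since \(a>c\), and so does the monotonicity in \(d_i\).

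\textbf{Consumption.} I expand \(\bm\Phi^X=(\bm I-\delta\bm G)\bm D^{-1}\bm V^{-1}\) to first order:
\[
\bm D^{-1}=\tfrac{1}{1+\beta}\bigl[\bm I+\tfrac{\delta(1+\theta)}{1+\beta}\bm G\bigr]+O(\delta^2),\qquad
\bm V^{-1}=\tfrac{1}{2-\beta}\bigl[\bm I+\tfrac{\delta(2-\theta)}{2-\beta}\bm G\bigr]+O(\delta^2).
\]
Collecting the coefficient of \(\bm G\bm 1=\bm d\) gives
\[
-1+\tfrac{1+\theta}{1+\beta}+\tfrac{2-\theta}{2-\beta}=\tfrac{1+\theta}{1+\beta}-\tfrac{\theta-\beta}{2-\beta},
\]
which matches \eqref{eq:local_consumption_degree}.

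\textbf{Consumer surplus.} From \(CS_i^*=(1+\beta)(x_i^*)^2\) in \eqref{eq:individual_cs},
\[
\partial_\theta CS_i^*=2(1+\beta)x_i^*\,\partial_\theta x_i^*.
\]
Since \(\partial_\theta x_i^*\) is already \(O(\delta)\), only the zeroth-order value \(x_i^*=\tfrac{a-c}{(2-\beta)(1+\beta)}\) matters. The derivative of the bracket in \eqref{eq:local_consumption_degree} is
\[
\delta d_i\Bigl(\tfrac{1}{1+\beta}-\tfrac{1}{2-\beta}\Bigr)=\delta d_i\tfrac{1-2\beta}{(1+\beta)(2-\beta)}.
\]
Multiplying the pieces gives
\[
2(1+\beta)\Bigl(\tfrac{a-c}{(2-\beta)(1+\beta)}\Bigr)^2\tfrac{(1-2\beta)\delta d_i}{(1+\beta)(2-\beta)},
\]
which is exactly \eqref{eq:dcsidtheta_local}. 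The sign cases in \(\beta\) and the proportionality to \(d_i\) are then immediate.

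The main obstacle is not the algebra but justifying that the \(O(\delta^2)\) remainders stay \(O(\delta^2)\) after differentiation in \(\theta\). I handle this with the uniform analyticity argument above. Explicitly, the remainder in \(\bm b\) is \(\sum_{k\ge2}\alpha^k\bm G^k\bm 1\), and its \(\theta\)-derivative is bounded by \(C\delta^2\sum_k k(\delta\rho c')^{k-2}\). That bound is finite for \(\delta\) small, and the same reasoning applies to \(\bm D^{-1}\) and \(\bm V^{-1}\). A final remark will note that the signs are interpreted as holding for \(\delta\) small enough that the first-order term dominates whenever it is nonzero, i.e.\ when \(d_i>0\) and \(\beta\neq 1/2\).
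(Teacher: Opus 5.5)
Your proposal is correct and takes essentially the paper's route: the paper reads the result off the first-order expansions \eqref{eq:local_price_degree} and \eqref{eq:local_consumption_degree}, which come from the proof of Proposition \ref{prop:platform_network}. For the price part, in Proposition \ref{prop:price_theta_design}, the paper first differentiates the exact formula \(\bm p^*-\bm c=\bm K_\theta\bm V_\theta^{-1}(a-c)\bm 1\) to get \((a-c)\delta\bm G(\bm I-\delta\bm G)\bm V_\theta^{-2}\bm 1\) and only then expands, which sidesteps the remainder question; your uniform-analyticity argument for differentiating the \(O(\delta^2)\) remainders in \(\theta\) supplies a justification the paper leaves implicit, and all your coefficients check out.
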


Proposition \ref{prop:local_distributional_theta} separates the price effect from the welfare effect. Stronger interoperability raises the price charged to more connected consumers because it increases the access value of their network position. However, a higher price does not necessarily imply that these consumers are worse off. When products are sufficiently differentiated, \(\beta<1/2\), interoperability raises the network value of consumption enough that high-degree consumers may pay more and still enjoy higher surplus. When products are close substitutes, \(\beta>1/2\), the price increase dominates locally, and high-degree consumers are hurt more strongly.

Aggregating \eqref{eq:dcsidtheta_local} over consumers gives
\begin{align}
\frac{\partial CS^*}{\partial\theta}
=
\frac{
	2\delta(a-c)^2(1-2\beta)
}{
	(2-\beta)^3(1+\beta)^2
}
\sum_{i=1}^n d_i
+
O(\delta^2).
\label{eq:dcsdtheta_local_aggregate}
\end{align}
Thus, in the local approximation, the aggregate consumer-surplus effect and the individual consumer-surplus effects have the same sign, but the incidence is uneven: more connected consumers experience larger gains when \(\beta<1/2\) and larger losses when \(\beta>1/2\).

\begin{example}[Exact star-network example]
\label{ex:exact_star}
Consider an unweighted star network with one center \(C\) and \(m=n-1\) peripheral consumers. Let
\[
	\alpha=\frac{\delta(2-\theta)}{2-\beta},
	\qquad
	\Delta_\alpha(m)=1-\alpha^2m.
\]
By symmetry, all peripheral consumers have the same Katz-Bonacich centrality. Solving the two-type system gives
\[
	b_C=\frac{1+\alpha m}{\Delta_\alpha(m)},
	\qquad
	b_P=\frac{1+\alpha}{\Delta_\alpha(m)}.
\]
Hence the exact discriminatory prices are
\[
	p_C^*
	=
	c+\frac{1-\theta}{2-\theta}(a-c)
	+
	\frac{\theta-\beta}{(2-\beta)(2-\theta)}(a-c)b_C,
\]
and
\[
	p_P^*
	=
	c+\frac{1-\theta}{2-\theta}(a-c)
	+
	\frac{\theta-\beta}{(2-\beta)(2-\theta)}(a-c)b_P.
\]
Therefore,
\[
	p_C^*-p_P^*
	=
	\frac{\theta-\beta}{(2-\beta)(2-\theta)}(a-c)
	\frac{\alpha(m-1)}{1-\alpha^2m}.
\]
Thus, in the star network, the center receives a discount relative to peripheral consumers when \(\theta<\beta\), but pays a premium when \(\theta>\beta\).

The exact quantities can also be reduced to a two-type system. For any \(\lambda\) and \(\mu\), define
\[
	R_m(\lambda,\mu)
	=
	\begin{pmatrix}
		\lambda & -\mu m \\
		-\mu & \lambda
	\end{pmatrix}.
\]
Then
\[
	\begin{pmatrix}
		x_C^* \\
		x_P^*
	\end{pmatrix}
	=
	(a-c)
	R_m(1,\delta)
	R_m(1+\beta,\delta(1+\theta))^{-1}
	R_m(2-\beta,\delta(2-\theta))^{-1}
	\begin{pmatrix}
		1\\
		1
	\end{pmatrix}.
\]
Consequently,
\[
	CS_C^*=(1+\beta)(x_C^*)^2,
	\qquad
	CS_P^*=(1+\beta)(x_P^*)^2.
\]

Table \ref{tab:exact_star} reports exact equilibrium values for a star network with \(m=8\), \(a=3\), \(c=1\), \(\beta=0.5\), and \(\delta=0.2\). The case \(\theta=0.5\) corresponds to \(\theta=\beta\), where the center and peripheral consumers face the same price. When \(\theta=0.2<\beta\), the center receives a large centrality discount. When \(\theta=0.8>\beta\), the center pays a centrality premium. Comparing \(\theta=0.5\) and \(\theta=0.8\), the center pays a higher price but also consumes more and obtains higher surplus. Thus, a centrality premium need not imply that central consumers are worse off.

\begin{table}[H]
\centering
\caption{Exact Equilibrium Values in a Star Network}
\label{tab:exact_star}
\resizebox{\textwidth}{!}{
\begin{tabular}{lccccccccc}
\toprule
& \multicolumn{3}{c}{\(\theta=0.2\)} 
& \multicolumn{3}{c}{\(\theta=0.5\)} 
& \multicolumn{3}{c}{\(\theta=0.8\)} \\
\cmidrule(lr){2-4}\cmidrule(lr){5-7}\cmidrule(lr){8-10}
Consumer type 
& Price & Quantity & Surplus 
& Price & Quantity & Surplus 
& Price & Quantity & Surplus \\
\midrule
Center \(C\)      
& 0.685 & 3.681 & 20.327 
& 1.667 & 3.399 & 17.327 
& 2.289 & 3.681 & 20.327 \\
Peripheral \(P\) 
& 1.378 & 1.670 & 4.185 
& 1.667 & 1.569 & 3.691 
& 1.820 & 1.670 & 4.185 \\
\bottomrule
\end{tabular}
}
\end{table}

\begin{figure}[H]
\centering
\begin{tikzpicture}[scale=0.95, every node/.style={circle, draw, inner sep=1.8pt, font=\small}]
	\node[fill=black!8] (C) at (0,0) {\(C\)};
	\node (L1) at (-2,1.6) {};
	\node (L2) at (-2,0.5) {};
	\node (L3) at (-2,-0.5) {};
	\node (L4) at (-2,-1.6) {};
	\node (R1) at (2,1.6) {};
	\node (R2) at (2,0.5) {};
	\node (R3) at (2,-0.5) {};
	\node (R4) at (2,-1.6) {};
	\draw (C)--(L1);
	\draw (C)--(L2);
	\draw (C)--(L3);
	\draw (C)--(L4);
	\draw (C)--(R1);
	\draw (C)--(R2);
	\draw (C)--(R3);
	\draw (C)--(R4);
\end{tikzpicture}
\caption{A Star Network with \(m=8\)}
\label{fig:star_network}
\end{figure}

Figure \ref{fig:star_price_quantity} plots the exact equilibrium prices and quantities as \(\theta\) varies from \(0\) to \(1\), holding \(m=8\), \(a=3\), \(c=1\), \(\beta=0.5\), and \(\delta=0.2\) fixed. The center's price rises much faster than the peripheral price as interoperability increases. Quantities are minimized around the neutral point \(\theta=\beta\), and increase as interoperability becomes sufficiently strong. Hence, moving from \(\theta=0.5\) to \(\theta=0.8\), the center is charged a higher price but also obtains a larger quantity and higher surplus.

\begin{figure}[H]
\centering
\begin{subfigure}[b]{0.48\textwidth}
\centering
\begin{tikzpicture}
\begin{axis}[
	width=\textwidth,
	height=0.68\textwidth,
	xlabel={\(\theta\)},
	ylabel={Price},
	xmin=0, xmax=1,
	ymin=-0.6, ymax=2.8,
	xtick={0,0.2,0.4,0.6,0.8,1.0},
	grid=both,
	legend style={draw=none, at={(0.97,0.03)}, anchor=south east, font=\scriptsize}
]
\addplot+[mark=none, thick] coordinates {
	(0.00,-0.4227)
	(0.02,-0.2837)
	(0.04,-0.1524)
	(0.06,-0.0282)
	(0.08,0.0896)
	(0.10,0.2013)
	(0.12,0.3076)
	(0.14,0.4088)
	(0.16,0.5052)
	(0.18,0.5974)
	(0.20,0.6855)
	(0.22,0.7698)
	(0.24,0.8506)
	(0.26,0.9281)
	(0.28,1.0026)
	(0.30,1.0743)
	(0.32,1.1432)
	(0.34,1.2096)
	(0.36,1.2737)
	(0.38,1.3355)
	(0.40,1.3953)
	(0.42,1.4531)
	(0.44,1.5090)
	(0.46,1.5632)
	(0.48,1.6157)
	(0.50,1.6667)
	(0.52,1.7161)
	(0.54,1.7642)
	(0.56,1.8109)
	(0.58,1.8564)
	(0.60,1.9007)
	(0.62,1.9438)
	(0.64,1.9858)
	(0.66,2.0269)
	(0.68,2.0669)
	(0.70,2.1060)
	(0.72,2.1442)
	(0.74,2.1816)
	(0.76,2.2182)
	(0.78,2.2540)
	(0.80,2.2891)
	(0.82,2.3234)
	(0.84,2.3572)
	(0.86,2.3902)
	(0.88,2.4227)
	(0.90,2.4546)
	(0.92,2.4859)
	(0.94,2.5167)
	(0.96,2.5470)
	(0.98,2.5769)
	(1.00,2.6062)
};
\addlegendentry{Center \(C\)}
\addplot+[mark=none, dashed, thick] coordinates {
	(0.00,1.0206)
	(0.02,1.0664)
	(0.04,1.1095)
	(0.06,1.1500)
	(0.08,1.1883)
	(0.10,1.2243)
	(0.12,1.2584)
	(0.14,1.2907)
	(0.16,1.3213)
	(0.18,1.3503)
	(0.20,1.3778)
	(0.22,1.4040)
	(0.24,1.4289)
	(0.26,1.4527)
	(0.28,1.4753)
	(0.30,1.4968)
	(0.32,1.5174)
	(0.34,1.5371)
	(0.36,1.5558)
	(0.38,1.5738)
	(0.40,1.5910)
	(0.42,1.6075)
	(0.44,1.6232)
	(0.46,1.6383)
	(0.48,1.6528)
	(0.50,1.6667)
	(0.52,1.6800)
	(0.54,1.6928)
	(0.56,1.7050)
	(0.58,1.7168)
	(0.60,1.7281)
	(0.62,1.7390)
	(0.64,1.7494)
	(0.66,1.7595)
	(0.68,1.7691)
	(0.70,1.7784)
	(0.72,1.7873)
	(0.74,1.7958)
	(0.76,1.8041)
	(0.78,1.8120)
	(0.80,1.8196)
	(0.82,1.8269)
	(0.84,1.8339)
	(0.86,1.8406)
	(0.88,1.8471)
	(0.90,1.8533)
	(0.92,1.8593)
	(0.94,1.8650)
	(0.96,1.8705)
	(0.98,1.8758)
	(1.00,1.8808)
};
\addlegendentry{Peripheral \(P\)}
\end{axis}
\end{tikzpicture}
\caption{Prices}
\end{subfigure}
\hfill
\begin{subfigure}[b]{0.48\textwidth}
\centering
\begin{tikzpicture}
\begin{axis}[
	width=\textwidth,
	height=0.68\textwidth,
	xlabel={\(\theta\)},
	ylabel={Quantity},
	xmin=0, xmax=1,
	ymin=1.3, ymax=4.6,
	xtick={0,0.2,0.4,0.6,0.8,1.0},
	grid=both
]
\addplot+[mark=none, thick] coordinates {
	(0.00,4.3011)
	(0.02,4.2153)
	(0.04,4.1359)
	(0.06,4.0624)
	(0.08,3.9943)
	(0.10,3.9313)
	(0.12,3.8730)
	(0.14,3.8190)
	(0.16,3.7693)
	(0.18,3.7234)
	(0.20,3.6812)
	(0.22,3.6425)
	(0.24,3.6071)
	(0.26,3.5748)
	(0.28,3.5456)
	(0.30,3.5193)
	(0.32,3.4958)
	(0.34,3.4750)
	(0.36,3.4569)
	(0.38,3.4413)
	(0.40,3.4282)
	(0.42,3.4175)
	(0.44,3.4092)
	(0.46,3.4034)
	(0.48,3.3999)
	(0.50,3.3987)
	(0.52,3.3999)
	(0.54,3.4034)
	(0.56,3.4092)
	(0.58,3.4175)
	(0.60,3.4282)
	(0.62,3.4413)
	(0.64,3.4569)
	(0.66,3.4750)
	(0.68,3.4958)
	(0.70,3.5193)
	(0.72,3.5456)
	(0.74,3.5748)
	(0.76,3.6071)
	(0.78,3.6425)
	(0.80,3.6812)
	(0.82,3.7234)
	(0.84,3.7693)
	(0.86,3.8190)
	(0.88,3.8730)
	(0.90,3.9313)
	(0.92,3.9943)
	(0.94,4.0624)
	(0.96,4.1359)
	(0.98,4.2153)
	(1.00,4.3011)
};
\addplot+[mark=none, dashed, thick] coordinates {
	(0.00,1.8931)
	(0.02,1.8623)
	(0.04,1.8338)
	(0.06,1.8075)
	(0.08,1.7830)
	(0.10,1.7604)
	(0.12,1.7394)
	(0.14,1.7200)
	(0.16,1.7021)
	(0.18,1.6856)
	(0.20,1.6704)
	(0.22,1.6565)
	(0.24,1.6437)
	(0.26,1.6321)
	(0.28,1.6216)
	(0.30,1.6121)
	(0.32,1.6037)
	(0.34,1.5962)
	(0.36,1.5896)
	(0.38,1.5840)
	(0.40,1.5793)
	(0.42,1.5754)
	(0.44,1.5724)
	(0.46,1.5703)
	(0.48,1.5690)
	(0.50,1.5686)
	(0.52,1.5690)
	(0.54,1.5703)
	(0.56,1.5724)
	(0.58,1.5754)
	(0.60,1.5793)
	(0.62,1.5840)
	(0.64,1.5896)
	(0.66,1.5962)
	(0.68,1.6037)
	(0.70,1.6121)
	(0.72,1.6216)
	(0.74,1.6321)
	(0.76,1.6437)
	(0.78,1.6565)
	(0.80,1.6704)
	(0.82,1.6856)
	(0.84,1.7021)
	(0.86,1.7200)
	(0.88,1.7394)
	(0.90,1.7604)
	(0.92,1.7830)
	(0.94,1.8075)
	(0.96,1.8338)
	(0.98,1.8623)
	(1.00,1.8931)
};
\end{axis}
\end{tikzpicture}
\caption{Quantities}
\end{subfigure}
\caption{Prices and Quantities in the Star Network, $\beta=0.5$}
\label{fig:star_price_quantity}
\end{figure}

\begin{figure}[H]
\centering
\begin{subfigure}[b]{0.48\textwidth}
\centering
\begin{tikzpicture}
\begin{axis}[
	width=\textwidth,
	height=0.68\textwidth,
	xlabel={\(\theta\)},
	ylabel={Price},
	xmin=0, xmax=1,
	ymin=0.4, ymax=2.9,
	xtick={0,0.2,0.4,0.6,0.8,1.0},
	grid=both,
	legend style={draw=none, at={(0.97,0.03)}, anchor=south east, font=\scriptsize}
]
\addplot+[mark=none, thick] coordinates {
	(0.00,0.5000)
	(0.02,0.5880)
	(0.04,0.6722)
	(0.06,0.7528)
	(0.08,0.8301)
	(0.10,0.9043)
	(0.12,0.9756)
	(0.14,1.0442)
	(0.16,1.1103)
	(0.18,1.1739)
	(0.20,1.2353)
	(0.22,1.2946)
	(0.24,1.3518)
	(0.26,1.4072)
	(0.28,1.4608)
	(0.30,1.5127)
	(0.32,1.5630)
	(0.34,1.6119)
	(0.36,1.6592)
	(0.38,1.7053)
	(0.40,1.7500)
	(0.42,1.7935)
	(0.44,1.8359)
	(0.46,1.8771)
	(0.48,1.9173)
	(0.50,1.9565)
	(0.52,1.9948)
	(0.54,2.0321)
	(0.56,2.0686)
	(0.58,2.1042)
	(0.60,2.1391)
	(0.62,2.1732)
	(0.64,2.2066)
	(0.66,2.2392)
	(0.68,2.2713)
	(0.70,2.3027)
	(0.72,2.3335)
	(0.74,2.3637)
	(0.76,2.3934)
	(0.78,2.4226)
	(0.80,2.4512)
	(0.82,2.4794)
	(0.84,2.5071)
	(0.86,2.5344)
	(0.88,2.5612)
	(0.90,2.5876)
	(0.92,2.6137)
	(0.94,2.6393)
	(0.96,2.6647)
	(0.98,2.6896)
	(1.00,2.7143)
};
\addlegendentry{Center \(C\)}
\addplot+[mark=none, dashed, thick] coordinates {
	(0.00,1.3750)
	(0.02,1.4030)
	(0.04,1.4297)
	(0.06,1.4551)
	(0.08,1.4792)
	(0.10,1.5023)
	(0.12,1.5243)
	(0.14,1.5453)
	(0.16,1.5654)
	(0.18,1.5846)
	(0.20,1.6029)
	(0.22,1.6205)
	(0.24,1.6374)
	(0.26,1.6536)
	(0.28,1.6691)
	(0.30,1.6840)
	(0.32,1.6982)
	(0.34,1.7120)
	(0.36,1.7251)
	(0.38,1.7378)
	(0.40,1.7500)
	(0.42,1.7617)
	(0.44,1.7730)
	(0.46,1.7838)
	(0.48,1.7943)
	(0.50,1.8043)
	(0.52,1.8140)
	(0.54,1.8234)
	(0.56,1.8323)
	(0.58,1.8410)
	(0.60,1.8493)
	(0.62,1.8574)
	(0.64,1.8651)
	(0.66,1.8726)
	(0.68,1.8798)
	(0.70,1.8867)
	(0.72,1.8934)
	(0.74,1.8998)
	(0.76,1.9060)
	(0.78,1.9119)
	(0.80,1.9177)
	(0.82,1.9232)
	(0.84,1.9285)
	(0.86,1.9336)
	(0.88,1.9386)
	(0.90,1.9433)
	(0.92,1.9478)
	(0.94,1.9522)
	(0.96,1.9564)
	(0.98,1.9604)
	(1.00,1.9643)
};
\addlegendentry{Peripheral \(P\)}
\end{axis}
\end{tikzpicture}
\caption{Prices}
\end{subfigure}
\hfill
\begin{subfigure}[b]{0.48\textwidth}
\centering
\begin{tikzpicture}
\begin{axis}[
	width=\textwidth,
	height=0.68\textwidth,
	xlabel={\(\theta\)},
	ylabel={Quantity},
	xmin=0, xmax=1,
	ymin=1.4, ymax=5.8,
	xtick={0,0.2,0.4,0.6,0.8,1.0},
	grid=both
]
\addplot+[mark=none, thick] coordinates {
	(0.00,3.7195)
	(0.02,3.6772)
	(0.04,3.6384)
	(0.06,3.6029)
	(0.08,3.5706)
	(0.10,3.5414)
	(0.12,3.5151)
	(0.14,3.4916)
	(0.16,3.4708)
	(0.18,3.4527)
	(0.20,3.4371)
	(0.22,3.4241)
	(0.24,3.4135)
	(0.26,3.4053)
	(0.28,3.3994)
	(0.30,3.3960)
	(0.32,3.3949)
	(0.34,3.3961)
	(0.36,3.3997)
	(0.38,3.4056)
	(0.40,3.4139)
	(0.42,3.4246)
	(0.44,3.4377)
	(0.46,3.4533)
	(0.48,3.4715)
	(0.50,3.4923)
	(0.52,3.5158)
	(0.54,3.5420)
	(0.56,3.5712)
	(0.58,3.6034)
	(0.60,3.6387)
	(0.62,3.6773)
	(0.64,3.7193)
	(0.66,3.7650)
	(0.68,3.8146)
	(0.70,3.8683)
	(0.72,3.9263)
	(0.74,3.9890)
	(0.76,4.0567)
	(0.78,4.1297)
	(0.80,4.2086)
	(0.82,4.2937)
	(0.84,4.3856)
	(0.86,4.4850)
	(0.88,4.5925)
	(0.90,4.7089)
	(0.92,4.8351)
	(0.94,4.9722)
	(0.96,5.1215)
	(0.98,5.2842)
	(1.00,5.4622)
};
\addplot+[mark=none, dashed, thick] coordinates {
	(0.00,1.6921)
	(0.02,1.6765)
	(0.04,1.6622)
	(0.06,1.6491)
	(0.08,1.6372)
	(0.10,1.6263)
	(0.12,1.6165)
	(0.14,1.6077)
	(0.16,1.5999)
	(0.18,1.5931)
	(0.20,1.5871)
	(0.22,1.5821)
	(0.24,1.5780)
	(0.26,1.5747)
	(0.28,1.5723)
	(0.30,1.5707)
	(0.32,1.5700)
	(0.34,1.5701)
	(0.36,1.5711)
	(0.38,1.5729)
	(0.40,1.5756)
	(0.42,1.5792)
	(0.44,1.5836)
	(0.46,1.5889)
	(0.48,1.5952)
	(0.50,1.6024)
	(0.52,1.6105)
	(0.54,1.6197)
	(0.56,1.6299)
	(0.58,1.6412)
	(0.60,1.6536)
	(0.62,1.6672)
	(0.64,1.6820)
	(0.66,1.6982)
	(0.68,1.7157)
	(0.70,1.7347)
	(0.72,1.7552)
	(0.74,1.7774)
	(0.76,1.8014)
	(0.78,1.8273)
	(0.80,1.8553)
	(0.82,1.8855)
	(0.84,1.9181)
	(0.86,1.9534)
	(0.88,1.9916)
	(0.90,2.0329)
	(0.92,2.0777)
	(0.94,2.1264)
	(0.96,2.1794)
	(0.98,2.2372)
	(1.00,2.3004)
};
\end{axis}
\end{tikzpicture}
\caption{Quantities}
\end{subfigure}
\caption{Prices and Quantities in the Star Network, $\beta=0.4$}

\label{fig:star_price_quantity_beta04}
\end{figure}

\end{example}

\begin{example}[Exact double-star example]
\label{ex:exact_double_star}
We next consider a double-star network. There are two connected centers, denoted by \(L\) and \(R\). Center \(L\) is connected to \(M\) peripheral consumers, center \(R\) is connected to \(N\) peripheral consumers, and the two centers are connected to each other. This network allows the two centers to have different degrees:
\[
	d_L=M+1,
	\qquad
	d_R=N+1.
\]

The four consumer types are \(L\), \(R\), a peripheral consumer attached to \(L\), and a peripheral consumer attached to \(R\). For type-symmetric vectors, the action of the adjacency matrix is represented exactly by
\[
	\mathcal G_{M,N}
	=
	\begin{pmatrix}
		0 & 1 & M & 0 \\
		1 & 0 & 0 & N \\
		1 & 0 & 0 & 0 \\
		0 & 1 & 0 & 0
	\end{pmatrix}.
\]
Let
\[
	\bm 1_4=(1,1,1,1)^\top.
\]
The exact Katz-Bonacich centrality vector for the four types is
\[
	\bm b^{DS}
	=
	(\bm I_4-\alpha\mathcal G_{M,N})^{-1}\bm 1_4.
\]
Hence, for each type \(t\in\{L,R,P_L,P_R\}\),
\[
	p_t^*
	=
	c+\frac{1-\theta}{2-\theta}(a-c)
	+
	\frac{\theta-\beta}{(2-\beta)(2-\theta)}(a-c)b_t^{DS}.
\]

Similarly, the exact quantity vector is
\[
	\bm x^{DS}
	=
	(a-c)
	(\bm I_4-\delta\mathcal G_{M,N})
	\left[(1+\beta)\bm I_4-\delta(1+\theta)\mathcal G_{M,N}\right]^{-1}
	\left[(2-\beta)\bm I_4-\delta(2-\theta)\mathcal G_{M,N}\right]^{-1}
	\bm 1_4.
\]
Thus, for each type \(t\),
\[
	CS_t^*=(1+\beta)(x_t^{DS})^2.
\]

Table \ref{tab:exact_double_star} reports exact equilibrium values for a double-star network with \(M=8\), \(N=4\), \(a=3\), \(c=1\), \(\beta=0.5\), and \(\delta=0.2\). The case \(\theta=0.5\) corresponds to \(\theta=\beta\), where all consumer types face the same price. When \(\theta=0.2<\beta\), the two centers receive discounts, and the more connected center \(L\) receives the larger discount. When \(\theta=0.8>\beta\), the ranking is reversed: the more connected center \(L\) pays the highest price. Comparing \(\theta=0.5\) and \(\theta=0.8\), both centers pay higher prices but also consume more and obtain higher surplus. Thus, even when interoperability turns centrality into a price premium, central consumers may still benefit from stronger cross-platform network access.

\begin{table}[H]
\centering
\caption{Exact Equilibrium Values in a Double-Star Network}
\label{tab:exact_double_star}
\resizebox{\textwidth}{!}{
\begin{tabular}{lccccccccc}
\toprule
& \multicolumn{3}{c}{\(\theta=0.2\)}
& \multicolumn{3}{c}{\(\theta=0.5\)}
& \multicolumn{3}{c}{\(\theta=0.8\)} \\
\cmidrule(lr){2-4}\cmidrule(lr){5-7}\cmidrule(lr){8-10}
Consumer type
& Price & Quantity & Surplus
& Price & Quantity & Surplus
& Price & Quantity & Surplus \\
\midrule
Center \(L\)
& 0.199 & 4.871 & 35.590
& 1.667 & 4.257 & 27.183
& 2.452 & 4.871 & 35.590 \\
Center \(R\)
& 0.796 & 3.264 & 15.978
& 1.667 & 2.918 & 12.775
& 2.142 & 3.264 & 15.978 \\
Peripheral \(P_L\)
& 1.261 & 1.939 & 5.637
& 1.667 & 1.740 & 4.543
& 1.846 & 1.939 & 5.637 \\
Peripheral \(P_R\)
& 1.404 & 1.586 & 3.773
& 1.667 & 1.473 & 3.253
& 1.796 & 1.586 & 3.773 \\
\bottomrule
\end{tabular}
}
\end{table}

\begin{figure}[H]
\centering
\begin{tikzpicture}[scale=0.95, every node/.style={circle, draw, inner sep=1.8pt, font=\small}]
	\node[fill=black!8] (L) at (-1.2,0) {\(L\)};
	\node[fill=black!8] (R) at (1.2,0) {\(R\)};
	\draw (L)--(R);

	\node (LL1) at (-3.2,1.6) {};
	\node (LL2) at (-3.4,1.0) {};
	\node (LL3) at (-3.5,0.35) {};
	\node (LL4) at (-3.5,-0.35) {};
	\node (LL5) at (-3.4,-1.0) {};
	\node (LL6) at (-3.2,-1.6) {};
	\node (LL7) at (-2.5,2.0) {};
	\node (LL8) at (-2.5,-2.0) {};
	\draw (L)--(LL1);
	\draw (L)--(LL2);
	\draw (L)--(LL3);
	\draw (L)--(LL4);
	\draw (L)--(LL5);
	\draw (L)--(LL6);
	\draw (L)--(LL7);
	\draw (L)--(LL8);

	\node (RR1) at (3.2,1.4) {};
	\node (RR2) at (3.4,0.45) {};
	\node (RR3) at (3.4,-0.45) {};
	\node (RR4) at (3.2,-1.4) {};
	\draw (R)--(RR1);
	\draw (R)--(RR2);
	\draw (R)--(RR3);
	\draw (R)--(RR4);

	\node[draw=none] at (-3.8,-2.35) {\(M=8\) leaves};
	\node[draw=none] at (3.8,-2.35) {\(N=4\) leaves};
\end{tikzpicture}
\caption{A Double-Star Network with \(M=8\) and \(N=4\)}
\label{fig:double_star_network}
\end{figure}

Figure \ref{fig:double_star_price_quantity} plots the exact equilibrium prices and quantities as \(\theta\) varies from \(0\) to \(1\), holding \(M=8\), \(N=4\), \(a=3\), \(c=1\), \(\beta=0.5\), and \(\delta=0.2\) fixed. The more connected center \(L\) has the steepest price response to interoperability. For \(\theta<\beta\), it receives the largest discount; for \(\theta>\beta\), it pays the largest premium. Quantities are lowest around the neutral point \(\theta=\beta\) and rise as interoperability becomes sufficiently strong.

\begin{figure}[H]
\centering
\begin{subfigure}[b]{0.48\textwidth}
\centering
\begin{tikzpicture}
\begin{axis}[
	width=\textwidth,
	height=0.68\textwidth,
	xlabel={\(\theta\)},
	ylabel={Price},
	xmin=0, xmax=1,
	ymin=-2.2, ymax=3.0,
	xtick={0,0.2,0.4,0.6,0.8,1.0},
	grid=both,
	legend style={draw=none, at={(0.97,0.03)}, anchor=south east, font=\scriptsize}
]
\addplot+[mark=none, thick] coordinates {
	(0.00,-1.9224)
	(0.02,-1.6225)
	(0.04,-1.3492)
	(0.06,-1.0989)
	(0.08,-0.8689)
	(0.10,-0.6566)
	(0.12,-0.4602)
	(0.14,-0.2778)
	(0.16,-0.1080)
	(0.18,0.0506)
	(0.20,0.1990)
	(0.22,0.3382)
	(0.24,0.4691)
	(0.26,0.5925)
	(0.28,0.7089)
	(0.30,0.8191)
	(0.32,0.9235)
	(0.34,1.0225)
	(0.36,1.1167)
	(0.38,1.2064)
	(0.40,1.2919)
	(0.42,1.3735)
	(0.44,1.4516)
	(0.46,1.5263)
	(0.48,1.5979)
	(0.50,1.6667)
	(0.52,1.7327)
	(0.54,1.7962)
	(0.56,1.8574)
	(0.58,1.9164)
	(0.60,1.9732)
	(0.62,2.0282)
	(0.64,2.0813)
	(0.66,2.1327)
	(0.68,2.1824)
	(0.70,2.2307)
	(0.72,2.2775)
	(0.74,2.3229)
	(0.76,2.3670)
	(0.78,2.4100)
	(0.80,2.4517)
	(0.82,2.4924)
	(0.84,2.5320)
	(0.86,2.5707)
	(0.88,2.6084)
	(0.90,2.6453)
	(0.92,2.6813)
	(0.94,2.7165)
	(0.96,2.7510)
	(0.98,2.7848)
	(1.00,2.8178)
};
\addlegendentry{Center \(L\)}
\addplot+[mark=none, dashed, thick] coordinates {
	(0.00,-0.4245)
	(0.02,-0.2539)
	(0.04,-0.0979)
	(0.06,0.0453)
	(0.08,0.1774)
	(0.10,0.2995)
	(0.12,0.4129)
	(0.14,0.5185)
	(0.16,0.6171)
	(0.18,0.7093)
	(0.20,0.7959)
	(0.22,0.8774)
	(0.24,0.9542)
	(0.26,1.0268)
	(0.28,1.0954)
	(0.30,1.1605)
	(0.32,1.2223)
	(0.34,1.2811)
	(0.36,1.3372)
	(0.38,1.3906)
	(0.40,1.4417)
	(0.42,1.4905)
	(0.44,1.5373)
	(0.46,1.5822)
	(0.48,1.6253)
	(0.50,1.6667)
	(0.52,1.7065)
	(0.54,1.7448)
	(0.56,1.7818)
	(0.58,1.8174)
	(0.60,1.8519)
	(0.62,1.8851)
	(0.64,1.9173)
	(0.66,1.9485)
	(0.68,1.9786)
	(0.70,2.0079)
	(0.72,2.0362)
	(0.74,2.0637)
	(0.76,2.0905)
	(0.78,2.1165)
	(0.80,2.1417)
	(0.82,2.1663)
	(0.84,2.1902)
	(0.86,2.2135)
	(0.88,2.2362)
	(0.90,2.2584)
	(0.92,2.2800)
	(0.94,2.3011)
	(0.96,2.3217)
	(0.98,2.3418)
	(1.00,2.3614)
};
\addlegendentry{Center \(R\)}
\addplot+[mark=none, dotted, thick] coordinates {
	(0.00,0.6207)
	(0.02,0.7130)
	(0.04,0.7967)
	(0.06,0.8731)
	(0.08,0.9429)
	(0.10,1.0070)
	(0.12,1.0660)
	(0.14,1.1204)
	(0.16,1.1708)
	(0.18,1.2176)
	(0.20,1.2611)
	(0.22,1.3016)
	(0.24,1.3394)
	(0.26,1.3748)
	(0.28,1.4079)
	(0.30,1.4390)
	(0.32,1.4682)
	(0.34,1.4957)
	(0.36,1.5215)
	(0.38,1.5459)
	(0.40,1.5689)
	(0.42,1.5907)
	(0.44,1.6113)
	(0.46,1.6307)
	(0.48,1.6492)
	(0.50,1.6667)
	(0.52,1.6833)
	(0.54,1.6990)
	(0.56,1.7140)
	(0.58,1.7282)
	(0.60,1.7417)
	(0.62,1.7545)
	(0.64,1.7667)
	(0.66,1.7784)
	(0.68,1.7894)
	(0.70,1.8000)
	(0.72,1.8100)
	(0.74,1.8196)
	(0.76,1.8287)
	(0.78,1.8374)
	(0.80,1.8456)
	(0.82,1.8535)
	(0.84,1.8610)
	(0.86,1.8681)
	(0.88,1.8749)
	(0.90,1.8813)
	(0.92,1.8874)
	(0.94,1.8933)
	(0.96,1.8988)
	(0.98,1.9041)
	(1.00,1.9090)
};
\addlegendentry{Peripheral \(P_L\)}
\addplot+[mark=none, dashdotted, thick] coordinates {
	(0.00,1.0201)
	(0.02,1.0743)
	(0.04,1.1238)
	(0.06,1.1691)
	(0.08,1.2107)
	(0.10,1.2492)
	(0.12,1.2848)
	(0.14,1.3179)
	(0.16,1.3487)
	(0.18,1.3775)
	(0.20,1.4044)
	(0.22,1.4296)
	(0.24,1.4533)
	(0.26,1.4755)
	(0.28,1.4965)
	(0.30,1.5164)
	(0.32,1.5351)
	(0.34,1.5529)
	(0.36,1.5697)
	(0.38,1.5857)
	(0.40,1.6009)
	(0.42,1.6153)
	(0.44,1.6291)
	(0.46,1.6422)
	(0.48,1.6547)
	(0.50,1.6667)
	(0.52,1.6781)
	(0.54,1.6890)
	(0.56,1.6994)
	(0.58,1.7094)
	(0.60,1.7190)
	(0.62,1.7282)
	(0.64,1.7370)
	(0.66,1.7455)
	(0.68,1.7536)
	(0.70,1.7614)
	(0.72,1.7688)
	(0.74,1.7760)
	(0.76,1.7830)
	(0.78,1.7896)
	(0.80,1.7960)
	(0.82,1.8022)
	(0.84,1.8081)
	(0.86,1.8138)
	(0.88,1.8193)
	(0.90,1.8246)
	(0.92,1.8297)
	(0.94,1.8346)
	(0.96,1.8393)
	(0.98,1.8438)
	(1.00,1.8482)
};
\addlegendentry{Peripheral \(P_R\)}
\end{axis}
\end{tikzpicture}
\caption{Prices}
\end{subfigure}
\hfill
\begin{subfigure}[b]{0.48\textwidth}
\centering
\begin{tikzpicture}
\begin{axis}[
	width=\textwidth,
	height=0.68\textwidth,
	xlabel={\(\theta\)},
	ylabel={Quantity},
	xmin=0, xmax=1,
	ymin=1.2, ymax=6.8,
	xtick={0,0.2,0.4,0.6,0.8,1.0},
	grid=both
]
\addplot+[mark=none, thick] coordinates {
	(0.00,6.4417)
	(0.02,6.2035)
	(0.04,5.9895)
	(0.06,5.7965)
	(0.08,5.6222)
	(0.10,5.4645)
	(0.12,5.3216)
	(0.14,5.1920)
	(0.16,5.0744)
	(0.18,4.9677)
	(0.20,4.8710)
	(0.22,4.7835)
	(0.24,4.7044)
	(0.26,4.6332)
	(0.28,4.5694)
	(0.30,4.5124)
	(0.32,4.4618)
	(0.34,4.4175)
	(0.36,4.3789)
	(0.38,4.3460)
	(0.40,4.3185)
	(0.42,4.2962)
	(0.44,4.2790)
	(0.46,4.2668)
	(0.48,4.2595)
	(0.50,4.2570)
	(0.52,4.2595)
	(0.54,4.2668)
	(0.56,4.2790)
	(0.58,4.2962)
	(0.60,4.3185)
	(0.62,4.3460)
	(0.64,4.3789)
	(0.66,4.4175)
	(0.68,4.4618)
	(0.70,4.5124)
	(0.72,4.5694)
	(0.74,4.6332)
	(0.76,4.7044)
	(0.78,4.7835)
	(0.80,4.8710)
	(0.82,4.9677)
	(0.84,5.0744)
	(0.86,5.1920)
	(0.88,5.3216)
	(0.90,5.4645)
	(0.92,5.6222)
	(0.94,5.7965)
	(0.96,5.9895)
	(0.98,6.2035)
	(1.00,6.4417)
};
\addplot+[mark=none, dashed, thick] coordinates {
	(0.00,4.1403)
	(0.02,4.0078)
	(0.04,3.8887)
	(0.06,3.7812)
	(0.08,3.6840)
	(0.10,3.5959)
	(0.12,3.5161)
	(0.14,3.4436)
	(0.16,3.3777)
	(0.18,3.3180)
	(0.20,3.2637)
	(0.22,3.2146)
	(0.24,3.1702)
	(0.26,3.1302)
	(0.28,3.0943)
	(0.30,3.0622)
	(0.32,3.0338)
	(0.34,3.0088)
	(0.36,2.9871)
	(0.38,2.9685)
	(0.40,2.9530)
	(0.42,2.9404)
	(0.44,2.9307)
	(0.46,2.9238)
	(0.48,2.9197)
	(0.50,2.9183)
	(0.52,2.9197)
	(0.54,2.9238)
	(0.56,2.9307)
	(0.58,2.9404)
	(0.60,2.9530)
	(0.62,2.9685)
	(0.64,2.9871)
	(0.66,3.0088)
	(0.68,3.0338)
	(0.70,3.0622)
	(0.72,3.0943)
	(0.74,3.1302)
	(0.76,3.1702)
	(0.78,3.2146)
	(0.80,3.2637)
	(0.82,3.3180)
	(0.84,3.3777)
	(0.86,3.4436)
	(0.88,3.5161)
	(0.90,3.5959)
	(0.92,3.6840)
	(0.94,3.7812)
	(0.96,3.8887)
	(0.98,4.0078)
	(1.00,4.1403)
};
\addplot+[mark=none, dotted, thick] coordinates {
	(0.00,2.4451)
	(0.02,2.3684)
	(0.04,2.2994)
	(0.06,2.2372)
	(0.08,2.1810)
	(0.10,2.1301)
	(0.12,2.0840)
	(0.14,2.0422)
	(0.16,2.0043)
	(0.18,1.9698)
	(0.20,1.9386)
	(0.22,1.9104)
	(0.24,1.8848)
	(0.26,1.8619)
	(0.28,1.8412)
	(0.30,1.8228)
	(0.32,1.8065)
	(0.34,1.7922)
	(0.36,1.7797)
	(0.38,1.7691)
	(0.40,1.7602)
	(0.42,1.7530)
	(0.44,1.7474)
	(0.46,1.7434)
	(0.48,1.7411)
	(0.50,1.7403)
	(0.52,1.7411)
	(0.54,1.7434)
	(0.56,1.7474)
	(0.58,1.7530)
	(0.60,1.7602)
	(0.62,1.7691)
	(0.64,1.7797)
	(0.66,1.7922)
	(0.68,1.8065)
	(0.70,1.8228)
	(0.72,1.8412)
	(0.74,1.8619)
	(0.76,1.8848)
	(0.78,1.9104)
	(0.80,1.9386)
	(0.82,1.9698)
	(0.84,2.0043)
	(0.86,2.0422)
	(0.88,2.0840)
	(0.90,2.1301)
	(0.92,2.1810)
	(0.94,2.2372)
	(0.96,2.2994)
	(0.98,2.3684)
	(1.00,2.4451)
};
\addplot+[mark=none, dashdotted, thick] coordinates {
	(0.00,1.8719)
	(0.02,1.8289)
	(0.04,1.7901)
	(0.06,1.7550)
	(0.08,1.7233)
	(0.10,1.6946)
	(0.12,1.6685)
	(0.14,1.6448)
	(0.16,1.6233)
	(0.18,1.6037)
	(0.20,1.5860)
	(0.22,1.5699)
	(0.24,1.5553)
	(0.26,1.5422)
	(0.28,1.5304)
	(0.30,1.5199)
	(0.32,1.5105)
	(0.34,1.5023)
	(0.36,1.4952)
	(0.38,1.4891)
	(0.40,1.4840)
	(0.42,1.4798)
	(0.44,1.4766)
	(0.46,1.4744)
	(0.48,1.4730)
	(0.50,1.4726)
	(0.52,1.4730)
	(0.54,1.4744)
	(0.56,1.4766)
	(0.58,1.4798)
	(0.60,1.4840)
	(0.62,1.4891)
	(0.64,1.4952)
	(0.66,1.5023)
	(0.68,1.5105)
	(0.70,1.5199)
	(0.72,1.5304)
	(0.74,1.5422)
	(0.76,1.5553)
	(0.78,1.5699)
	(0.80,1.5860)
	(0.82,1.6037)
	(0.84,1.6233)
	(0.86,1.6448)
	(0.88,1.6685)
	(0.90,1.6946)
	(0.92,1.7233)
	(0.94,1.7550)
	(0.96,1.7901)
	(0.98,1.8289)
	(1.00,1.8719)
};
\end{axis}
\end{tikzpicture}
\caption{Quantities}
\end{subfigure}
\caption{Prices and Quantities in the Double-Star Network, $\beta=0.5$}
\label{fig:double_star_price_quantity}
\end{figure}

\begin{figure}[H]
\centering
\begin{subfigure}[b]{0.48\textwidth}
\centering
\begin{tikzpicture}
\begin{axis}[
	width=\textwidth,
	height=0.68\textwidth,
	xlabel={\(\theta\)},
	ylabel={Price},
	xmin=0, xmax=1,
	ymin=-0.4, ymax=3.1,
	xtick={0,0.2,0.4,0.6,0.8,1.0},
	grid=both,
	legend style={draw=none, at={(0.97,0.03)}, anchor=south east, font=\scriptsize}
]
\addplot+[mark=none, thick] coordinates {
	(0.00,-0.2000)
	(0.02,-0.0422)
	(0.04,0.1054)
	(0.06,0.2439)
	(0.08,0.3740)
	(0.10,0.4967)
	(0.12,0.6124)
	(0.14,0.7218)
	(0.16,0.8254)
	(0.18,0.9237)
	(0.20,1.0172)
	(0.22,1.1061)
	(0.24,1.1908)
	(0.26,1.2717)
	(0.28,1.3490)
	(0.30,1.4229)
	(0.32,1.4938)
	(0.34,1.5617)
	(0.36,1.6270)
	(0.38,1.6897)
	(0.40,1.7500)
	(0.42,1.8081)
	(0.44,1.8642)
	(0.46,1.9182)
	(0.48,1.9705)
	(0.50,2.0210)
	(0.52,2.0698)
	(0.54,2.1171)
	(0.56,2.1630)
	(0.58,2.2075)
	(0.60,2.2507)
	(0.62,2.2926)
	(0.64,2.3334)
	(0.66,2.3731)
	(0.68,2.4117)
	(0.70,2.4493)
	(0.72,2.4860)
	(0.74,2.5218)
	(0.76,2.5567)
	(0.78,2.5908)
	(0.80,2.6242)
	(0.82,2.6568)
	(0.84,2.6887)
	(0.86,2.7199)
	(0.88,2.7505)
	(0.90,2.7805)
	(0.92,2.8099)
	(0.94,2.8388)
	(0.96,2.8671)
	(0.98,2.8950)
	(1.00,2.9223)
};
\addlegendentry{Center \(L\)}
\addplot+[mark=none, dashed, thick] coordinates {
	(0.00,0.6000)
	(0.02,0.6914)
	(0.04,0.7771)
	(0.06,0.8577)
	(0.08,0.9337)
	(0.10,1.0054)
	(0.12,1.0733)
	(0.14,1.1376)
	(0.16,1.1987)
	(0.18,1.2568)
	(0.20,1.3121)
	(0.22,1.3648)
	(0.24,1.4152)
	(0.26,1.4633)
	(0.28,1.5094)
	(0.30,1.5536)
	(0.32,1.5960)
	(0.34,1.6368)
	(0.36,1.6760)
	(0.38,1.7137)
	(0.40,1.7500)
	(0.42,1.7850)
	(0.44,1.8189)
	(0.46,1.8515)
	(0.48,1.8831)
	(0.50,1.9137)
	(0.52,1.9432)
	(0.54,1.9719)
	(0.56,1.9997)
	(0.58,2.0267)
	(0.60,2.0528)
	(0.62,2.0783)
	(0.64,2.1030)
	(0.66,2.1270)
	(0.68,2.1504)
	(0.70,2.1732)
	(0.72,2.1954)
	(0.74,2.2170)
	(0.76,2.2381)
	(0.78,2.2586)
	(0.80,2.2787)
	(0.82,2.2983)
	(0.84,2.3175)
	(0.86,2.3362)
	(0.88,2.3545)
	(0.90,2.3724)
	(0.92,2.3899)
	(0.94,2.4070)
	(0.96,2.4238)
	(0.98,2.4402)
	(1.00,2.4563)
};
\addlegendentry{Center \(R\)}
\addplot+[mark=none, dotted, thick] coordinates {
	(0.00,1.2000)
	(0.02,1.2470)
	(0.04,1.2908)
	(0.06,1.3316)
	(0.08,1.3698)
	(0.10,1.4055)
	(0.12,1.4389)
	(0.14,1.4703)
	(0.16,1.4998)
	(0.18,1.5276)
	(0.20,1.5539)
	(0.22,1.5786)
	(0.24,1.6020)
	(0.26,1.6241)
	(0.28,1.6450)
	(0.30,1.6649)
	(0.32,1.6837)
	(0.34,1.7016)
	(0.36,1.7185)
	(0.38,1.7347)
	(0.40,1.7500)
	(0.42,1.7646)
	(0.44,1.7785)
	(0.46,1.7918)
	(0.48,1.8044)
	(0.50,1.8164)
	(0.52,1.8279)
	(0.54,1.8389)
	(0.56,1.8493)
	(0.58,1.8593)
	(0.60,1.8689)
	(0.62,1.8780)
	(0.64,1.8867)
	(0.66,1.8950)
	(0.68,1.9029)
	(0.70,1.9105)
	(0.72,1.9178)
	(0.74,1.9247)
	(0.76,1.9313)
	(0.78,1.9376)
	(0.80,1.9436)
	(0.82,1.9494)
	(0.84,1.9549)
	(0.86,1.9601)
	(0.88,1.9651)
	(0.90,1.9698)
	(0.92,1.9743)
	(0.94,1.9786)
	(0.96,1.9827)
	(0.98,1.9866)
	(1.00,1.9903)
};
\addlegendentry{Peripheral \(P_L\)}
\addplot+[mark=none, dashdotted, thick] coordinates {
	(0.00,1.4000)
	(0.02,1.4286)
	(0.04,1.4554)
	(0.06,1.4805)
	(0.08,1.5041)
	(0.10,1.5263)
	(0.12,1.5472)
	(0.14,1.5670)
	(0.16,1.5857)
	(0.18,1.6034)
	(0.20,1.6202)
	(0.22,1.6362)
	(0.24,1.6513)
	(0.26,1.6658)
	(0.28,1.6795)
	(0.30,1.6926)
	(0.32,1.7052)
	(0.34,1.7171)
	(0.36,1.7286)
	(0.38,1.7395)
	(0.40,1.7500)
	(0.42,1.7600)
	(0.44,1.7697)
	(0.46,1.7789)
	(0.48,1.7878)
	(0.50,1.7963)
	(0.52,1.8045)
	(0.54,1.8124)
	(0.56,1.8199)
	(0.58,1.8272)
	(0.60,1.8342)
	(0.62,1.8410)
	(0.64,1.8475)
	(0.66,1.8538)
	(0.68,1.8598)
	(0.70,1.8656)
	(0.72,1.8713)
	(0.74,1.8767)
	(0.76,1.8819)
	(0.78,1.8869)
	(0.80,1.8918)
	(0.82,1.8965)
	(0.84,1.9010)
	(0.86,1.9054)
	(0.88,1.9096)
	(0.90,1.9137)
	(0.92,1.9176)
	(0.94,1.9214)
	(0.96,1.9251)
	(0.98,1.9286)
	(1.00,1.9320)
};
\addlegendentry{Peripheral \(P_R\)}
\end{axis}
\end{tikzpicture}
\caption{Prices}
\end{subfigure}
\hfill
\begin{subfigure}[b]{0.48\textwidth}
\centering
\begin{tikzpicture}
\begin{axis}[
	width=\textwidth,
	height=0.68\textwidth,
	xlabel={\(\theta\)},
	ylabel={Quantity},
	xmin=0, xmax=1,
	ymin=1.3, ymax=10.4,
	xtick={0,0.2,0.4,0.6,0.8,1.0},
	grid=both
]
\addplot+[mark=none, thick] coordinates {
	(0.00,5.0624)
	(0.02,4.9563)
	(0.04,4.8601)
	(0.06,4.7731)
	(0.08,4.6946)
	(0.10,4.6239)
	(0.12,4.5605)
	(0.14,4.5040)
	(0.16,4.4539)
	(0.18,4.4100)
	(0.20,4.3719)
	(0.22,4.3394)
	(0.24,4.3123)
	(0.26,4.2904)
	(0.28,4.2736)
	(0.30,4.2618)
	(0.32,4.2549)
	(0.34,4.2529)
	(0.36,4.2557)
	(0.38,4.2634)
	(0.40,4.2760)
	(0.42,4.2936)
	(0.44,4.3164)
	(0.46,4.3443)
	(0.48,4.3776)
	(0.50,4.4166)
	(0.52,4.4614)
	(0.54,4.5123)
	(0.56,4.5698)
	(0.58,4.6341)
	(0.60,4.7057)
	(0.62,4.7853)
	(0.64,4.8733)
	(0.66,4.9705)
	(0.68,5.0777)
	(0.70,5.1958)
	(0.72,5.3260)
	(0.74,5.4696)
	(0.76,5.6280)
	(0.78,5.8030)
	(0.80,5.9967)
	(0.82,6.2116)
	(0.84,6.4507)
	(0.86,6.7176)
	(0.88,7.0168)
	(0.90,7.3536)
	(0.92,7.7349)
	(0.94,8.1692)
	(0.96,8.6674)
	(0.98,9.2439)
	(1.00,9.9173)
};
\addplot+[mark=none, dashed, thick] coordinates {
	(0.00,3.3653)
	(0.02,3.3062)
	(0.04,3.2527)
	(0.06,3.2043)
	(0.08,3.1606)
	(0.10,3.1213)
	(0.12,3.0860)
	(0.14,3.0546)
	(0.16,3.0268)
	(0.18,3.0025)
	(0.20,2.9814)
	(0.22,2.9635)
	(0.24,2.9486)
	(0.26,2.9367)
	(0.28,2.9276)
	(0.30,2.9213)
	(0.32,2.9178)
	(0.34,2.9171)
	(0.36,2.9191)
	(0.38,2.9238)
	(0.40,2.9314)
	(0.42,2.9417)
	(0.44,2.9549)
	(0.46,2.9711)
	(0.48,2.9903)
	(0.50,3.0127)
	(0.52,3.0383)
	(0.54,3.0674)
	(0.56,3.1002)
	(0.58,3.1368)
	(0.60,3.1775)
	(0.62,3.2226)
	(0.64,3.2724)
	(0.66,3.3274)
	(0.68,3.3879)
	(0.70,3.4545)
	(0.72,3.5278)
	(0.74,3.6085)
	(0.76,3.6974)
	(0.78,3.7955)
	(0.80,3.9040)
	(0.82,4.0241)
	(0.84,4.1575)
	(0.86,4.3063)
	(0.88,4.4728)
	(0.90,4.6600)
	(0.92,4.8716)
	(0.94,5.1123)
	(0.96,5.3881)
	(0.98,5.7067)
	(1.00,6.0785)
};
\addplot+[mark=none, dotted, thick] coordinates {
	(0.00,2.0089)
	(0.02,1.9743)
	(0.04,1.9429)
	(0.06,1.9145)
	(0.08,1.8888)
	(0.10,1.8656)
	(0.12,1.8447)
	(0.14,1.8261)
	(0.16,1.8096)
	(0.18,1.7951)
	(0.20,1.7824)
	(0.22,1.7716)
	(0.24,1.7625)
	(0.26,1.7551)
	(0.28,1.7493)
	(0.30,1.7451)
	(0.32,1.7426)
	(0.34,1.7416)
	(0.36,1.7422)
	(0.38,1.7443)
	(0.40,1.7481)
	(0.42,1.7534)
	(0.44,1.7604)
	(0.46,1.7691)
	(0.48,1.7796)
	(0.50,1.7918)
	(0.52,1.8060)
	(0.54,1.8221)
	(0.56,1.8403)
	(0.58,1.8607)
	(0.60,1.8835)
	(0.62,1.9089)
	(0.64,1.9370)
	(0.66,1.9680)
	(0.68,2.0023)
	(0.70,2.0401)
	(0.72,2.0817)
	(0.74,2.1277)
	(0.76,2.1784)
	(0.78,2.2345)
	(0.80,2.2966)
	(0.82,2.3655)
	(0.84,2.4421)
	(0.86,2.5278)
	(0.88,2.6237)
	(0.90,2.7318)
	(0.92,2.8542)
	(0.94,2.9936)
	(0.96,3.1535)
	(0.98,3.3385)
	(1.00,3.5547)
};
\addplot+[mark=none, dashdotted, thick] coordinates {
	(0.00,1.6236)
	(0.02,1.6042)
	(0.04,1.5866)
	(0.06,1.5706)
	(0.08,1.5561)
	(0.10,1.5431)
	(0.12,1.5315)
	(0.14,1.5210)
	(0.16,1.5118)
	(0.18,1.5037)
	(0.20,1.4967)
	(0.22,1.4907)
	(0.24,1.4857)
	(0.26,1.4816)
	(0.28,1.4785)
	(0.30,1.4764)
	(0.32,1.4751)
	(0.34,1.4747)
	(0.36,1.4753)
	(0.38,1.4768)
	(0.40,1.4791)
	(0.42,1.4824)
	(0.44,1.4867)
	(0.46,1.4919)
	(0.48,1.4981)
	(0.50,1.5053)
	(0.52,1.5137)
	(0.54,1.5231)
	(0.56,1.5338)
	(0.58,1.5457)
	(0.60,1.5590)
	(0.62,1.5737)
	(0.64,1.5899)
	(0.66,1.6078)
	(0.68,1.6275)
	(0.70,1.6492)
	(0.72,1.6731)
	(0.74,1.6994)
	(0.76,1.7283)
	(0.78,1.7602)
	(0.80,1.7954)
	(0.82,1.8345)
	(0.84,1.8778)
	(0.86,1.9261)
	(0.88,1.9801)
	(0.90,2.0408)
	(0.92,2.1093)
	(0.94,2.1873)
	(0.96,2.2765)
	(0.98,2.3795)
	(1.00,2.4995)
};
\end{axis}
\end{tikzpicture}
\caption{Quantities}
\end{subfigure}
\caption{Prices and Quantities in the Double-Star Network, $\beta=0.4$}
\label{fig:double_star_price_quantity_beta04}
\end{figure}
\end{example}

\section{Designing Interoperability}
\label{sec:designing_interoperability}

We now interpret interoperability as a design parameter. A regulator or platform designer may choose the strength of cross-platform interoperability, \(\theta\), taking the consumer network and the degree of product substitutability as given. This section summarizes the preferences of consumers, platforms, and a welfare-maximizing designer over \(\theta\).

The analysis also clarifies an important implication of the pricing formula: stronger interoperability tends to raise equilibrium prices. This effect is exact in regular networks and holds locally in general networks.

\begin{proposition}[Interoperability and equilibrium prices]
\label{prop:price_theta_design}
Suppose Assumptions \ref{ass:stability} and \ref{ass:full_symmetry} hold. Let
\[
	\bm V_\theta=(2-\beta)\bm I-\delta(2-\theta)\bm G .
\]
Then the exact derivative of the discriminatory price vector with respect to interoperability is
\begin{align}
\frac{\partial \bm p^*}{\partial\theta}
=
(a-c)\delta
\bm G(\bm I-\delta\bm G)\bm V_\theta^{-2}\bm 1 .
\label{eq:price_theta_derivative_general}
\end{align}
In a \(d\)-regular network, this derivative reduces to
\begin{align}
\frac{\partial p^*}{\partial\theta}
=
\frac{
	\delta d(1-\delta d)
}{
	\left[(2-\beta)-\delta d(2-\theta)\right]^2
}
(a-c)
>0.
\label{eq:price_theta_derivative_regular}
\end{align}
Moreover, in a general network, for sufficiently small \(\delta\),
\begin{align}
\frac{\partial p_i^*}{\partial\theta}
=
\frac{\delta(a-c)}{(2-\beta)^2}d_i
+
O(\delta^2).
\label{eq:price_theta_derivative_local}
\end{align}
Thus, to the first order, stronger interoperability raises prices more for consumers with higher degree.
\end{proposition}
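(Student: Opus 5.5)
The plan is to start from the closed form \eqref{eq:price_kb} in Proposition \ref{prop:pricing}, specialized to Assumption \ref{ass:full_symmetry}, so that $\bm a-\bm c=(a-c)\bm 1$. First, rewrite the Katz--Bonacich term in terms of $\bm V_\theta$. Since $\alpha=\delta(2-\theta)/(2-\beta)$, we have $\bm I-\alpha\bm G=\bm V_\theta/(2-\beta)$. Hence $\bm b(\bm G,\alpha,\bm 1)=(2-\beta)\bm V_\theta^{-1}\bm 1$, and $\bm V_\theta$ is invertible by Lemma \ref{lem:stability_premium_region}. Substituting gives
\[
\bm p^*-c\bm 1=\frac{a-c}{2-\theta}\,\bm V_\theta^{-1}\Big[(1-\theta)\bm V_\theta+(\theta-\beta)\bm I\Big]\bm 1 .
\]
The key simplification is the identity $(1-\theta)(2-\beta)+\theta-\beta=(2-\theta)(1-\beta)$. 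With it, the bracket equals $(2-\theta)\big[(1-\beta)\bm I-\delta(1-\theta)\bm G\big]=(2-\theta)\bm K$. This yields the compact representation
\[
\bm p^*=c\bm 1+(a-c)\,\bm V_\theta^{-1}\bm K\bm 1 ,
\]
which is consistent with Corollary \ref{cor:regular_price} and removes the awkward $1/(2-\theta)$ factors.

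Next, differentiate in $\theta$. We have $\partial\bm V_\theta/\partial\theta=\delta\bm G$, $\partial\bm K/\partial\theta=\delta\bm G$, and $\partial\bm V_\theta^{-1}/\partial\theta=-\bm V_\theta^{-1}(\delta\bm G)\bm V_\theta^{-1}$. All these matrices are polynomials in $\bm G$, so they commute, and
\[
\frac{\partial}{\partial\theta}\big(\bm V_\theta^{-1}\bm K\big)=\delta\bm G\,\bm V_\theta^{-2}\big(\bm V_\theta-\bm K\big).
\]
Since $\bm V_\theta-\bm K=\bm I-\delta\bm G$, this gives \eqref{eq:price_theta_derivative_general} after reordering by commutativity.

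For the regular case, use $\bm G\bm 1=d\bm 1$. Every factor then acts on $\bm 1$ as a scalar, giving $\delta d(1-\delta d)/[(2-\beta)-\delta d(2-\theta)]^2$. For the sign, add the two inequalities in Assumption \ref{ass:stability} with $\rho(\bm G)=d$; this gives $2\delta d<2$, so $1-\delta d>0$. The denominator is nonzero by Lemma \ref{lem:stability_premium_region}. So the derivative is nonnegative, and it is strictly positive whenever $\delta d>0$ (the strict claim implicitly assumes this nondegenerate case, as in Proposition \ref{prop:cs_profit_comparative}). For the local expansion, write $\bm V_\theta^{-1}=(2-\beta)^{-1}(\bm I-\alpha\bm G)^{-1}=(2-\beta)^{-1}\bm I+O(\delta)$ and $\bm I-\delta\bm G=\bm I+O(\delta)$. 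The prefactor $\delta\bm G$ then yields $\delta(a-c)(2-\beta)^{-2}\bm G\bm 1+O(\delta^2)$, and $\bm G\bm 1=\bm d$.

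The only real obstacle is the algebraic collapse of the decomposition into $\bm V_\theta^{-1}\bm K\bm 1$; differentiating \eqref{eq:price_kb} term by term would be messy. I would verify that identity first. Everything afterwards is routine matrix calculus exploiting commutativity of polynomials in $\bm G$.
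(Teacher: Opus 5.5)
Your proposal is correct and follows the paper's proof: the paper differentiates $\bm p^*-c\bm 1=\bm K_\theta\bm V_\theta^{-1}(a-c)\bm 1$, uses commutativity and $\bm V_\theta-\bm K_\theta=\bm I-\delta\bm G$, specializes to $\bm G\bm 1=d\bm 1$, and expands $\bm V_\theta^{-2}=(2-\beta)^{-2}\bm I+O(\delta)$. It takes that compact form directly from an intermediate step in the proof of Proposition~\ref{prop:pricing}, so your re-derivation from the Katz--Bonacich formula is valid but unnecessary, and your extra details ($1-\delta d>0$ from summing the two stability conditions, and strict positivity requiring $\delta d>0$) make explicit what the paper leaves implicit.
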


Proposition \ref{prop:price_theta_design} shows that the centrality-premium mechanism is not merely a comparison between two fixed values of \(\theta\). As interoperability increases, the price charged to more connected consumers tends to increase more strongly. In regular networks this price effect is globally positive. In nonregular networks, the local approximation shows that the incidence of the price increase is ordered by degree.

In nonregular networks, interoperability may also reallocate pricing pressure across consumers. Rather than raising all individual prices uniformly, stronger interoperability can increase the prices charged to central consumers while reducing those charged to peripheral consumers. Hence, interoperability is not only a competition-softening force; it also changes the incidence of prices across network positions.

We next consider the design of \(\theta\) in a regular network. Let
\[
	q=\delta d.
\]
To keep the feasible set fixed as \(\theta\) varies over \([0,1]\), suppose that the stability condition holds for every \(\theta\in[0,1]\), which is guaranteed by
\[
	q<\min\left\{1-\beta,\frac{1+\beta}{2}\right\}.
\]
Define the consumer-optimal, platform-optimal, and welfare-optimal interoperability levels by
\[
	\theta^{CS}
	\in
	\operatorname*{arg\,max}_{\theta\in[0,1]} CS^*(\theta),
\]
\[
	\theta^{\Pi}
	\in
	\operatorname*{arg\,max}_{\theta\in[0,1]} \Pi^*(\theta),
\]
and
\[
	\theta^{TW}
	\in
	\operatorname*{arg\,max}_{\theta\in[0,1]} TW^*(\theta).
\]

\begin{proposition}[Designing interoperability in regular networks]
\label{prop:optimal_interoperability_regular}
Suppose Assumptions \ref{ass:stability} and \ref{ass:full_symmetry} hold for all \(\theta\in[0,1]\), and suppose that \(\bm G\) is \(d\)-regular with \(q=\delta d>0\). Then:

\begin{enumerate}[label=(\roman*)]
	\item Platform profit is strictly increasing in \(\theta\). Hence,
	\[
		\theta^{\Pi}=1.
	\]

	\item Consumer surplus is maximized at complete interoperability if products are sufficiently differentiated:
	\[
		\beta<\frac{1}{2}
		\quad\Longrightarrow\quad
		\theta^{CS}=1.
	\]
	It is maximized at no interoperability if products are sufficiently close substitutes:
	\[
		\beta>\frac{1}{2}
		\quad\Longrightarrow\quad
		\theta^{CS}=0.
	\]
	If \(\beta=1/2\), then consumers are indifferent between the two endpoints:
	\[
		CS^*(0)=CS^*(1).
	\]

	\item A welfare-optimal interoperability level satisfies
	\[
		\theta^{TW}
		\in
		\operatorname*{arg\,max}_{\theta\in[0,1]} TW^*(\theta),
	\]
	where an interior optimum must solve
	\begin{align}
	\underbrace{s(1+\beta)(v-m^+)}_{\text{consumer-surplus effect}}
	+
	\underbrace{2m^+\left[s^2-m^-s+(m^-)^2\right]}_{\text{profit effect}}
	=
	0,
	\label{eq:theta_tw_foc}
	\end{align}
	with
	\[
		s=1-q,
		\qquad
		m^+=(1+\beta)-q(1+\theta),
	\]
	\[
		m^-=(1-\beta)-q(1-\theta),
		\qquad
		v=(2-\beta)-q(2-\theta).
	\]
	In particular, if
	\[
		\beta\leq\frac{1-q}{2},
	\]
	then consumer surplus and platform profits both increase with \(\theta\), so
	\[
		\theta^{TW}=1.
	\]
\end{enumerate}
\end{proposition}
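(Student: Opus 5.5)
The plan is to read all three parts off the regular-network comparative statics already established, namely Proposition \ref{prop:cs_profit_comparative} and Proposition \ref{prop:total_welfare}. The one new ingredient is a direct endpoint comparison of consumer surplus. First, I would check that the standing restriction $q<\min\{1-\beta,(1+\beta)/2\}$ keeps $s,m^+,m^-,v$ strictly positive for every $\theta\in[0,1]$:
\begin{itemize}
\item $m^+\ge 1+\beta-2q>0$, using $q<(1+\beta)/2$;
\item $m^-\ge 1-\beta-q>0$, using $q<1-\beta$;
\item $s=1-q>0$, since $q<1-\beta\le 1$;
\item $v=m^++m^->0$.
\end{itemize}
Hence the closed forms \eqref{eq:regular_cs}--\eqref{eq:regular_profit} are valid and continuous on all of $[0,1]$. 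This gives existence of all three maximizers on the compact interval.

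Part (i) is immediate from Proposition \ref{prop:cs_profit_comparative}. That result gives $\partial\Pi^*/\partial\theta>0$ for every $\theta\in[0,1]$, so $\theta^\Pi=1$.

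For part (ii), I would not try to prove monotonicity, because it fails in the intermediate $\beta$ range. Instead I would use the sign formula from Proposition \ref{prop:cs_profit_comparative}:
\[
\operatorname{sign}\{\partial CS^*/\partial\theta\}=\operatorname{sign}\{1-2\beta-q+2q\theta\}.
\]
The right-hand side is strictly increasing in $\theta$ because $q>0$. So $CS^*$ is either monotone on $[0,1]$ or first decreasing and then increasing, and in every case its maximum is attained at an endpoint. It therefore suffices to compare $CS^*(0)$ with $CS^*(1)$.

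By \eqref{eq:regular_cs}, $CS^*=n(1+\beta)(a-c)^2 s^2/(m^+v)^2$, and $s$ does not depend on $\theta$. So consumer surplus is larger exactly where the product $m^+v$ is smaller. Write $P=(1+\beta)(2-\beta)$. A direct expansion gives
\[
(m^+v)\big|_{\theta=0}=(1+\beta-q)(2-\beta-2q)=P-q(4+\beta)+2q^2,
\]
\[
(m^+v)\big|_{\theta=1}=(1+\beta-2q)(2-\beta-q)=P-q(5-\beta)+2q^2.
\]
The difference is $(m^+v)|_{\theta=1}-(m^+v)|_{\theta=0}=q(2\beta-1)$. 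This is negative for $\beta<1/2$, giving $\theta^{CS}=1$. It is positive for $\beta>1/2$, giving $\theta^{CS}=0$. It is zero at $\beta=1/2$, giving $CS^*(0)=CS^*(1)$.

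For part (iii), existence of $\theta^{TW}$ follows from continuity of $TW^*=CS^*+2\Pi^*$ on $[0,1]$. At an interior maximizer the derivative vanishes. Proposition \ref{prop:total_welfare} writes this derivative as a strictly positive factor times the bracket in \eqref{eq:tw_derivative_sign}, so the bracket must vanish, which is exactly \eqref{eq:theta_tw_foc}. I would re-derive that positive factor from the proof of that proposition, to confirm it contains no $\theta$-dependent sign change on $[0,1]$ given the positivity established at the start.

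For the last claim, suppose $\beta\le(1-q)/2$. Then for every $\theta\ge0$,
\[
v-m^+=1-2\beta-q+2q\theta\ge 1-2\beta-q\ge 0.
\]
So $CS^*$ is nondecreasing in $\theta$. Meanwhile $\Pi^*$ is strictly increasing by part (i). Hence $TW^*$ is strictly increasing and $\theta^{TW}=1$.

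I expect the main obstacle to be part (ii), because the sign of $\partial CS^*/\partial\theta$ alone is not enough there. The quasi-convexity observation is needed to reduce the problem to the two endpoints, and the key step is spotting that only $m^+v$ moves with $\theta$, so the endpoint comparison collapses to the clean expression $q(2\beta-1)$.
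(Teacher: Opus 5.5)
Your proof is correct, and it follows the route the paper's structure points to. Parts (i) and (iii) are read off from Propositions \ref{prop:cs_profit_comparative} and \ref{prop:total_welfare}. The appendix gives no proof of this proposition, so for part (ii) you supply the step the paper leaves implicit: the quasi-convexity of $CS^*$ plus the endpoint identity $(m^+v)|_{\theta=1}-(m^+v)|_{\theta=0}=q(2\beta-1)$. This also reconciles (ii) with the paper's remark that the \emph{local} sign of $\partial CS^*/\partial\theta$ is not determined by $\beta$ versus $1/2$. Only the global comparison is.

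Two small corrections.
\begin{itemize}
\item The identity $v=m^++m^-$ is false. In fact $m^++m^-=2s$ and $v=s+m^-$, so $v>0$ still follows.
\item When you re-derive the prefactor, note that the paper's intermediate display for $\partial\Pi^*/\partial\theta$ should have $(m^+)^2v^3$, not $(m^+)^3v^3$, in the denominator. That is why the profit term in \eqref{eq:tw_derivative_sign} carries the factor $2m^+$. The common prefactor $2nq(a-c)^2s/\bigl((m^+)^3v^3\bigr)$ is strictly positive on $[0,1]$, so your first-order condition and the $\theta^{TW}=1$ conclusion stand.
\end{itemize}
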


Proposition \ref{prop:optimal_interoperability_regular} highlights the design trade-off. Platforms always prefer stronger interoperability in the regular-network benchmark because interoperability softens competition and raises markups. Consumers prefer complete interoperability when products are sufficiently differentiated, since cross-platform network benefits dominate the price increase. When products are close substitutes, however, consumers may prefer no interoperability, because the competition-softening effect dominates.

The welfare-optimal level of interoperability balances these two forces. If interoperability raises consumer surplus, then both consumers and platforms favor higher \(\theta\), and complete interoperability is optimal. If interoperability lowers consumer surplus, the welfare comparison becomes nontrivial: stronger interoperability raises platform profits but may reduce consumer surplus. In that case, the welfare-optimal \(\theta\) is characterized by \eqref{eq:theta_tw_foc} together with endpoint comparisons.

This design perspective connects the pricing results to interoperability policy. A policy that increases interoperability does not simply create more network value. It also changes the intensity of price competition and the distribution of surplus between consumers and platforms. The socially desirable level of interoperability therefore depends on product substitutability, network strength, and the weight placed on consumer surplus relative to platform profits.

\section{Network Formation and Platform Incentives}
\label{sec:network_formation}

This section studies how interoperability affects consumers' and platforms' preferences over the density of the consumer network. The analysis follows the logic of \citet{ChenZenouZhou2018RAND}: stronger social connections increase consumers' direct network benefits, but they may either raise or reduce platform profits depending on how they affect the intensity of competition.

Let \(\mathcal G_n\) denote a class of feasible undirected networks on \(N\). For two networks \(\bm G,\bm G'\in\mathcal G_n\), write
\[
	\bm G'\succeq \bm G
\]
if \(g'_{ij}\geq g_{ij}\) for all \(i,j\in N\). Let \(\bm E_n\) denote the empty network and \(\bm P_n\) denote the complete network in \(\mathcal G_n\). Thus, \(\bm E_n\preceq \bm G\preceq \bm P_n\) for every \(\bm G\in\mathcal G_n\). We restrict attention to networks for which Assumption \ref{ass:stability} holds.

For consumers, denser networks are always beneficial in the present linear-quadratic environment. The reason is that social links directly increase the value of consumption and, through equilibrium pricing, do not overturn this positive consumption effect.

\begin{proposition}[Consumer preferences over network density]
\label{prop:consumer_network}
Suppose Assumptions \ref{ass:stability} and \ref{ass:platform_symmetry} hold. If \(\bm G'\succeq \bm G\), then
\[
	\bm x^*(\bm G';\beta,\delta,\theta)
	\geq
	\bm x^*(\bm G;\beta,\delta,\theta),
\]
and hence
\[
	CS^*(\bm G';\beta,\delta,\theta)
	\geq
	CS^*(\bm G;\beta,\delta,\theta).
\]
Consequently, consumers prefer the complete network:
\[
	\bm P_n
	\in
	\bm G^{CS}(\beta,\delta,\theta)
	:=
	\operatorname*{arg\,max}_{\bm G\in\mathcal G_n}
	CS^*(\bm G;\beta,\delta,\theta).
\]
\end{proposition}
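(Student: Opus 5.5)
The plan is to obtain an expression for $\bm x^*$ that is visibly monotone in the entries of $\bm G$, namely a nonnegative combination of Neumann series in $\bm G$ applied to the nonnegative vector $\bm r=\bm a-\bm c$. By Proposition \ref{prop:welfare_objects},
\[
\bm x^*=(\bm I-\delta\bm G)\bm D^{-1}\bm V^{-1}\bm r .
\]
Because $\bm D$, $\bm V$ and $\bm I-\delta\bm G$ are affine in $\bm G$ and commute, a partial-fraction decomposition in the scalar variable $g$ (applied eigenvalue-wise, or as an identity of commuting matrices) should simplify the operator.

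\textbf{Step 1 (partial fractions).} I look for constants $A,B$ with
\[
\bm I-\delta\bm G=A\bm V+B\bm D .
\]
Matching the coefficients of $\bm I$ and of $\bm G$ gives
\[
A(2-\beta)+B(1+\beta)=1,\qquad A(2-\theta)+B(1+\theta)=1 .
\]
These are solved by $A=B=\tfrac13$, since $(2-\beta)+(1+\beta)=3=(2-\theta)+(1+\theta)$, whatever the values of $\theta$ and $\beta$. Multiplying by $\bm D^{-1}\bm V^{-1}$ then yields
\[
\bm x^*=\tfrac13\left(\bm D^{-1}+\bm V^{-1}\right)\bm r .
\]
This is the key identity. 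It should be verified directly by multiplying out, which is routine.

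\textbf{Step 2 (Neumann series).} Under Assumption \ref{ass:stability} we have $\delta(1+\theta)\rho(\bm G)<1+\beta$, and by Lemma \ref{lem:stability_premium_region} also $\delta(2-\theta)\rho(\bm G)<2-\beta$. Hence
\[
\bm D^{-1}=\frac{1}{1+\beta}\sum_{k\ge0}\Big(\tfrac{\delta(1+\theta)}{1+\beta}\Big)^k\bm G^k,
\qquad
\bm V^{-1}=\frac{1}{2-\beta}\sum_{k\ge0}\Big(\tfrac{\delta(2-\theta)}{2-\beta}\Big)^k\bm G^k,
\]
and both series converge. All coefficients are nonnegative because $\delta\ge0$ and $\theta\le1<2$. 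If $\bm G'\succeq\bm G\ge0$ entrywise, then $(\bm G')^k\ge\bm G^k$ entrywise for every $k$. The networks under comparison are assumed to satisfy Assumption \ref{ass:stability}, so the series for $\bm G'$ converge as well. Term-by-term comparison then gives $\bm D'^{-1}\ge\bm D^{-1}$ and $\bm V'^{-1}\ge\bm V^{-1}$ entrywise. Since $\bm r\ge0$ because $a_i>c_i$, we conclude $\bm x^*(\bm G')\ge\bm x^*(\bm G)\ge0$.

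\textbf{Step 3 (consumer surplus and the complete network).} By \eqref{eq:individual_cs}, $CS^*=(1+\beta)\sum_i(x_i^*)^2$. Since $x_i^*\ge0$, each term is increasing in $x_i^*$, so $CS^*(\bm G')\ge CS^*(\bm G)$. Finally, $\bm P_n\succeq\bm G$ for every $\bm G\in\mathcal G_n$, so $\bm P_n$ attains the maximum of $CS^*$ over $\mathcal G_n$.

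The main obstacle is Step 1. A priori the centrality term carries the sign of $\theta-\beta$, so prices need not move monotonically with $\bm G$, and one might expect monotonicity of $\bm x^*$ to fail when $\theta<\beta$. The decomposition $\bm x^*=\tfrac13(\bm D^{-1}+\bm V^{-1})\bm r$ removes this concern, because the $\theta-\beta$ dependence cancels once prices are substituted into $\bm M^+(\bm a-\bm p^*)$. The identity should be checked carefully, including the degenerate case $\theta=\beta$. In that case $\bm D=(1+\beta)(\bm I-\delta\bm G)$, which gives $\bm x^*=\bm V^{-1}\bm r/(1+\beta)$, and this is consistent with the general formula.
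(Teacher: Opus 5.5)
Your proposal is correct and follows the paper's proof essentially line for line. The paper also writes $\bm I-\delta\bm G=\tfrac13(\bm D+\bm V)$ to obtain $\bm x^*=\tfrac13(\bm D^{-1}+\bm V^{-1})(\bm a-\bm c)$, expands both inverses as Neumann series with nonnegative coefficients, and concludes that $\bm x^*$, and hence $CS^*=(1+\beta)(\bm x^*)^\top\bm x^*$, is monotone in $\bm G$. You additionally spell out steps the paper leaves implicit: convergence of the $\bm V^{-1}$ series via Lemma \ref{lem:stability_premium_region}, and nonnegativity of $\bm r$ and $\bm x^*$.
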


Proposition \ref{prop:consumer_network} shows that interoperability does not overturn consumers' preference for denser social connections. Although interoperability changes equilibrium prices, a denser network still increases equilibrium consumption and consumer surplus. This result is consistent with the consumer-side conclusion in \citet{ChenZenouZhou2018RAND}.

The platform side is more subtle. Denser networks increase consumers' willingness to consume, but they may also intensify competition for influential consumers. Interoperability changes this trade-off because it reduces the exclusivity of network benefits. To isolate this mechanism, we consider the small-\(\delta\) expansion of platform profits. Let
\[
	\bm r=\bm a-\bm c.
\]
Under Assumptions \ref{ass:stability} and \ref{ass:platform_symmetry}, each platform's equilibrium profit admits the expansion
\begin{align}
\Pi^*(\bm G;\beta,\delta,\theta)
=
\frac{1-\beta}{(1+\beta)(2-\beta)^2}\bm r^\top\bm r
+
\frac{\delta}{(1+\beta)^2(2-\beta)^3}
\mathcal Q(\beta,\theta)\bm r^\top\bm G\bm r
+
\mathcal O(\delta^2),
\label{eq:profit_expansion}
\end{align}
where
\begin{align}
\mathcal Q(\beta,\theta)
=
\underbrace{2-3\beta-\beta^3}_{\text{substitution effect}}
+
\underbrace{2\theta(1-\beta+\beta^2)}_{\text{interoperability effect}} .
\label{eq:Q_function}
\end{align}

The first-order effect of network density on platform profits is governed by \(\mathcal Q(\beta,\theta)\). The term \(2-3\beta-\beta^3\) is decreasing in product substitutability and captures the competition-intensifying effect of denser networks. When products are close substitutes, denser networks make central consumers more valuable competitive targets, which can reduce platform profits. The term \(2\theta(1-\beta+\beta^2)\) captures the effect of interoperability. Greater interoperability makes network benefits less exclusive to either platform, mitigating the competitive cost of network expansion.

\begin{proposition}[Platform incentives over network density]
\label{prop:platform_network}
Suppose Assumptions \ref{ass:stability} and \ref{ass:platform_symmetry} hold. For any two networks \(\bm G,\bm G'\in\mathcal G_n\) such that \(\bm G'\succeq \bm G\) and \(\bm r^\top(\bm G'-\bm G)\bm r>0\), we have, for sufficiently small \(\delta>0\),
\[
	\operatorname{sign}
	\left\{
		\Pi^*(\bm G';\beta,\delta,\theta)
		-
		\Pi^*(\bm G;\beta,\delta,\theta)
	\right\}
	=
	\operatorname{sign}
	\left\{
		\mathcal Q(\beta,\theta)
	\right\}.
\]
Consequently, if
\[
	\mathcal Q(\beta,\theta)>0,
\]
then platforms prefer denser networks in the order \(\succeq\), and in particular prefer the complete network \(\bm P_n\). If
\[
	\mathcal Q(\beta,\theta)<0,
\]
then platforms prefer sparser networks in the order \(\succeq\), and in particular prefer the empty network \(\bm E_n\).
\end{proposition}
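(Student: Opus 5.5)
The plan is to derive the first-order expansion \eqref{eq:profit_expansion} rigorously and then read the sign of the profit difference off its linear term. By Proposition \ref{prop:welfare_objects}, $\Pi^*=\langle \bm r,\bm\Phi^{\Pi}\bm r\rangle$ with $\bm\Phi^{\Pi}=(\bm I-\delta\bm G)\bm D^{-1}\bm K\bm V^{-2}$. All four factors are analytic in $\delta$ near $0$: under Assumption \ref{ass:stability} the inverses exist, and for small $\delta$ they are Neumann series. I will therefore expand each factor to first order in $\delta$:
\[
\bm D^{-1}=\tfrac{1}{1+\beta}\Bigl[\bm I+\tfrac{\delta(1+\theta)}{1+\beta}\bm G\Bigr]+O(\delta^2),\qquad
\bm V^{-2}=\tfrac{1}{(2-\beta)^2}\Bigl[\bm I+\tfrac{2\delta(2-\theta)}{2-\beta}\bm G\Bigr]+O(\delta^2),
\]
\[
\bm K=(1-\beta)\bm I-\delta(1-\theta)\bm G .
\]
Because these factors are polynomials or power series in the single matrix $\bm G$, they commute, so the product is a scalar power series in $\bm G$. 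Its zeroth-order coefficient is $\frac{1-\beta}{(1+\beta)(2-\beta)^2}$. Its first-order coefficient is
\[
\frac{1}{(1+\beta)(2-\beta)^2}\Bigl[(1-\beta)\Bigl(-1+\tfrac{1+\theta}{1+\beta}+\tfrac{2(2-\theta)}{2-\beta}\Bigr)-(1-\theta)\Bigr].
\]
Putting this over the common denominator $(1+\beta)^2(2-\beta)^3$ should reproduce $\mathcal Q(\beta,\theta)=2-3\beta-\beta^3+2\theta(1-\beta+\beta^2)$. As checks, the $\theta$-coefficient is $(1-\beta)[(2-\beta)-2(1+\beta)]+(1+\beta)(2-\beta)=2-2\beta+2\beta^2$, which matches, and the $\theta=0$ part should recover the constant of \citet{ChenZenouZhou2018RAND}. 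The remainder is $O(\delta^2)$ in operator norm, uniformly over the finite set of networks involved.

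Next, I will write the profit difference for two fixed networks as
\[
\Pi^*(\bm G')-\Pi^*(\bm G)=\frac{\delta\,\mathcal Q}{(1+\beta)^2(2-\beta)^3}\,\bm r^\top(\bm G'-\bm G)\bm r+\bigl[R(\bm G')-R(\bm G)\bigr],
\]
where $|R(\cdot)|\le C\delta^2\|\bm r\|^2$ for $\delta$ small. The zeroth-order terms cancel because they do not depend on the network. If $\mathcal Q\neq0$ and $\bm r^\top(\bm G'-\bm G)\bm r>0$, the linear term dominates the remainder once $\delta$ is below a threshold depending on $\bm G,\bm G',\beta,\theta$. The sign of the difference then equals the sign of $\mathcal Q$. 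This threshold quantifier is what ``sufficiently small $\delta$'' means. The case $\mathcal Q=0$ is excluded from the consequences, so it needs no treatment.

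For the ``consequently'' part, I note that every $\bm G\in\mathcal G_n$ satisfies $\bm E_n\preceq\bm G\preceq\bm P_n$. I will apply the comparison to the pair $(\bm G,\bm P_n)$, or to $(\bm E_n,\bm G)$, whenever the quadratic form is strictly positive. When $\bm r^\top(\bm P_n-\bm G)\bm r=0$ the two profits agree to first order, so there is no strict ranking, and I would state the optimality claim as holding up to such ties. Under platform symmetry with $\bm r>0$ entrywise, which holds since $a_i>c_i$, the form is strictly positive whenever $\bm G'\neq\bm G$. Ties therefore do not arise, and because $\mathcal G_n$ is finite, a single $\delta$ threshold works for all pairs.

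The main obstacle is the algebra in the first step: obtaining exactly $\mathcal Q$ and the stated denominator from the product of the four expansions, with the factor $2$ from $\bm V^{-2}$ handled correctly. I would verify this symbolically against the regular-network formula \eqref{eq:regular_profit}, expanded in $q=\delta d$, to confirm the coefficient.
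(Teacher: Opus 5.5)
Your proposal is correct and follows the paper's argument. The paper expands $\bm x^*$ and $\bm p^*-\bm c$ separately to first order in $\delta$ and takes their inner product, obtaining $\mathcal Q=(\theta-\beta)(1+\beta)+(1-\beta)(2+\beta^2+\theta-2\theta\beta)$; you expand the factors of $\bm\Phi^{\Pi}$ directly, which is the same computation regrouped, and your bracket does reduce to $\mathcal Q$. Your explicit remainder and $\delta$-threshold bookkeeping is more careful than the paper, which reads the sign off the expansion without comment.

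One caveat: the paper allows weighted links, so $\mathcal G_n$ need not be finite. A uniform threshold still holds, though. The $O(\delta^2)$ remainder difference is bounded by $C\delta^2\|\bm G'-\bm G\|$, since all spectral radii are bounded by $\rho(\bm P_n)$. With $\bm r>0$ and $\bm G'-\bm G\geq 0$ entrywise, the linear term is at least a constant times $\delta\|\bm G'-\bm G\|$.
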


Equivalently, define the interoperability threshold
\begin{align}
	\bar\theta(\beta)
	=
	\frac{\beta^3+3\beta-2}{2(1-\beta+\beta^2)}.
\label{eq:theta_bar_network}
\end{align}
Then
\[
	\mathcal Q(\beta,\theta)>0
	\quad\Longleftrightarrow\quad
	\theta>\bar\theta(\beta).
\]
Let \(\hat\beta\) denote the unique root of
\[
	2-3\beta-\beta^3=0,
\]
so that \(\hat\beta\simeq0.596\). When \(\beta<\hat\beta\), we have \(\bar\theta(\beta)<0\), and platforms have incentives to support dense networks even without interoperability. When \(\beta>\hat\beta\), platforms may prefer sparse networks if interoperability is low, but a sufficiently high level of interoperability reverses this incentive. In particular, when \(\theta=1\),
\[
	\mathcal Q(\beta,1)>0
	\quad
	\text{for all } \beta\in[0,1),
\]
so fully interoperable platforms prefer denser consumer networks in the small-\(\delta\) region.

\begin{figure}[H]
	\centering
	\includegraphics[width=0.78\textwidth]{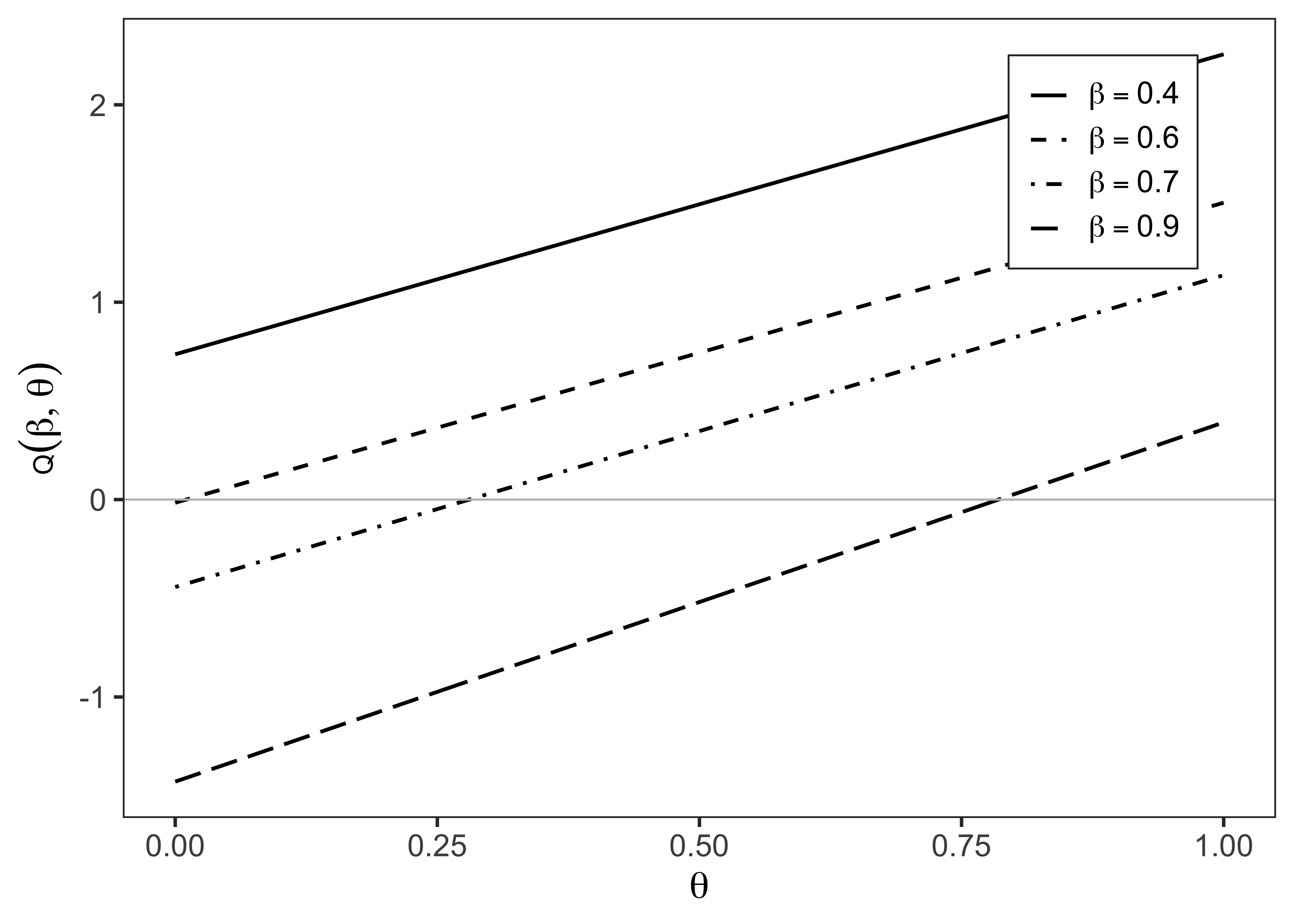}
	\caption{Interoperability and Network Formation Incentives}
	\label{fig:network_formation}
\end{figure}

Figure \ref{fig:network_formation} plots \(\mathcal Q(\beta,\theta)\) against interoperability \(\theta\) for different values of product substitutability \(\beta\). For each \(\beta\), \(\mathcal Q(\beta,\theta)\) increases linearly in \(\theta\), showing that interoperability systematically raises the profitability of network expansion. For relatively differentiated products, such as \(\beta=0.4\), \(\mathcal Q(\beta,\theta)\) is positive for all \(\theta\in[0,1]\), so platforms prefer denser consumer networks even without interoperability. For more substitutable products, such as \(\beta=0.9\), \(\mathcal Q(\beta,\theta)\) is negative when interoperability is low but becomes positive once interoperability is sufficiently high.

The economic message is that interoperability can change firms' attitudes toward the consumer network itself. Without interoperability, a denser network may intensify competition for central consumers and reduce profits when products are close substitutes. With strong interoperability, the same network becomes a source of shared cross-platform value. Thus, interoperability can transform platforms from suppressors of consumer connections into promoters of network formation.

\section{Uniform versus Discriminatory Pricing}
\label{sec:uniform_discriminatory}

The analysis so far allows platforms to charge consumer-specific prices. We now study the consequences of restricting platforms to uniform pricing. A uniform-pricing constraint means that each platform must charge the same price to all consumers, but it does not require the two platforms to charge the same price. Thus, platform \(A\) chooses a scalar \(p_u^A\) and charges \(p_u^A\bm 1\), while platform \(B\) chooses a scalar \(p_u^B\) and charges \(p_u^B\bm 1\).

This section compares the uniform-pricing regime with the discriminatory-pricing regime characterized in Section \ref{sec:equilibrium_pricing}. The comparison is useful for two reasons. First, it shows how the ability to price discriminate transforms network centrality into consumer-specific discounts or premia. Second, it clarifies the distributional consequences of banning third-degree price discrimination in interoperable platform markets. We proceed in four steps. We first derive the equilibrium under uniform pricing. We then identify benchmark cases in which uniform and discriminatory pricing are equivalent. Next, we compare prices and individual consumer surplus across network positions. Finally, we compare aggregate consumer surplus, platform profits, and total welfare using local approximations.

\subsection{Uniform-Pricing Equilibrium}
\label{subsec:uniform_equilibrium}

Throughout this subsection, suppose that platforms are restricted to charge uniform prices across consumers. For any pair of uniform prices \(p_u^A\) and \(p_u^B\), Proposition \ref{prop:consumption} gives
\begin{align*}
\bm x_u^A
&=
\frac{\bm M^++\bm M^-}{2}(\bm a-p_u^A\bm 1)
+
\frac{\bm M^+-\bm M^-}{2}(\bm a-p_u^B\bm 1),
\\
\bm x_u^B
&=
\frac{\bm M^++\bm M^-}{2}(\bm a-p_u^B\bm 1)
+
\frac{\bm M^+-\bm M^-}{2}(\bm a-p_u^A\bm 1),
\end{align*}
where we have imposed Assumption \ref{ass:platform_symmetry}. Although the uniform-pricing constraint does not require \(p_u^A=p_u^B\), symmetry across platforms implies that the equilibrium uniform prices are equal.

\begin{theorem}[Uniform-pricing equilibrium]
\label{thm:uniform_pricing}
Suppose Assumptions \ref{ass:stability} and \ref{ass:platform_symmetry} hold, and suppose both platforms are restricted to charge uniform prices to all consumers. Then the uniform-pricing game admits a unique symmetric equilibrium satisfying $p_u^A=p_u^B=p^u$ and the equilibrium uniform price is
\begin{align}
p^u
=
\frac{
	\langle \bm 1,\bm M^+(2\bm a+\bm c)\rangle
	+
	\langle \bm 1,\bm M^-\bm c\rangle
}{
	\langle \bm 1,(3\bm M^++\bm M^-)\bm 1\rangle
}.
\label{eq:uniform_price_general}
\end{align}
The corresponding equilibrium quantities satisfy
\[
	\bm x_u^A=\bm x_u^B=\bm x^u=\bm M^+(\bm a-p^u\bm 1).
\]
If Assumption \ref{ass:full_symmetry} also holds, then \eqref{eq:uniform_price_general} reduces to
\begin{align}
p^u
=
c
+
2(a-c)
\frac{
	\langle \bm 1,\bm M^+\bm 1\rangle
}{
	\langle \bm 1,(3\bm M^++\bm M^-)\bm 1\rangle
}.
\label{eq:uniform_price_full_symmetry}
\end{align}
\end{theorem}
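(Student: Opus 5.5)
The plan is to derive the first-order condition for platform $A$ given a uniform rival price $p_u^B$, verify strict concavity, and then impose symmetry. Write $\bm S=(\bm M^++\bm M^-)/2$ and $\bm T=(\bm M^+-\bm M^-)/2$. By Proposition \ref{prop:consumption} and Assumption \ref{ass:platform_symmetry}, we have $\bm x_u^A=\bm S(\bm a-p_u^A\bm 1)+\bm T(\bm a-p_u^B\bm 1)$. Platform $A$'s profit is therefore
\[
\Pi^A(p_u^A,p_u^B)=\langle p_u^A\bm 1-\bm c,\ \bm S(\bm a-p_u^A\bm 1)+\bm T(\bm a-p_u^B\bm 1)\rangle ,
\]
which is a quadratic function of the scalar $p_u^A$. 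Throughout, we restrict to the interior region where these quantities are nonnegative, as in the rest of the paper.

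\textbf{Concavity.} Since $\bm G$ is symmetric, $\bm M^+$ and $\bm M^-$ are symmetric with common eigenvectors. Their eigenvalues are $1/[(1+\beta)-\delta(1+\theta)\lambda]$ and $1/[(1-\beta)-\delta(1-\theta)\lambda]$ for eigenvalues $\lambda$ of $\bm G$ with $|\lambda|\le\rho(\bm G)$. Assumption \ref{ass:stability} makes both denominators strictly positive for every such $\lambda$. Hence $\bm M^\pm$ are positive definite, and so is $\bm S$. The second derivative of $\Pi^A$ in $p_u^A$ is $-2\langle\bm 1,\bm S\bm 1\rangle<0$, so $\Pi^A$ is strictly concave in $p_u^A$. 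The first-order condition
\[
\langle \bm 1,\bm x_u^A\rangle-\langle p_u^A\bm 1-\bm c,\bm S\bm 1\rangle=0
\]
is therefore necessary and sufficient for a best response. Here I use the symmetry of $\bm S$ to write $\langle \bm S\bm 1, p_u^A\bm 1-\bm c\rangle=\langle p_u^A\bm 1-\bm c,\bm S\bm 1\rangle$. The same reasoning applies to platform $B$.

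\textbf{Symmetric equilibrium.} Impose $p_u^A=p_u^B=p$. Then $\bm S+\bm T=\bm M^+$ gives $\bm x_u^A=\bm M^+(\bm a-p\bm 1)$, which is the stated quantity formula. Substituting into the first-order condition gives an equation that is linear in $p$:
\[
p\left(\langle\bm 1,\bm M^+\bm 1\rangle+\langle\bm 1,\bm S\bm 1\rangle\right)=\langle\bm 1,\bm M^+\bm a\rangle+\langle \bm 1,\bm S\bm c\rangle .
\]
Multiplying by $2$ and expanding $2\bm S=\bm M^++\bm M^-$ yields exactly \eqref{eq:uniform_price_general}. The coefficient on $p$ is $\tfrac12\langle\bm 1,(3\bm M^++\bm M^-)\bm 1\rangle$, which is strictly positive by positive definiteness. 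So the symmetric solution exists and is unique. By the concavity step, each platform's price is then a global best response to the other's, so this profile is indeed an equilibrium.

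\textbf{Full symmetry.} Under Assumption \ref{ass:full_symmetry}, set $\bm a=a\bm 1$ and $\bm c=c\bm 1$. The numerator of \eqref{eq:uniform_price_general} becomes $(2a+c)\langle\bm 1,\bm M^+\bm 1\rangle+c\langle\bm 1,\bm M^-\bm 1\rangle$. Writing this as $c\langle\bm 1,(3\bm M^++\bm M^-)\bm 1\rangle+2(a-c)\langle\bm 1,\bm M^+\bm 1\rangle$ and dividing by the denominator gives \eqref{eq:uniform_price_full_symmetry}.

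The only delicate step is the positive-definiteness argument, because it carries both concavity and the nonvanishing denominator. I would check it carefully, noting that negative eigenvalues of $\bm G$ only make the denominators $(1\pm\beta)-\delta(1\pm\theta)\lambda$ larger. Everything else is routine algebra.
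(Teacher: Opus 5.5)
Your argument is correct and follows essentially the paper's route. You differentiate $\Pi^A$ in the scalar $p_u^A$ using $\partial\bm x_u^A/\partial p_u^A=-\tfrac12(\bm M^++\bm M^-)\bm 1$, impose $p_u^A=p_u^B$ so that $\bm x^u=\bm M^+(\bm a-p^u\bm 1)$, solve the resulting linear equation, and rearrange the numerator under full symmetry. The one addition is that you use Assumption \ref{ass:stability} to show $\bm M^\pm$ are positive definite, which gives strict concavity of $\Pi^A$ and a positive denominator $\langle\bm 1,(3\bm M^++\bm M^-)\bm 1\rangle$; the paper's proof takes both for granted, though they are what make the first-order condition characterize best responses and the symmetric solution unique.
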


Theorem \ref{thm:uniform_pricing} shows that uniform pricing does not remove the network from the pricing problem. Even though platforms cannot condition prices on individual network positions, the equilibrium uniform price depends on the aggregate network terms
\[
	\langle \bm 1,\bm M^+\bm 1\rangle
	\quad\text{and}\quad
	\langle \bm 1,\bm M^-\bm 1\rangle.
\]
The first term captures how aggregate consumption is amplified by within-platform and cross-platform network externalities. The second captures the effect of substitution between the two platforms. Thus, under uniform pricing, the network affects prices through aggregate interaction intensity rather than through consumer-specific centrality.

To obtain a transparent expression, let
\[
	\bm d=\bm G\bm 1,
	\qquad
	d_i=\sum_{j=1}^n g_{ij},
	\qquad
	\bar d=\frac{1}{n}\bm 1^\top\bm G\bm 1
	=
	\frac{1}{n}\sum_{i=1}^n d_i.
\]
Thus, \(d_i\) is consumer \(i\)'s degree and \(\bar d\) is the average degree of the network.

\begin{corollary}[Local approximation of the uniform price]
\label{cor:uniform_price_approx}
Suppose Assumptions \ref{ass:stability} and \ref{ass:full_symmetry} hold, and suppose both platforms are restricted to charge uniform prices to all consumers. For sufficiently small \(\delta\),
\begin{align}
p^u
=
c
+
(a-c)
\left[
	\frac{1-\beta}{2-\beta}
	+
	\frac{\delta(\theta-\beta)}{(2-\beta)^2}\bar d
\right]
+
O(\delta^2).
\label{eq:uniform_price_approx}
\end{align}
\end{corollary}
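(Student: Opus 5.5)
The plan is to start from the full-symmetry expression \eqref{eq:uniform_price_full_symmetry} in Theorem \ref{thm:uniform_pricing}. Write
\[
	p^u=c+(a-c)\,R(\delta),\qquad R(\delta)=\frac{2A(\delta)}{3A(\delta)+B(\delta)},
\]
with $A(\delta)=\langle\bm 1,\bm M^+\bm 1\rangle/n$ and $B(\delta)=\langle\bm 1,\bm M^-\bm 1\rangle/n$. It then suffices to expand $R$ to first order in $\delta$.

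\emph{Step 1: Neumann expansions.} For small $\delta$, Assumption \ref{ass:stability} holds and the Neumann series converge. This gives
\[
	\bm M^+=\frac{1}{1+\beta}\Bigl[\bm I+\frac{\delta(1+\theta)}{1+\beta}\bm G\Bigr]+O(\delta^2),\qquad
	\bm M^-=\frac{1}{1-\beta}\Bigl[\bm I+\frac{\delta(1-\theta)}{1-\beta}\bm G\Bigr]+O(\delta^2).
\]
Using $\bm 1^\top\bm G\bm 1=n\bar d$, these yield
\[
	A=A_0+\delta A_1+O(\delta^2),\qquad B=B_0+\delta B_1+O(\delta^2),
\]
where
\[
	A_0=\frac{1}{1+\beta},\quad A_1=\frac{(1+\theta)\bar d}{(1+\beta)^2},\quad B_0=\frac{1}{1-\beta},\quad B_1=\frac{(1-\theta)\bar d}{(1-\beta)^2}.
\]

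\emph{Step 2: zeroth order.} We have $3A_0+B_0=2(2-\beta)/(1-\beta^2)$. Hence $R(0)=2A_0/(3A_0+B_0)=(1-\beta)/(2-\beta)$, which matches the constant term in \eqref{eq:uniform_price_approx}.

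\emph{Step 3: first order.} Differentiating the quotient gives $R'(0)=2(A_1B_0-A_0B_1)/(3A_0+B_0)^2$, since the $3A_1A_0$ cross terms cancel. Next,
\[
	A_1B_0-A_0B_1=\frac{\bar d}{1-\beta^2}\Bigl[\frac{1+\theta}{1+\beta}-\frac{1-\theta}{1-\beta}\Bigr]=\frac{2(\theta-\beta)\bar d}{(1-\beta^2)^2}.
\]
Dividing by $(3A_0+B_0)^2=4(2-\beta)^2/(1-\beta^2)^2$ yields $R'(0)=(\theta-\beta)\bar d/(2-\beta)^2$. Combining Steps 2 and 3 gives \eqref{eq:uniform_price_approx}.

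The only delicate point is bookkeeping in the quotient expansion. One must check that the remainder is genuinely $O(\delta^2)$, uniformly for $\delta$ near zero. This follows because $A$ and $B$ are analytic in $\delta$ near $0$ and the denominator $3A+B$ is bounded away from zero there. As a consistency check, the coefficient should coincide with the average over $i$ of the degree term in the discriminatory local price \eqref{eq:local_price_degree}.
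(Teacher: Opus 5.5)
Your proposal is correct and follows essentially the same route as the paper: both start from the full-symmetry uniform-price formula, expand $\langle\bm 1,\bm M^{\pm}\bm 1\rangle$ by Neumann series, and expand the resulting quotient to first order in $\delta$. The only difference is cosmetic. You use the quotient-rule derivative $R'(0)=2(A_1B_0-A_0B_1)/(3A_0+B_0)^2$, which makes the cancellation immediate, whereas the paper expands $(1+\delta\bar d R_2/R_1)^{-1}$ and simplifies term by term.
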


Corollary \ref{cor:uniform_price_approx} parallels the discriminatory-pricing formula in Section \ref{sec:equilibrium_pricing}. The benchmark term,
\[
	c+(a-c)\frac{1-\beta}{2-\beta},
\]
is the equilibrium price when network effects are locally shut down. The first-order network adjustment is governed by \(\theta-\beta\). When \(\theta>\beta\), interoperability dominates product substitution and the aggregate network raises the uniform price. When \(\theta<\beta\), the business-stealing force dominates and the aggregate network lowers the uniform price.

The role of uniform pricing is therefore not to eliminate network effects, but to aggregate them. Under discriminatory pricing, the first-order network adjustment depends on individual network position. Under uniform pricing, the corresponding adjustment depends on the average degree \(\bar d\). This observation will be useful below when comparing the distributional consequences of the two pricing regimes.

\subsection{Exact Equivalence Benchmarks}
\label{subsec:exact_equivalence}

We next identify cases in which the uniform-pricing and discriminatory-pricing regimes generate exactly the same equilibrium outcomes. These benchmarks clarify where the value of price discrimination comes from. Price discrimination matters only when network positions create heterogeneous pricing incentives. If either the pricing effect of network position disappears or all consumers occupy equivalent network positions, the two regimes coincide.

\begin{proposition}[Exact equivalence of pricing regimes]
\label{prop:exact_equivalence}
Suppose Assumptions \ref{ass:stability} and \ref{ass:full_symmetry} hold. The uniform-pricing and discriminatory-pricing regimes generate the same equilibrium prices, quantities, consumer surplus, platform profits, and total welfare in either of the following cases:

\begin{enumerate}[label=(\roman*)]
	\item \(\theta=\beta\);
	\item the network is \(d\)-regular, so that \(\bm G\bm 1=d\bm 1\).
\end{enumerate}
\end{proposition}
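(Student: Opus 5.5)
The strategy is to show that in both cases the discriminatory equilibrium price vector $\bm p^*$ of Proposition \ref{prop:pricing} is a constant multiple of $\bm 1$, and that this constant equals the uniform price $p^u$ of Theorem \ref{thm:uniform_pricing}. Once prices agree, quantities agree: in both regimes $\bm x=\bm M^+(\bm a-\bm p)$ in the symmetric equilibrium. Consumer surplus, profits and total welfare are then functions of $(\bm p,\bm x)$ alone (Proposition \ref{prop:welfare_objects}, with $CS_i^*=(1+\beta)(x_i^*)^2$), so they agree as well.

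The key step is a revealed-preference argument that avoids comparing closed forms. Suppose $\bm p^*=\bar p\bm 1$. Then in the discriminatory game each platform's best response to the rival's $\bar p\bm 1$ is $\bar p\bm 1$ over the whole space of price vectors. It is therefore also a best response within the smaller set $\{p\bm 1\}$, so $(\bar p,\bar p)$ is a symmetric equilibrium of the uniform game. By the uniqueness part of Theorem \ref{thm:uniform_pricing}, $p^u=\bar p$. As a consistency check, one can plug into \eqref{eq:uniform_price_full_symmetry}.

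It remains to verify that $\bm p^*\propto\bm 1$ in each case, under Assumption \ref{ass:full_symmetry}, so that $\bm a-\bm c=(a-c)\bm 1$.

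Case (i), $\theta=\beta$. The coefficient $(\theta-\beta)/[(2-\beta)(2-\theta)]$ in \eqref{eq:price_kb} vanishes. Hence $\bm p^*=\bm c+\frac{1-\theta}{2-\theta}(\bm a-\bm c)=\big[c+\frac{1-\beta}{2-\beta}(a-c)\big]\bm 1$. This holds for any network.

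Case (ii), $d$-regular. Here $\bm G\bm 1=d\bm 1$, so $(\bm I-\alpha\bm G)^{-1}\bm 1=\bm 1/(1-\alpha d)$, using $\alpha d\le\alpha\rho(\bm G)<1$. Then $\bm b\propto\bm 1$, which gives Corollary \ref{cor:regular_price}'s common price $p^*\bm 1$.

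The only delicate point is the revealed-preference step. It requires that the uniform-game best response be evaluated at the same consumption subgame, which holds because both regimes share Proposition \ref{prop:consumption}, and that the interior (nonnegativity) region be maintained. The latter is automatic since the outcome is the same. If uniqueness in Theorem \ref{thm:uniform_pricing} were only among symmetric equilibria, that suffices here, since $(\bar p,\bar p)$ is symmetric.

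As a fallback, one can verify $p^u$ directly in the regular case. Substitute $\langle\bm 1,\bm M^\pm\bm 1\rangle=n/m^\pm$ with $m^\pm$ as defined before Proposition \ref{prop:cs_profit_comparative} into \eqref{eq:uniform_price_full_symmetry}. This gives $p^u=c+2(a-c)m^-/(3m^-+m^+)$, and $3m^-+m^+=2v$ recovers \eqref{eq:regular_price}.
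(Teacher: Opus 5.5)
Your proof is correct. It uses the same two ingredients as the paper's argument, but weights them differently. The first ingredient is flatness of $\bm p^*$ in each case: the coefficient on $\bm b$ vanishes when $\theta=\beta$, and $\bm b\propto\bm 1$ when $\bm G\bm 1=d\bm 1$. The second is the observation that the uniform-pricing constraint is then nonbinding. The paper states that observation only informally. It actually establishes the equality by substituting into \eqref{eq:uniform_price_full_symmetry}: $\bm M^\pm=(1\pm\beta)^{-1}(\bm I-\delta\bm G)^{-1}$ at $\theta=\beta$, and $\bm M^\pm\bm 1=\bm 1/m^\pm$ in the regular case. You instead make the nonbinding step the proof itself, close it with uniqueness of the symmetric uniform equilibrium, and keep the substitution only as a check.

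Your route is formula-free. It proves equivalence whenever the discriminatory equilibrium price vector is a multiple of $\bm 1$, without evaluating $\langle\bm 1,\bm M^\pm\bm 1\rangle$, and it would work even under platform symmetry alone. You also explicitly carry the equality from prices to quantities and welfare, which the paper leaves implicit.

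The cost is that your route needs $\bm p^*$ to be a global best response, not merely a solution of the first-order conditions written in the proof of Proposition \ref{prop:pricing}. This does hold: each platform's profit has Hessian $-(\bm M^++\bm M^-)$ in its own price vector. That Hessian is negative definite under Assumption \ref{ass:stability}, because $\bm D$ and $\bm K$ are symmetric with eigenvalues bounded below by $(1+\beta)-\delta(1+\theta)\rho(\bm G)>0$ and $(1-\beta)-\delta(1-\theta)\rho(\bm G)>0$. You should state this explicitly. The paper's direct substitution avoids the issue entirely, since it needs only the two closed-form price formulas.
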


The first case follows from the centrality-reversal result. When
\[
	\theta=\beta,
\]
the network-position component of discriminatory prices vanishes. From Corollary \ref{cor:centrality_reversal},
\[
	p_i^*
	=
	\frac{a+c}{2}
	-
	\frac{\beta}{2(2-\beta)}(a-c)
	=
	c+\frac{1-\beta}{2-\beta}(a-c)
	\quad
	\text{for all } i.
\]
Thus, discriminatory pricing already produces a uniform price. The uniform-pricing constraint is therefore nonbinding.

The same conclusion can be verified from the uniform-pricing formula. When \(\theta=\beta\),
\[
	\bm M^+
	=
	\frac{1}{1+\beta}(\bm I-\delta\bm G)^{-1},
	\qquad
	\bm M^-
	=
	\frac{1}{1-\beta}(\bm I-\delta\bm G)^{-1}.
\]
Substituting these expressions into \eqref{eq:uniform_price_full_symmetry} gives
\[
	p^u
	=
	c+\frac{1-\beta}{2-\beta}(a-c),
\]
which is exactly the discriminatory price \(p_i^*\) for every consumer.

The second case shuts down heterogeneity in network position. If the network is \(d\)-regular, then
\[
	\bm G\bm 1=d\bm 1,
\]
so all consumers have the same Katz-Bonacich centrality under Assumption \ref{ass:full_symmetry}. Discriminatory pricing therefore yields a common price across consumers. In particular, from Corollary \ref{cor:regular_price},
\[
	p_i^*
	=
	c+
	\frac{(1-\beta)-\delta d(1-\theta)}
	{(2-\beta)-\delta d(2-\theta)}
	(a-c)
	\quad
	\text{for all } i.
\]
The uniform-pricing formula gives the same expression. Indeed, under a \(d\)-regular network,
\[
	\bm M^+\bm 1
	=
	\frac{1}{(1+\beta)-\delta d(1+\theta)}\bm 1,
	\qquad
	\bm M^-\bm 1
	=
	\frac{1}{(1-\beta)-\delta d(1-\theta)}\bm 1.
\]
Substituting these expressions into \eqref{eq:uniform_price_full_symmetry} yields
\[
	p^u
	=
	c+
	\frac{(1-\beta)-\delta d(1-\theta)}
	{(2-\beta)-\delta d(2-\theta)}
	(a-c)
	=
	p_i^*
	\quad
	\text{for all } i.
\]

Proposition \ref{prop:exact_equivalence} shows that the comparison between uniform and discriminatory pricing is meaningful only when two conditions are jointly present. First, network positions must be heterogeneous. Second, the network-position component of prices must be nonzero, which requires
\[
	\theta\neq\beta.
\]
If either condition fails, discriminatory pricing does not create centrality-based price dispersion, and imposing uniform pricing has no effect on equilibrium allocations or welfare.

\subsection{Price and Individual-Surplus Comparison}
\label{subsec:individual_comparison}

We now compare the two pricing regimes at the consumer level. The exact-equivalence benchmarks show that price discrimination matters only when network positions are heterogeneous and \(\theta\neq\beta\). In this subsection, we make this comparison explicit by using local approximations around \(\delta=0\). Throughout the subsection, suppose Assumptions \ref{ass:stability} and \ref{ass:full_symmetry} hold.

Recall from Corollary \ref{cor:uniform_price_approx} that
\[
p^u
=
c
+
(a-c)
\left[
	\frac{1-\beta}{2-\beta}
	+
	\frac{\delta(\theta-\beta)}{(2-\beta)^2}\bar d
\right]
+
O(\delta^2).
\]
The corresponding discriminatory price satisfies
\[
p_i^*
=
c
+
(a-c)
\left[
	\frac{1-\beta}{2-\beta}
	+
	\frac{\delta(\theta-\beta)}{(2-\beta)^2}d_i
\right]
+
O(\delta^2).
\]
Hence,
\begin{align}
p_i^*-p^u
=
(a-c)
\frac{\delta(\theta-\beta)}{(2-\beta)^2}
(d_i-\bar d)
+
O(\delta^2).
\label{eq:price_gap_uniform_discriminatory}
\end{align}

Equation \eqref{eq:price_gap_uniform_discriminatory} gives a simple interpretation of uniform pricing. To the first order in \(\delta\), the uniform price coincides with the simple average of discriminatory prices:
\[
p^u
=
\frac{1}{n}\sum_{i=1}^n p_i^*
+
O(\delta^2).
\]
Thus, uniform pricing does not primarily change the average price level at the first order. Instead, it removes the price dispersion generated by heterogeneous network positions. Under discriminatory pricing, the individual degree \(d_i\) enters the network adjustment; under uniform pricing, it is replaced by the average degree \(\bar d\).

The direction of this redistribution depends on \(\theta-\beta\). If \(\theta>\beta\), high-degree consumers face higher discriminatory prices than the uniform price, while low-degree consumers face lower discriminatory prices. If \(\theta<\beta\), the ranking is reversed: high-degree consumers receive discounts under discriminatory pricing, while low-degree consumers pay more than the uniform price.

The same logic applies to individual consumer surplus. Let
\[
	\tau=\frac{a-c}{(2-\beta)(1+\beta)},
	\qquad
	\epsilon=\frac{1+\theta}{1+\beta},
	\qquad
	\sigma=\frac{\theta-\beta}{2-\beta}.
\]
Using the first-order approximations of equilibrium consumption, we have
\begin{align}
x_i^*
&=
\tau
\left[
	1+\delta(\epsilon-\sigma)d_i
\right]
+
O(\delta^2),
\label{eq:x_discriminatory_approx}
\\
x_i^u
&=
\tau
\left[
	1+\delta\epsilon d_i-\delta\sigma\bar d
\right]
+
O(\delta^2).
\label{eq:x_uniform_approx}
\end{align}
Therefore,
\begin{align}
x_i^u-x_i^*
=
\tau\delta\sigma(d_i-\bar d)
+
O(\delta^2).
\label{eq:quantity_gap_uniform_discriminatory}
\end{align}

Since individual consumer surplus is
\[
	CS_i=(1+\beta)x_i^2,
\]
we obtain
\begin{align}
CS_i^u-CS_i^*
=
2(1+\beta)\tau^2\delta\sigma(d_i-\bar d)
+
O(\delta^2).
\label{eq:individual_cs_gap}
\end{align}
Thus, to the first order, the sign of the individual surplus effect is governed by
\[
	(\theta-\beta)(d_i-\bar d).
\]

\begin{proposition}[Individual effects of uniform pricing]
\label{prop:individual_uniform_discriminatory}
Suppose Assumptions \ref{ass:stability} and \ref{ass:full_symmetry} hold. For sufficiently small \(\delta\), the first-order comparison between uniform and discriminatory pricing satisfies
\[
	\operatorname{sign}(p_i^*-p^u)
	=
	\operatorname{sign}\{(\theta-\beta)(d_i-\bar d)\},
\]
and
\[
	\operatorname{sign}(CS_i^u-CS_i^*)
	=
	\operatorname{sign}\{(\theta-\beta)(d_i-\bar d)\},
\]
whenever \((\theta-\beta)(d_i-\bar d)\neq 0\).
\end{proposition}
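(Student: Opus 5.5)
The proof starts from the first-order expansions already recorded in the text. For prices, the discriminatory price follows from Corollary~\ref{cor:centrality_reversal}. Write $\bm b(\bm G,\alpha,\bm 1)=\bm 1+\alpha\bm G\bm 1+O(\delta^2)$ with $\alpha=\delta(2-\theta)/(2-\beta)$. Then $b_i-1=\alpha d_i+O(\delta^2)$, and the prefactor $(\theta-\beta)/[(2-\beta)(2-\theta)]$ cancels the $(2-\theta)$. This yields \eqref{eq:local_price_degree}. For the uniform price, Corollary~\ref{cor:uniform_price_approx} gives \eqref{eq:uniform_price_approx}. If a self-contained check is wanted, expand $\bm M^\pm=\frac{1}{1\pm\beta}\bigl[\bm I+\frac{\delta(1\pm\theta)}{1\pm\beta}\bm G\bigr]+O(\delta^2)$ inside \eqref{eq:uniform_price_full_symmetry}, using $\langle\bm 1,\bm G\bm 1\rangle=n\bar d$. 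Subtracting the two expansions gives \eqref{eq:price_gap_uniform_discriminatory}: the $\delta$-order coefficient is $(a-c)(\theta-\beta)(d_i-\bar d)/(2-\beta)^2$.

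For quantities, I would use Proposition~\ref{prop:consumption} with symmetric prices, which gives $\bm x=\bm M^+(\bm a-\bm p)$. First take $\bm M^+\bm 1=\frac{1}{1+\beta}\bigl[\bm 1+\delta\epsilon\,\bm d\bigr]+O(\delta^2)$. Next take the markup $a-p_i=(a-c)\bigl[\frac{1}{2-\beta}-\frac{\delta(\theta-\beta)}{(2-\beta)^2}d_i\bigr]+O(\delta^2)$, or the analogous expression with $\bar d$ for the uniform regime. Multiplying and dropping $O(\delta^2)$ terms gives \eqref{eq:x_discriminatory_approx} and \eqref{eq:x_uniform_approx}. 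Their difference is $\tau\delta\sigma(d_i-\bar d)$. Because $CS_i=(1+\beta)x_i^2$, the surplus gap is $(1+\beta)(x_i^u-x_i^*)(x_i^u+x_i^*)$. Since $x_i^u+x_i^*=2\tau+O(\delta)$, this equals $2(1+\beta)\tau^2\delta\sigma(d_i-\bar d)+O(\delta^2)$, which is \eqref{eq:individual_cs_gap}.

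The sign conclusion then uses positivity of the constants. We have $a-c>0$ and $\tau>0$, and $(2-\beta)>0$, $(1+\beta)>0$, $\delta>0$. Hence both leading coefficients are a positive multiple of $\delta(\theta-\beta)(d_i-\bar d)$.

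The main obstacle is turning "first order" into an honest sign statement for small $\delta$. The leading term is $\delta\,\kappa(\theta-\beta)(d_i-\bar d)$ with $\kappa>0$ fixed. Since $\theta$, $\beta$ and $\bm G$ are held fixed while $\delta\to0$, the remainder is bounded by $C\delta^2$, with $C$ depending on $(\beta,\theta,\bm G,a,c)$. This holds because all the matrices involved are analytic in $\delta$ near $0$ by Assumption~\ref{ass:stability}. Hence, whenever $(\theta-\beta)(d_i-\bar d)\neq0$, choosing $\delta<\kappa|(\theta-\beta)(d_i-\bar d)|/C$ makes the leading term dominate. The threshold on $\delta$ may depend on $i$, but since $N$ is finite we can take the minimum over $i$. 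The case $\delta=0$ is excluded, since both gaps then vanish trivially.
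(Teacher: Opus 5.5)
Your proposal is correct and follows the paper's route. Like the paper, you expand $p_i^*$, $p^u$, $x_i^*$ and $x_i^u$ to first order in $\delta$, subtract to get the price and quantity gaps, and then use $CS_i=(1+\beta)x_i^2$ to read off both signs from the positive multiple of $\delta(\theta-\beta)(d_i-\bar d)$. Your explicit remainder bound (analyticity in $\delta$, the threshold on $\delta>0$, and the minimum over the finitely many $i$) adds rigor, since the paper states the signs from the first-order terms without controlling the $O(\delta^2)$ remainder.
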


Proposition \ref{prop:individual_uniform_discriminatory} highlights the distributional effect of restricting price discrimination. When interoperability is strong relative to product substitution, \(\theta>\beta\), central consumers are charged premia under discriminatory pricing. A uniform-pricing constraint then protects high-degree consumers by replacing their individual network position with the network average. Low-degree consumers, by contrast, lose the low prices they would receive under discriminatory pricing.

When product substitution dominates interoperability, \(\theta<\beta\), the opposite occurs. Central consumers receive influence-based discounts under discriminatory pricing because platforms compete aggressively for them. Uniform pricing removes these discounts and therefore hurts high-degree consumers, while benefiting low-degree consumers. Hence, a ban on discriminatory pricing does not have a uniform distributional effect across consumers. Its incidence depends jointly on interoperability and network position.

\subsection{Aggregate Welfare Comparison}
\label{subsec:aggregate_comparison}

We finally compare aggregate consumer surplus, platform profits, and total welfare under the two pricing regimes. The comparison in this subsection is based on the first-order approximations of equilibrium prices and quantities derived above. Accordingly, we use tildes to denote the induced approximate welfare measures. These objects are not exact second-order expansions of welfare, because the second-order terms of equilibrium prices and quantities are not included. They should therefore be interpreted as approximate welfare measures generated by the first-order local equilibrium.

Let
\[
	\widetilde{CS}^*,
	\quad
	\widetilde{\Pi}^*,
	\quad
	\widetilde{TW}^*
\]
denote the approximate consumer surplus, single-platform profit, and total welfare under discriminatory pricing. Similarly, let
\[
	\widetilde{CS}^u,
	\quad
	\widetilde{\Pi}^u,
	\quad
	\widetilde{TW}^u
\]
denote the corresponding approximate welfare measures under uniform pricing. Since there are two symmetric platforms,
\[
	\widetilde{TW}^*
	=
	\widetilde{CS}^*
	+
	2\widetilde{\Pi}^*,
	\qquad
	\widetilde{TW}^u
	=
	\widetilde{CS}^u
	+
	2\widetilde{\Pi}^u.
\]

Define the degree dispersion term
\[
	V_d
	=
	\sum_{i=1}^n(d_i-\bar d)^2
	=
	\sum_{i=1}^n d_i^2-n\bar d^2.
\]
This term is zero if and only if the network is regular. It captures the extent to which consumers differ in their local network positions.

\begin{proposition}[Approximate aggregate comparison of pricing regimes]
\label{prop:aggregate_uniform_discriminatory}
Suppose Assumptions \ref{ass:stability} and \ref{ass:full_symmetry} hold. Based on the first-order approximations of equilibrium prices and quantities, the approximate aggregate welfare differences satisfy
\begin{align}
\widetilde{CS}^u-\widetilde{CS}^*
&=
\frac{\delta^2(a-c)^2}{(2-\beta)^2(1+\beta)}
\sigma(2\epsilon-\sigma)V_d,
\label{eq:approx_cs_gap}
\\
\widetilde{\Pi}^u-\widetilde{\Pi}^*
&=
\frac{\delta^2(a-c)^2}{(2-\beta)^2(1+\beta)}
\sigma(\sigma-\epsilon)V_d,
\label{eq:approx_profit_gap}
\\
\widetilde{TW}^u-\widetilde{TW}^*
&=
\frac{\delta^2(a-c)^2}{(2-\beta)^2(1+\beta)}
\sigma^2 V_d,
\label{eq:approx_tw_gap}
\end{align}
where
\[
	\epsilon=\frac{1+\theta}{1+\beta},
	\qquad
	\sigma=\frac{\theta-\beta}{2-\beta}.
\]
\end{proposition}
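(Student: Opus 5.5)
The plan is to define the approximate welfare measures explicitly from the first-order equilibrium objects and then expand the differences around the average degree. Concretely, I take \(\widetilde{CS}=(1+\beta)\sum_i x_i^2\), using \eqref{eq:individual_cs} with \(x_i\) replaced by its first-order approximation \eqref{eq:x_discriminatory_approx} or \eqref{eq:x_uniform_approx}. I take \(\widetilde{\Pi}=\sum_i (p_i-c)x_i\), with the first-order price from \eqref{eq:local_price_degree} or Corollary \ref{cor:uniform_price_approx} and the corresponding first-order quantity. All products are kept exactly, including the \(\delta^2\) terms they generate; no higher-order terms of prices or quantities are added. This is the sense of the tildes.

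The key device is to write \(d_i=\bar d+e_i\) with \(\sum_i e_i=0\) and \(\sum_i e_i^2=V_d\).

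\emph{Consumer surplus.} Setting \(A=1+\delta(\epsilon-\sigma)\bar d\), the two quantity vectors become
\[
x_i^*=\tau\bigl[A+\delta(\epsilon-\sigma)e_i\bigr],
\qquad
x_i^u=\tau\bigl[A+\delta\epsilon e_i\bigr].
\]
Squaring and summing, the cross terms \(2A\delta(\cdot)e_i\) vanish because \(\sum_i e_i=0\). What remains is
\[
\tau^2\delta^2V_d\bigl[\epsilon^2-(\epsilon-\sigma)^2\bigr]=\tau^2\delta^2V_d\,\sigma(2\epsilon-\sigma).
\]
Multiplying by \(1+\beta\) and using \((1+\beta)\tau^2=(a-c)^2/[(2-\beta)^2(1+\beta)]\) gives \eqref{eq:approx_cs_gap}.

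\emph{Profit.} Write the markups as \(m_i^u=M\) and \(m_i^*=M+(a-c)\frac{\delta\sigma}{2-\beta}e_i\), where \(M=(a-c)\bigl[\frac{1-\beta}{2-\beta}+\frac{\delta\sigma}{2-\beta}\bar d\bigr]\). Then
\[
\widetilde\Pi^u-\widetilde\Pi^*=M\sum_i(x_i^u-x_i^*)-(a-c)\frac{\delta\sigma}{2-\beta}\sum_i e_i x_i^* .
\]
The first sum is \(\tau\delta\sigma\sum_i e_i=0\) by \eqref{eq:quantity_gap_uniform_discriminatory}. In the second sum only the \(e_i^2\) part of \(x_i^*\) survives, leaving
\[
-(a-c)\tau\delta^2\sigma(\epsilon-\sigma)V_d/(2-\beta).
\]
Since \((a-c)\tau/(2-\beta)\) equals the same constant \((a-c)^2/[(2-\beta)^2(1+\beta)]\), this is \eqref{eq:approx_profit_gap}.

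\emph{Total welfare.} Combining the two with \(\widetilde{TW}=\widetilde{CS}+2\widetilde\Pi\) gives the coefficient \(\sigma(2\epsilon-\sigma)+2\sigma(\sigma-\epsilon)=\sigma^2\), which is \eqref{eq:approx_tw_gap}.

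The main obstacle is bookkeeping rather than conceptual. First, I must fix precisely which approximations enter: the proposition is only true as an identity for these "first-order-induced" measures, and I will state this definition up front. Second, I must check that every \(O(\delta)\) term and every term involving \(\bar d\) cancels through \(\sum_i e_i=0\). The cleanest safeguard is to center everything at \(\bar d\) before expanding, as above, so that all \(\bar d\)-dependence sits in the common pieces \(A\) and \(M\), which are multiplied only by \(\sum_i e_i\).
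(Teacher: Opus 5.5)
Your proposal is correct and follows the paper's route. Like the paper, you define the tilde measures by plugging the first-order prices and quantities into $(1+\beta)\sum_i x_i^2$ and $\sum_i (p_i-c)x_i$, keep the resulting $\delta^2$ cross terms, and subtract. The centering $d_i=\bar d+e_i$ is only tidier bookkeeping for the paper's expansion in $\sum_i d_i$, $\sum_i d_i^2$ and $n\bar d^2$, and your check that $\sigma(2\epsilon-\sigma)+2\sigma(\sigma-\epsilon)=\sigma^2$ supplies the total-welfare line that the paper's proof leaves implicit.
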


Because
\[
	2\epsilon-\sigma
	=
	\frac{4-\beta+\beta^2+3\theta(1-\beta)}
	{(1+\beta)(2-\beta)}
	>
	0,
\]
the sign of \(\widetilde{CS}^u-\widetilde{CS}^*\) is governed by \(\sigma\), or equivalently by \(\theta-\beta\). Thus,
\[
	\theta>\beta
	\quad\Longrightarrow\quad
	\widetilde{CS}^u>\widetilde{CS}^*
\]
while
\[
	\theta<\beta
	\quad\Longrightarrow\quad
	\widetilde{CS}^u<\widetilde{CS}^*.
\]
whenever the network is non-regular.

Consumers as a group prefer uniform pricing when interoperability dominates product substitution, because discriminatory pricing then extracts more surplus from central consumers. They prefer discriminatory pricing when product substitution dominates interoperability, because price discrimination then takes the form of competitive discounts to central consumers.

For platform profits, note that
\[
	\sigma-\epsilon
	=
	-\frac{2(1-\theta\beta)+\theta+\beta}
	{(1+\beta)(2-\beta)}
	<
	0.
\]
Hence, the sign of \(\widetilde{\Pi}^u-\widetilde{\Pi}^*\) is opposite to the sign of \(\theta-\beta\). When \(\theta>\beta\), platforms prefer discriminatory pricing, because it allows them to charge premia to well-connected consumers. When \(\theta<\beta\), platforms prefer uniform pricing, because discriminatory pricing intensifies competition for central consumers and transfers surplus to them through discounts.

The total-welfare comparison is more direct. Equation \eqref{eq:approx_tw_gap} implies
\[
	\widetilde{TW}^u-\widetilde{TW}^*
	\geq 0,
\]
with strict inequality whenever
\[
	\theta\neq\beta
	\quad\text{and}\quad
	V_d>0.
\]
Thus, within this local approximation, uniform pricing weakly increases total welfare. The reason is that discriminatory pricing reallocates consumption toward consumers with more favorable pricing positions. This redistribution benefits either consumers or platforms depending on the sign of \(\theta-\beta\), but the induced price dispersion creates a second-order welfare loss relative to uniform pricing.

Proposition \ref{prop:aggregate_uniform_discriminatory} therefore delivers a simple policy message. The distributional effect of banning discriminatory pricing depends on whether interoperability or product substitution is stronger. If \(\theta>\beta\), a ban benefits consumers but hurts platforms. If \(\theta<\beta\), a ban hurts consumers but benefits platforms. However, in the approximate total-welfare comparison, uniform pricing weakly dominates discriminatory pricing whenever network positions are heterogeneous and the centrality-pricing component is active.


\appendix

\section{Proofs for the Model and Consumption Stage}
\label{app:model_consumption}

\subsection{Proof of Lemma \ref{lem:stability_premium_region}}

We need to show that Assumption \ref{ass:stability} implies
\[
	\delta(2-\theta)\rho(\bm G)<2-\beta.
\]
There are two cases.

First, suppose \(\theta\geq\beta\). Then
\[
	\frac{1+\beta}{1+\theta}
	\leq
	\frac{2-\beta}{2-\theta},
\]
because this inequality is equivalent to \(\beta\leq\theta\). Hence,
\[
	\delta(1+\theta)\rho(\bm G)<1+\beta
\]
implies
\[
	\delta\rho(\bm G)
	<
	\frac{1+\beta}{1+\theta}
	\leq
	\frac{2-\beta}{2-\theta}.
\]
Therefore,
\[
	\delta(2-\theta)\rho(\bm G)<2-\beta.
\]

Second, suppose \(\theta<\beta\). Then
\[
	\frac{1-\beta}{1-\theta}
	<
	\frac{2-\beta}{2-\theta},
\]
because this inequality is equivalent to \(\theta<\beta\). Hence,
\[
	\delta(1-\theta)\rho(\bm G)<1-\beta
\]
implies
\[
	\delta\rho(\bm G)
	<
	\frac{1-\beta}{1-\theta}
	<
	\frac{2-\beta}{2-\theta}.
\]
Therefore,
\[
	\delta(2-\theta)\rho(\bm G)<2-\beta.
\]

Thus, in all cases,
\[
	\delta(2-\theta)\rho(\bm G)<2-\beta.
\]
Since \(\bm G\) is symmetric, this condition guarantees that
\[
	(2-\beta)\bm I-\delta(2-\theta)\bm G
\]
is nonsingular. The proof is complete.

\subsection{Proof of Proposition \ref{prop:consumption}}

\begin{proof}
According to FOC, maximizing the utility function of consumer i with respect to $x_i^k$, and we can yeild the equilibrium consumption as:
\begin{align}
	\left\{
	\begin{aligned}
		\frac{\partial u_i}{\partial x_i^A}=-x_i^A-\beta x_i^B +a_i^A -p_i^A + \delta \mathop{\Sigma}_{j \in N}g_{ij}x_j^A + \theta \delta \mathop{\Sigma}_{j \in N} g_{ij}x^B_j =0
		\\
		\frac{\partial u_i}{\partial x_i^B}=-x_i^B-\beta x_i^A +a_i^B -p_i^B + \delta \mathop{\Sigma}_{j \in N}g_{ij}x_j^B + \theta \delta \mathop{\Sigma}_{j \in N} g_{ij}x^A_j =0
	\end{aligned}
	\right.  ,
	\label{equilibrium_consumption_FOC_1}
\end{align}
and equation \eqref{equilibrium_consumption_FOC_1} can be expressed as the form of vector as 
	\begin{align}
	\left\{
	\begin{aligned}
		\bm{x}^A + \beta\bm{x}^B = \bm{a}^A - \bm{p}^A + \delta\bm{G}\bm{x}^A + \theta \delta\bm{G}\bm{x}^B
		\\
		\bm{x}^B + \beta\bm{x}^A = \bm{a}^B - \bm{p}^B + \delta\bm{G}\bm{x}^B + \theta \delta\bm{G}\bm{x}^A
	\end{aligned}
	\right.  .
	\label{equilibrium_consumption_FOC_2}
\end{align}
Perform the sum and difference operations on the two equations in \eqref{equilibrium_consumption_FOC_2}, we obtain
\begin{align*}
	\left\{
	\begin{aligned}
		(1 + \beta)(\bm{x}^A + \bm{x}^B) = (\bm{a}^A + \bm{a}^B) - (\bm{p}^A + \bm{p}^B)+\delta(1+\theta)\bm{G}(\bm{x}^A + \bm{x}^B)
		\\
		(1 - \beta)(\bm{x}^A - \bm{x}^B) = (\bm{a}^A - \bm{a}^B) - (\bm{p}^A - \bm{p}^B)+\delta(1-\theta)\bm{G}(\bm{x}^A - \bm{x}^B)
	\end{aligned}
	\right.  .
\end{align*}
Therefore, there is
\begin{align*}
	\left\{
	\begin{aligned}
		(\bm{x}^A + \bm{x}^B) = \left[(1 + \beta)- \delta(1+\theta)\bm{G}\right]^{-1}\left[(\bm{a}^A + \bm{a}^B) - (\bm{p}^A + \bm{p}^B)\right]
		\\
		(\bm{x}^A - \bm{x}^B) = \left[(1 - \beta)- \delta(1-\theta)\bm{G}\right]^{-1}\left[(\bm{a}^A - \bm{a}^B) - (\bm{p}^A - \bm{p}^B)\right]
	\end{aligned}
	\right.  ,
\end{align*}
substituting $\bm{M}^+$ and $\bm{M}^-$ into the above equation gives
\begin{align*}
	\left\{
	\begin{aligned}
		(\bm{x}^A + \bm{x}^B) = \bm{M}^+\left[(\bm{a}^A + \bm{a}^B) - (\bm{p}^A + \bm{p}^B)\right]
		\\
		(\bm{x}^A - \bm{x}^B) = \bm{M}^-\left[(\bm{a}^A - \bm{a}^B) - (\bm{p}^A - \bm{p}^B)\right]
	\end{aligned}
	\right.  .
\end{align*}

Finally, we can obtain the unique solution of equilibrium consumption as 
\begin{align}
	\left\{
	\begin{aligned}
		\bm{x}^A = \frac{1}{2}\left\{ \bm{M}^+\left[ (\bm{a}^A + \bm{a}^B) - (\bm{p}^A + \bm{p}^B)\right]  + \bm{M}^-\left[ (\bm{a}^A - \bm{a}^B) - (\bm{p}^A - \bm{p}^B) \right]   \right\}
		\\
		\bm{x}^B = \frac{1}{2}\left\{ \bm{M}^+\left[ (\bm{a}^A + \bm{a}^B) - (\bm{p}^A + \bm{p}^B)\right]  - \bm{M}^-\left[ (\bm{a}^A - \bm{a}^B) - (\bm{p}^A - \bm{p}^B) \right]   \right\}
	\end{aligned}
	\right.  ,
\end{align}
by simplifying the above equation, we can obtain the \eqref{eq:consumption_equilibrium} mentioned in the main text. The proof is complete.
\end{proof}

\section{Proofs for Equilibrium Pricing}
\label{app:pricing}

\subsection{Proof of Proposition \ref{prop:pricing}}

\begin{proof}
Before proving this proposition, we need to introduce a lemma.
\begin{lemma}
	If there is a $n \times n$ matrix $\bm{G}$ and a constant $\alpha$, then we have
	\begin{align*}
		\bm{G}(\bm{I}-\alpha\bm{G})^{-1}=\frac{1}{\alpha}\left[ (\bm{I}-\alpha\bm{G})^{-1} - \bm{I} \right]
	\end{align*}
	\label{lem:proof_of_equilbrium_price}
\end{lemma}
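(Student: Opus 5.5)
The plan is to verify the identity by multiplying through by the matrix \((\bm I-\alpha\bm G)\), under the standing assumption that \(\alpha\neq 0\) and \(\bm I-\alpha\bm G\) is invertible. In the application this holds because \(\alpha=\delta(2-\theta)/(2-\beta)\) and \(\alpha\rho(\bm G)<1\) by Lemma \ref{lem:stability_premium_region}. When \(\delta=0\) the identity is not needed, since the network term vanishes directly.

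First, I will note that \(\bm G\) commutes with \(\bm I-\alpha\bm G\), because the latter is a polynomial in \(\bm G\). Hence \(\bm G\) also commutes with \((\bm I-\alpha\bm G)^{-1}\). This follows by multiplying \(\bm G(\bm I-\alpha\bm G)=(\bm I-\alpha\bm G)\bm G\) on both sides by the inverse.

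Second, I will use the elementary decomposition
\[
\alpha\bm G=\bm I-(\bm I-\alpha\bm G).
\]
Right-multiplying by \((\bm I-\alpha\bm G)^{-1}\) gives
\[
\alpha\bm G(\bm I-\alpha\bm G)^{-1}=(\bm I-\alpha\bm G)^{-1}-\bm I.
\]
Dividing by \(\alpha\neq0\) then yields the stated identity. This step does not actually need commutativity, but recording commutativity is useful because the identity will later be applied with \(\bm G\) appearing on either side.

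There is no real obstacle here. The only care point is the hypothesis: the statement as written omits invertibility and \(\alpha\neq0\), so I will make both explicit. I will also remark that in the pricing proof this identity is what converts the term \(\bm G\,\bm b\) into \(\frac{1}{\alpha}(\bm b-\bm y)\). That conversion is the source of the subtraction of \(\bm a-\bm c\) in \eqref{eq:price_decomposition}.
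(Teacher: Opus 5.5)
Your proof is correct, but it takes a different route from the paper's.

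The paper expands $(\bm I-\alpha\bm G)^{-1}$ as the Neumann series $\sum_{k\ge 0}\alpha^k\bm G^k$. It then multiplies by $\bm G$ term by term and matches the result with $\frac{1}{\alpha}\bigl[\sum_{k\ge 0}\alpha^k\bm G^k-\bm I\bigr]$. That argument implicitly needs $\alpha\rho(\bm G)<1$ so the series converges, as well as $\alpha\neq 0$. What it buys is a direct link to the walk-counting interpretation of Katz--Bonacich centrality.

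You instead use the identity $\alpha\bm G=\bm I-(\bm I-\alpha\bm G)$ and right-multiply by the inverse. This is purely algebraic: it holds whenever $\bm I-\alpha\bm G$ is invertible and $\alpha\neq 0$, with no spectral-radius or convergence condition. It also states explicitly the hypotheses that the lemma as written omits. In the pricing application with $\delta>0$, Lemma~\ref{lem:stability_premium_region} gives $\alpha\rho(\bm G)<1$, so either proof suffices there; yours is shorter and more general.

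Your remark that the case $\delta=0$ must be handled separately, because the bracket $\bm b-(\bm a-\bm c)$ then vanishes, is also apt. The paper divides by $\delta(2-\theta)$ in the proof of Proposition~\ref{prop:pricing} without addressing that case.
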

Proof of Lemma \ref{lem:proof_of_equilbrium_price} is very intuitive. As we already know that 
\begin{align*}
	(\bm{I} - \alpha\bm{G})^{-1}=\mathop{\Sigma}_{k=0}^{+\infty}\alpha^k\bm{G}^k=\bm{I} + \alpha\bm{G} + \alpha^2\bm{G}^2+.....+\alpha^n\bm{G}^n+.....,
\end{align*}
so
\begin{align*}
	\bm{G}	(\bm{I} - \alpha\bm{G})^{-1} = \bm{G} + \alpha\bm{G}^2+.....+\alpha^n\bm{G}^{n+1}+.....
\end{align*}
It is obvious that $\bm{G}(\bm{I}-\alpha\bm{G})^{-1}=\frac{1}{\alpha}\left[ (\bm{I}-\alpha\bm{G})^{-1} - \bm{I} \right] $, and the Lemma has been proved. Now we can focus on the proof of Proposition \ref{prop:pricing}.

According to \eqref{eq:profit}, the total profits of platform $A$ should be
\begin{align*}
	\begin{aligned}
		\Pi^A = \left< \bm{p}^A-\bm{c}^A, \bm{x}^A \right>
		=\left< \bm{p}^A-\bm{c}^A, \frac{\bm{M}^+ + \bm{M}^-}{2}(\bm{a}^A-\bm{p}^A) + \frac{\bm{M}^+ - \bm{M}^-}{2}(\bm{a}^B - \bm{p}^B) \right>
	\end{aligned},
\end{align*}
therefore, 
\begin{align*}
	\begin{aligned}
		\frac{\partial \Pi^A}{\partial \bm{p}^A}=\frac{\bm{M}^+ + \bm{M}^-}{2}(\bm{a}^A-\bm{p}^A) + \frac{\bm{M}^+ - \bm{M}^-}{2}(\bm{a}^B - \bm{p}^B)- \frac{\bm{M}^+ + \bm{M}^-}{2}(\bm{p}^A-\bm{c}^A)=0
	\end{aligned},
\end{align*}
then we obtain
\begin{align}
	\left\{
	\begin{aligned}
		\bm{p}^A=\frac{1}{2}(\bm{a}^A+\bm{c}^A) + \frac{1}{2}\left( \bm{M}^+ + \bm{M}^- \right)^{-1}\left( \bm{M}^+ - \bm{M}^- \right)(\bm{a}^B - \bm{p}^B)
		\\
		\bm{p}^B=\frac{1}{2}(\bm{a}^B+\bm{c}^B) + \frac{1}{2}\left( \bm{M}^+ + \bm{M}^- \right)^{-1}\left( \bm{M}^+ - \bm{M}^- \right)(\bm{a}^A - \bm{p}^A)
	\end{aligned}.
	\right.
	\label{proof_of_EP_1}
\end{align}
Assumptions \ref{ass:stability} and \ref{ass:platform_symmetry} hold, there exists a unique symmetric equilibrium satisfying $\bm{p}^A = \bm{p}^B = \bm{p}$. According to \eqref{proof_of_EP_1}, the equation should be
\begin{align*}
	\begin{aligned}
		(\bm{M}^+ + \bm{M}^-)(\bm{p} - \bm{c}) = (\bm{M}^+ + \bm{M}^-)(\bm{a}-\bm{p}) + (\bm{M}^+ - \bm{M}^-)(\bm{a}-\bm{p})
	\end{aligned}.
\end{align*}
By combining like terms and moving terms in the above equation, we can obtain
\begin{align}
	\begin{aligned}
		(3\bm{M}^+ + \bm{M}^-)\bm{p}=2\bm{M}^+\bm{a} + (\bm{M}^+ + \bm{M}^-)\bm{c}.
	\end{aligned}
	\label{proof_of_EP_2}
\end{align}
Let $n \times n$ matrix $\bm{H} = 3\bm{M}^+ + \bm{M}^-$, then $\bm{M}^+ +\bm{M}^-=\bm{H} -2 \bm{M}^+$, substituting $H$ into \eqref{proof_of_EP_2} gives
\begin{align*}
	\begin{aligned}
		\bm{H}\bm{p} = 2\bm{M}^+\bm{a} + (\bm{H}- 2\bm{M}^+)\bm{c}
	\end{aligned},
\end{align*}
then the equilibrium price $\bm{p}^*$ should be
\begin{align}
	\begin{aligned}
		\bm{p}^* &= \bm{c} + 2\bm{H}^{-1}\bm{M}^+(\bm{a} - \bm{c})
		\\
		&=\bm{c} + 2(3\bm{M}^+ + \bm{M}^-)^{-1}\bm{M}^+(\bm{a} - \bm{c})
	\end{aligned}.
	\label{proof_of_EP_3}
\end{align}

Denote two $n \times n $ matrices as
\begin{align*}
	\begin{aligned}
		\bm{D} = [(1+\beta)\bm{I} - \delta(1 + \theta)\bm{G}] = (\bm{M}^+)^{-1}\quad and \quad\bm{K} = [(1-\beta)\bm{I} - \delta(1 - \theta)\bm{G}] = (\bm{M}^-)^{-1}
	\end{aligned},
\end{align*}
then substitute D and K into \eqref{proof_of_EP_3}, we can obtain
\begin{align*}
	\begin{aligned}
		\bm{p}^* = \bm{c} + 2(3\bm{D}^{-1} + \bm{K}^{-1})^{-1}\bm{D}^{-1}(\bm{a} - \bm{c})
	\end{aligned}.
\end{align*}
Then by taking out an $\bm{D}^{-1}$ from the left side of the inverse matrix, we get
\begin{align*}
	\begin{aligned}
		\bm{p}^* &= \bm{c} + 2\left[\bm{D}^{-1}(3\bm{I} + \bm{D}\bm{K}^{-1})\right]^{-1}\bm{D}^{-1}(\bm{a} - \bm{c})
		\\
		& = \bm{c} + 2\left(3\bm{I} + \bm{D}\bm{K}^{-1}\right)^{-1}\bm{D}\bm{D}^{-1}(\bm{a} - \bm{c})
		\\
		& = \bm{c} + 2\left(3\bm{I} + \bm{D}\bm{K}^{-1}\right)^{-1}(\bm{a} - \bm{c})
	\end{aligned}.
\end{align*}
Perform a similar operation, extracting a $\bm{K}^{-1}$ from the right side of the inverse matrix, we obtain
\begin{align*}
	\begin{aligned}
		\bm{p}^* & = \bm{c} + 2\left[(3\bm{K} + \bm{D})\bm{K}^{-1}\right]^{-1}(\bm{a} - \bm{c})
		\\
		& = \bm{c} + 2\bm{K}\left(3\bm{K} + \bm{D}\right)^{-1}(\bm{a} - \bm{c})
	\end{aligned}.
\end{align*}
Finally, we have the equilibrium price as
\begin{align}
	\begin{aligned}
		\bm{p}^* 
		=\bm{c} + \left[(1- \beta)\bm{I} - \delta(1- \theta)\bm{G}\right] \left[ (2-\beta)\bm{I} - \delta(2-\theta)\bm{G} \right]^{-1}(\bm{a}-\bm{c})
	\end{aligned}
	\label{proof_of_EP_4}
\end{align}

Now denote a matrix $\bm{V} =  (2-\beta)\bm{I} - \delta(2-\theta)\bm{G}$, the \eqref{proof_of_EP_4} becomes
\begin{align*}
	\begin{aligned}
		\bm{p}^* = \bm{c} + \bm{K}\bm{V}^{-1}(\bm{a}-\bm{c})
	\end{aligned}.
\end{align*}

Suppose $\bm{K} = \gamma\bm{V} + \epsilon\bm{G}$, then we have
\begin{align*}
	\begin{aligned}
		\bm{K} &= \gamma\left[ (2-\beta)\bm{I} - \delta(2-\theta)\bm{G} \right] + \epsilon\bm{G}
		\\
		&=\gamma(2-\beta)\bm{I} -\left[ \gamma\delta(2-\theta) - \epsilon \right]\bm{G}
	\end{aligned},
\end{align*}
by comparing this equation with the original $K=[(1-\beta)\bm{I} - \delta(1 - \theta)\bm{G}]$, we can obtain
\begin{align}
	\left\{
	\begin{aligned}
		&\gamma(2-\beta)=1-\beta
		\\
		&\gamma\delta(2-\theta) - \epsilon = \delta(1-\theta)
	\end{aligned}.
	\right.
	\label{proof_of_EP_5}
\end{align}
According to \eqref{proof_of_EP_5}, it is obvious that $\gamma = \frac{1-\beta}{2-\beta}$. Substituting $\gamma$ into the second equation allows us to solve $\epsilon$ as
\begin{align*}
	\epsilon &= \gamma\delta(2-\theta)-\delta(1-\theta),
	\\
	\epsilon &= \frac{\delta(1-\beta)(2-\theta)}{2-\beta} - \delta(1-\theta),
	\\
	\epsilon &=\frac{\delta(1-\beta)(2-\theta) - \delta(1-\theta)(2-\beta)}{2-\beta},
	\\
	\epsilon &= \frac{\delta(\theta - \beta)}{2-\beta}.
\end{align*}
Therefore, we get
\begin{align*}
	\bm{K}=\frac{1-\beta}{2-\beta}\bm{V} + \frac{\delta(\theta - \beta)}{2-\beta}\bm{G},
\end{align*}
then 
\begin{align}
	\begin{aligned}
		\bm{p}^* &= \bm{c} + \bm{K}\bm{V}^{-1}(\bm{a} - \bm{c})
		\\
		&=\bm{c} + \left[ \frac{1-\beta}{2-\beta}\bm{V} + \frac{\delta(\theta - \beta)}{2-\beta}\bm{G} \right]\bm{V}^{-1}(\bm{a} - \bm{c})
		\\
		&=\bm{c} + \left[\frac{1-\beta}{2-\beta}(\bm{a} - \bm{c}) + \frac{\delta(\theta - \beta)}{2-\beta}\bm{G} \bm{V}^{-1}(\bm{a} - \bm{c})\right]
		\\
		&=\bm{c} + \left[\frac{1-\beta}{2-\beta}(\bm{a} - \bm{c}) + \frac{\delta(\theta - \beta)}{2-\beta}\bm{G} \left[(2-\beta)\bm{I} - \delta(2-\theta)\bm{G}\right]^{-1}(\bm{a} - \bm{c})\right]
		\\
		&=\bm{c} + \left[\frac{1-\beta}{2-\beta}(\bm{a} - \bm{c}) + \frac{\delta(\theta - \beta)}{(2-\beta)^2}\bm{G} \left[\bm{I} - \frac{\delta(2-\theta)}{2-\beta}\bm{G}\right]^{-1}(\bm{a} - \bm{c})\right]
	\end{aligned}
	\label{proof_of_EP_6}
\end{align}

Now, according to Lemma \ref{lem:proof_of_equilbrium_price}, we can translate \eqref{proof_of_EP_6} to
\begin{align*}
	\begin{aligned}
		\bm{p}^*&= \bm{c} + \frac{1-\beta}{2-\beta}(\bm{a} - \bm{c}) + \frac{\delta(\theta - \beta)}{(2-\beta)^2} \cdot \frac{2-\beta}{\delta(2-\theta)}\left[ \left(  \bm{I} - \frac{\delta(2-\theta)}{2-\beta}\bm{G}\right)^{-1} - \bm{I} \right](\bm{a}-\bm{c})
		\\
		&= \bm{c} + \frac{1-\beta}{2-\beta}(\bm{a} - \bm{c}) +\frac{(\theta - \beta)}{(2-\beta)(2-\theta)}\left[ \left(  \bm{I} - \frac{\delta(2-\theta)}{2-\beta}\bm{G}\right)^{-1}(\bm{a}-\bm{c}) - (\bm{a}-\bm{c}) \right]
	\end{aligned}
\end{align*}

According to \citet{BallesterCalvoArmengolZenou2006Econometrica}, there is a K-B centrality in the formula of $\bm{p}^*$, and we can express the equilibrium price as 
\begin{align*}
	\begin{aligned}
		\bm{p}^*= \bm{c} + \left[\frac{1-\beta}{2-\beta}-\frac{\theta - \beta}{(2-\beta)(2-\theta)} \right](\bm{a} - \bm{c}) +\frac{\theta - \beta}{(2-\beta)(2-\theta)}\bm{b}\left(\bm{G}, \frac{\delta(2-\theta)}{2-\beta}, \bm{a}-\bm{c}\right)
	\end{aligned},
\end{align*}
then we can obtain the final form of the equilibrium price
\begin{align*}
	\begin{aligned}
		\bm{p}^* = \bm{c} + \frac{1-\theta}{2-\theta}(\bm{a} -\bm{c}) + \frac{\theta - \beta}{(2-\beta)(2-\theta)}\bm{b}\left(\bm{G}, \frac{\delta(2-\theta)}{2-\beta}, \bm{a}-\bm{c}\right)
	\end{aligned}.
\end{align*}
Proposition \ref{prop:pricing} proved.

\end{proof}

\subsection{Proof of Corollary \ref{cor:centrality_reversal}}

\begin{proof}
To be completed.
\end{proof}

\subsection{Proof of Corollary \ref{cor:regular_price}}

\begin{proof}
Under Assumptions \ref{ass:stability} and \ref{ass:full_symmetry}, for a regular graph where each consumer has $d$ neighbors, the equilibrium price can be computed from \eqref{proof_of_EP_4}:\footnote{If $\bm{A}\bm{1}=\alpha \bm{1}$, then $\bm{A}^{-1}\bm{1}=\frac{1}{\alpha}\bm{1}$.}
\begin{align*}
	\begin{aligned}
		\bm{p}^* 
		&=c\bm{1} + (a-c)\left[(1- \beta)\bm{I} - \delta(1- \theta)\bm{G}\right] \left[ (2-\beta)\bm{I} - \delta(2-\theta)\bm{G} \right]^{-1}\bm{1}
		\\
		&=c\bm{1} +(a-c)[(2-\beta)-\delta d(2-\theta)]^{-1}\left[(1- \beta)\bm{I} - \delta(1- \theta)\bm{G}\right]\bm{1}
		\\
		&=c\bm{1} +(a-c)[(2-\beta)-\delta d(2-\theta)]^{-1}\left[(1- \beta)- \delta d(1- \theta)\right]\bm{1}
	\end{aligned}
\end{align*} 
Corollary \ref{cor:regular_price} proved.
\end{proof}

\section{Proofs for Welfare Analysis}
\label{app:welfare}

\subsection{Proof of Proposition \ref{prop:welfare_objects}}

\begin{proof}
Suppose Assumptions \ref{ass:stability} and \ref{ass:platform_symmetry} hold. we define 4 operatos as
\begin{align*}
	\begin{aligned}
		\bm{\Phi}^{EC}&=\frac{(\bm{I}-\delta\bm{G})}{[(1+\beta)\bm{I} - \delta(1+\theta)\bm{G}][(2-\beta)\bm{I}-\delta(2-\theta)\bm{G}]}~~;
	\end{aligned}
\end{align*}
\begin{align*}
	\bm{\Phi}^{CS}&=(1+\beta)\left[\frac{(\bm{I}-\delta\bm{G})}{[(1+\beta)\bm{I} - \delta(1+\theta)\bm{G}][(2-\beta)\bm{I}-\delta(2-\theta)\bm{G}]}\right]^2~~;
\end{align*}
\begin{align*}
	\bm{\Phi}^{PT}=\frac{(\bm{I} - \delta\bm{G})[(1-\beta)\bm{I}-\delta(1-\theta)\bm{G}]}{[(1+\beta)\bm{I} - \delta(1+\theta)\bm{G}][(2-\beta)\bm{I}-\delta(2-\theta)\bm{G}]^2}~~,
\end{align*}
\begin{align*}
	\bm{\Phi}^{TW}=\bm{\Phi}^{CS} + 2\bm{\Phi}^{PT}.
\end{align*}

According to \eqref{proof_of_EP_4}
	\begin{align*}
	\bm{p}^* 
	=\bm{c} + \left[(1- \beta)\bm{I} - \delta(1- \theta)\bm{G}\right] \left[ (2-\beta)\bm{I} - \delta(2-\theta)\bm{G} \right]^{-1}(\bm{a}-\bm{c})=\bm{c}+\bm{K}\bm{V}^{-1}(\bm{a}-\bm{c}),
\end{align*}
and we alread know that under Assumption   \ref{ass:stability} and \ref{ass:platform_symmetry}, the equilibrium consumptions is
\begin{align*}
	\bm{x}^*=\bm{M}^+(\bm{a}-\bm{p}).
\end{align*}
There is 
\begin{align*}
	\begin{aligned}
		\bm{x}^*&=\bm{M}^+\left( \bm{a}-\bm{c}-\bm{K}\bm{V}^{-1}(\bm{a}-\bm{c}) \right)
		\\
		&=\bm{M}^+(\bm{V}-\bm{K})\bm{V}^{-1}(\bm{a}-\bm{c})
		\\
		&=\bm{M}^+(\bm{I}-\delta\bm{G})\bm{V}^{-1}(\bm{a}-\bm{c})
	\end{aligned}.
\end{align*}
Because the three matrices, $\bm{M}^+$, $(\bm{I}-\delta\bm{G})$ and $\bm{V}^{-1}$, are all commutative and symmetric, so we can express $\bm{x}^*$ as
\begin{align}
	\begin{aligned}
		\bm{x}^*=(\bm{I}-\delta\bm{G})\bm{M}^+\bm{V}^{-1}(\bm{a}-\bm{c}) = \bm{\phi}^{EC}(\bm{a}-\bm{c})
	\end{aligned}.
	\label{16}
\end{align}

For consumer suplus, it can be expressed as the form of vector
\begin{align}
	\begin{aligned}
		CS(\bm{G};\beta,\delta,\theta)=\mathop{\Sigma}_{i \in N}u_i = 2(\bm{a}-\bm{p})^T\bm{x}-(1+\beta)\bm{x}^T\bm{x}+2\delta(1+\theta)\bm{x}^T\bm{G}\bm{x}
	\end{aligned},
	\label{14}
\end{align}
according to the FOC condition, we have
\begin{align*}
	\bm{a}-\bm{p}=\left[(1+\beta)\bm{I}-\delta(1+\theta)\bm{G}\right]\bm{x}^*,
\end{align*}
substitute this condition into \eqref{14}, we obtain
\begin{align}
	CS^*(\bm{G};\beta,\delta,\theta)=\mathop{\Sigma}_{i \in N} u_i=(1+\beta)( \bm{x}^*)^T\bm{x}^*.
\end{align}
Substituting the result of \eqref{16} step into the above equation gives
	\begin{align*}
	\begin{aligned}
		CS^*(\bm{G};\beta,\delta,\theta)&=(1+\beta)[\bm{\Phi}^{EC}(\bm{a}-\bm{c})]^T(\bm{\Phi}^{EC}(\bm{a}-\bm{c}))
		\\
		&=(\bm{a}-\bm{c})^T(1+\beta)(\bm{\Phi}^{EC})^2(\bm{a}-\bm{c})
		\\
		&=\left< \bm{a}-\bm{c}, \bm{\Phi}^{CS}(\bm{a}-\bm{c}) \right>
	\end{aligned}.
\end{align*}

As for the platforms' perspective, we know from \eqref{eq:profit} that the form of platform's profit is
\begin{align*}
	\begin{aligned}
		\Pi^*(\bm{G};\beta,\delta,\theta)&=(\bm{P^*}-\bm{c})^T\bm{x}^*
		\\
		&=[\bm{K}\bm{V}^{-1}(\bm{a}-\bm{c})]^T[(\bm{I}-\delta\bm{G})\bm{M}^+\bm{V}^{-1}(\bm{a}-\bm{c})]
		\\
		&=(\bm{a}-\bm{c})^T[\bm{K}\bm{V}^{-1}\bm{M}^+(\bm{I}-\delta\bm{G})\bm{V}^{-1}](\bm{a}-\bm{c})
		\\
		&=(\bm{a}-\bm{c})^T\frac{(\bm{I}-\delta\bm{G})[(1-\beta)\bm{I}-\delta(1-\theta)\bm{G}]}{[(1+\beta)\bm{I} - \delta(1+\theta)\bm{G}][(2-\beta)\bm{I}-\delta(2-\theta)\bm{G}]^2}(\bm{a}-\bm{c})
		\\
		&=(\bm{a}-\bm{c})^T\bm{\Phi}^{PT}(\bm{a}-\bm{c})
	\end{aligned}.
\end{align*}

Based on the expressions of consumer surplus and platform profit derived in the previous text, we can easily have the total welfare of this society is
\begin{align*}
	\begin{aligned}
		TW^*(\bm{G};\beta,\delta,\theta)&= CS^*(\bm{G};\beta,\delta,\theta)+2\Pi^*(\bm{G};\beta,\delta,\theta)
		\\
		&=(\bm{a}-\bm{c})^T(\bm{\Phi}^{CS} + 2\bm{\Phi}^{PT})(\bm{a}-\bm{c})
		\\&=\left<\bm{a}-\bm{c}, ~\bm{\Phi}^{CS} + 2\bm{\Phi}^{PT}(\bm{a}-\bm{c})\right>
	\end{aligned}.
\end{align*}

Therefore, we have
\begin{align*}
	\begin{aligned}
		\bm{x}^*&=\bm{\Phi}^{EC}(\bm{a}-\bm{c})~~;
		\\
		CS^*&=\left<\bm{a}-\bm{c} , \bm{\Phi}^{CS}(\bm{a}-\bm{c}) \right>~~;
		\\
		\Pi^*&=\left<\bm{a}-\bm{c} , \bm{\Phi}^{PT}(\bm{a}-\bm{c})\right>~~;
		\\
		TW^*&=\left<\bm{a}-\bm{c}, (\bm{\Phi}^{CS}+2\bm{\Phi}^{PT})(\bm{a}-\bm{c})  \right>~~.
	\end{aligned}
\end{align*}

The proof of Theorem \ref{prop:welfare_objects} is complete.
\end{proof}

\subsection{Proof of Proposition \ref{prop:regular_welfare}}

\begin{proof}
When Assumption \ref{ass:stability} and \ref{ass:full_symmetry} hold, for a regular graph where each consumer has $d$ neighbors, the equilibrium consumption can be computed as bellow
\begin{align*}
	\begin{aligned}
		\bm{x}^*&=(a-c)\bm{\Phi}^{EC}\bm{1}
		\\
		&=(a-c)(\bm{I}-\delta\bm{G})[(1+\beta)\bm{I}-\delta(1+\theta)\bm{G}]^{-1}[(2-\beta)\bm{I}-\delta(2-\theta)\bm{G}]^{-1}\bm{1}
		\\
		&=\frac{(a-c)(1-\delta d)}{[(1+\beta)-\delta d (1+\theta)][(2-\beta)-\delta d(2-\theta)]}\bm{1}
	\end{aligned}~~.
\end{align*}

As for the consumer suplus, we have $CS^*=(1+\beta)(\bm{x}^*)^T(\bm{x}^*)$, therefore
\begin{align*}
	\begin{aligned}
		CS^*&=(1+\beta)(a-c)^2(\bm{\Phi}^{EC}\bm{1})^T(\bm{\Phi}^{EC}\bm{1})
		\\
		&= n(1+\beta)\left(\frac{(a-c)(1-\delta d)}{[(1+\beta)-\delta d (1+\theta)][(2-\beta)-\delta d(2-\theta)]}\right)^2
	\end{aligned}.
\end{align*}

Similarly, we can calculate $\Pi^*$ as
\begin{align*}
	\begin{aligned}
		\Pi^*&= (a-c)^2\bm{1}^T\bm{\Phi}^{PT}\bm{1}
		\\
		&= \frac{n(a-c)^2(1-\delta d)[(1-\beta) - \delta d(1-\theta)]}{[(1+\beta)-\delta d (1+\theta)][(2-\beta)-\delta d(2-\theta)]^2}
	\end{aligned}~~.
\end{align*}

The proof of Proposition \ref{prop:regular_welfare} is complete.
\end{proof}

\subsection{Proof of Proposition \ref{prop:cs_profit_comparative}}

\begin{proof}
(i) When the network is d-regular and Assumption \ref{ass:stability} and \ref{ass:full_symmetry} hold, taking the derivative of $\bm{x}^*$ with respect to $\theta$ gives
\begin{align*}
	\frac{\partial \bm{x}^*(\bm{G};\beta,\delta,\theta)}{\partial \theta} 
	= \frac{\delta d (a - c)(1 - \delta d)\left[ (1-2\beta) - \delta d(1 - 2\theta) \right]}{\left[ (1 + \beta) - \delta d(1 + \theta) \right]^2 \left[ (2 - \beta) - \delta d(2 - \theta) \right]^2} \bm{1}~,
\end{align*}
Taking the derivative of $CS^*$ with respect to $\theta$ gives
\begin{align*}
	\begin{aligned}
		\frac{\partial CS^*(\bm{G};\beta,\delta,\theta)}{\partial \theta} = 2(1 + \beta) \left( \bm{a} - \bm{c} \right)^\top \bm{\Phi}^{EC} \frac{\partial \bm{\Phi}^{EC}}{\partial \theta} \left( \bm{a} - \bm{c} \right)
	\end{aligned}.
\end{align*}

Therefore, it is obvious that
\begin{align*}
	\begin{aligned}
		sign\left\{ \frac{\partial x^*}{\partial \theta} \right\} = sign\left\{\frac{\partial CS^*}{\partial \theta}\right\} = sign\left\{1-2\beta-\delta d +2\delta d \theta\right\}.
	\end{aligned}
\end{align*}

(ii) Let 
\begin{align*}
	\begin{aligned}
		s=1-\delta d;~~~~~~~~~
		\\
		m^+=[(1+\beta)-\delta d (1+\theta)];
		\\
		m^-=[(1-\beta)-\delta d (1-\theta)];
		\\
		v=[(2-\beta)-\delta d (2-\theta)]~,
	\end{aligned}
\end{align*}
then from Proposition \ref{prop:regular_welfare}, we can write the profit as 
\begin{align*}
	\begin{aligned}
		\Pi^*(\bm{G};\beta,\delta,\theta) = n(a-c)^2(1-\delta d )\frac{m^-}{m^+v^2}
	\end{aligned}.
\end{align*}

Because $n(a-c)^2(1-\delta d )>0$ are independent of the parameter $\theta$, let $f(\theta)=\frac{m^-}{m^+v^2}$, therefore,
\begin{align*}
	sign\{ \frac{\partial \Pi^*(\bm{G};\beta,\delta,\theta)}{\partial \theta}\} = sign\left\{ \frac{\partial f(\theta)}{\partial \theta} \right\}.
\end{align*}
We have
\begin{align*}
	\begin{aligned}
		lnf(\theta)=lnm^- - lnm^+ - 2lnv
	\end{aligned},
\end{align*}
so 
\begin{align}
	\begin{aligned}
		\frac{\partial lnf(\theta)}{\partial \theta} = \delta d \left(  \frac{1}{m^-} + \frac{1}{m^+} - \frac{2}{v} \right)
	\end{aligned}.
	\label{18}
\end{align}
It is obvious that $m^- + m^+=2s$ and $v=s+m^-$, because $\frac{\partial lnf(\theta)}{\partial \theta} \cdot f(\theta) = \frac{\partial f(\theta)}{\partial \theta}$ and we know $f(\theta)>0$. Hence, we conclude that
\begin{align*}
	\begin{aligned}
		sign\left\{ \frac{\partial f(\theta)}{\partial \theta} \right\}&=sign \left\{ \frac{1}{m^-} + \frac{1}{m^+} - \frac{2}{v} \right\}
		\\
		&=sign \left\{ \frac{2s}{m^-m^+}-\frac{2}{m^- + s} \right\}
		\\
		&=sign \left\{2 \frac{s^2-m^-s +(m^-)^2}{m^-m^+(m^- + s)} \right\}.
	\end{aligned}
\end{align*}
And we have
\begin{align*}
	\begin{aligned}
		s^2-m^- s + (m^-)^2 = \left(s-\frac{1}{2}m^-\right)^2 + \frac{3}{4}(m^-)^2>0
	\end{aligned},
\end{align*}
therefore, 
\begin{align*}
	\frac{\partial f(\theta)}{\partial \theta} > 0 \iff \frac{\partial \Pi^*}{\partial \theta} > 0
\end{align*}

The proof of Proposition \ref{prop:cs_profit_comparative} is complete.
\end{proof}

\subsection{Proof of Proposition \ref{prop:total_welfare}}

\begin{proof}
Under Assumption \ref{ass:stability} and \ref{ass:full_symmetry}, if the network is regular, then we have
\begin{align*}
	\begin{aligned}
		\frac{\partial CS^*(\bm{G};\beta,\delta,\theta)}{\partial \theta} &= n(1+\beta)(a-c)^2s^2\left[\frac{2\delta d m^+v^2-2\delta d (m^+)^2v}{(m^+)^4v^4}\right]
		\\
		&=2\delta d n(1+\beta)(a-c)^2s^2\frac{v-m^+}{(m^+)^3 v^3}
	\end{aligned},
\end{align*}
and according to \eqref{18}
\begin{align*}
	\begin{aligned}
		\frac{\partial \Pi^*(\bm{G};\beta,\delta,\theta)}{\partial \theta} &=n (a-c)^2s\frac{m^-}{m^+v^2} \cdot \left[2\delta d \frac{s^2-m^-s+(m^-)^2}{m^-m^+v}\right]
		\\
		&=2\delta d n(a-c)^2s\frac{s^2-m^-s+(m^-)^2}{(m^+)^3v^3}
	\end{aligned}.
\end{align*}

And we have $TW^*= CS^* + 2\Pi^*$, so there is 
\begin{align*}
	\begin{aligned}
		\frac{\partial TW^*(\bm{G};\beta,\delta,\theta)}{\partial \theta} &= \frac{\partial CS^*(\bm{G};\beta,\delta,\theta)}{\partial \theta} + 2\frac{\partial \Pi^*(\bm{G};\beta,\delta,\theta)}{\partial \theta}
		\\
		&=\frac{2n\delta d (a-c)^2s}{(m^+)^3v^3}\left[s(v-m^+)(1+\beta) + 2m^+[s^2-m^-s+(m^-)^2]\right]
	\end{aligned}.
\end{align*}

Hence, 
\begin{align*}
	\begin{aligned}
		sign\left\{ \frac{\partial TW^*(\bm{G};\beta,\delta,\theta)}{\partial \theta}\right\} = sign \left\{s (v-m^+)(1+\beta) + 2m^+[s^2-m^-s+(m^-)^2] \right\}
	\end{aligned}
\end{align*}.
\end{proof}

\section{Proofs for Designing Interoperability}
\subsection{Proof of Proposition \ref{prop:price_theta_design}}

Under Assumption \ref{ass:full_symmetry}, let \(r=a-c\). From the equilibrium price formula,
\[
	\bm p^*-\bm c
	=
	\left[(1-\beta)\bm I-\delta(1-\theta)\bm G\right]
	\left[(2-\beta)\bm I-\delta(2-\theta)\bm G\right]^{-1}
	r\bm 1 .
\]
Let
\[
	\bm K_\theta=(1-\beta)\bm I-\delta(1-\theta)\bm G,
	\qquad
	\bm V_\theta=(2-\beta)\bm I-\delta(2-\theta)\bm G.
\]
Then
\[
	\frac{\partial \bm K_\theta}{\partial\theta}
	=
	\delta\bm G,
	\qquad
	\frac{\partial \bm V_\theta}{\partial\theta}
	=
	\delta\bm G.
\]
Since \(\bm K_\theta\), \(\bm V_\theta\), and \(\bm G\) are all affine functions of \(\bm G\), they commute. Therefore,
\[
\frac{\partial \bm p^*}{\partial\theta}
=
r\left[
	\delta\bm G\bm V_\theta^{-1}
	-
	\bm K_\theta\bm V_\theta^{-1}
	\delta\bm G\bm V_\theta^{-1}
\right]\bm 1 .
\]
Using commutativity,
\[
\frac{\partial \bm p^*}{\partial\theta}
=
r\delta\bm G
\bm V_\theta^{-1}
\left[
	\bm I-\bm K_\theta\bm V_\theta^{-1}
\right]\bm 1.
\]
Since
\[
	\bm V_\theta-\bm K_\theta
	=
	\bm I-\delta\bm G,
\]
we obtain
\[
\frac{\partial \bm p^*}{\partial\theta}
=
r\delta
\bm G(\bm I-\delta\bm G)\bm V_\theta^{-2}\bm 1,
\]
which proves \eqref{eq:price_theta_derivative_general}.

If \(\bm G\) is \(d\)-regular, then
\[
	\bm G\bm 1=d\bm 1
\]
and
\[
	\bm V_\theta\bm 1=
	\left[(2-\beta)-\delta d(2-\theta)\right]\bm 1.
\]
Substituting into the derivative formula yields
\[
\frac{\partial p^*}{\partial\theta}
=
\frac{
	\delta d(1-\delta d)
}{
	\left[(2-\beta)-\delta d(2-\theta)\right]^2
}
(a-c)>0,
\]
where positivity follows from Assumption \ref{ass:stability}.

Finally, for small \(\delta\),
\[
	\bm V_\theta^{-2}
	=
	\frac{1}{(2-\beta)^2}\bm I
	+
	O(\delta).
\]
Substituting this into \eqref{eq:price_theta_derivative_general} gives
\[
\frac{\partial p_i^*}{\partial\theta}
=
\frac{\delta(a-c)}{(2-\beta)^2}d_i
+
O(\delta^2).
\]
The proof is complete.

\section{Proofs for Network Formation}
\label{app:network_formation}

\subsection{Proof of Proposition \ref{prop:consumer_network}}

\begin{proof}
Suppose Assumptions \ref{ass:stability} and \ref{ass:platform_symmetry} hold. According to \eqref{16}
\begin{align*}
	\begin{aligned}
		\bm{x}^*=(\bm{I}-\delta \bm{G})\bm{D}^{-1}\bm{V}^{-1}(\bm{a}-\bm{c}),
	\end{aligned}
\end{align*}
we have $(\bm{I}-\delta \bm{G})=\frac{1}{3}(\bm{D}+\bm{V})$, hence
\begin{align*}
	\begin{aligned}
		\bm{x}^*&=\frac{1}{3}(\bm{V}^{-1}+\bm{D}^{-1})(\bm{a}-\bm{c})
		\\
		&=\frac{1}{3}\left[ \frac{1}{2-\beta}[\bm{I}-\frac{\delta(2-\theta)}{2-\beta}\bm{G}]^{-1} + \frac{1}{1+\beta}[\bm{I}-\frac{\delta(1+\theta)}{1+\beta}\bm{G}]^{-1} \right](\bm{a}-\bm{c})
		\\
		&=\frac{1}{3}\left[ \frac{1}{2-\beta}\mathop{\Sigma}_{k=0}^{+\infty}\left( \frac{\delta(2-\theta)}{2-\beta}\bm{G} \right)^k + \frac{1}{1+\beta}\mathop{\Sigma}_{k=0}^{+ \infty}\left( \frac{\delta (1+\theta)}{1+\beta}\bm{G} \right)^k \right](\bm{a}-\bm{c})
	\end{aligned}.
\end{align*}
It is obvious that, if $\bm{G}' \succeq \bm{G}$, then $\bm{x}^*(\bm{G}';\beta,\delta,\theta) \geq \bm{x}^*(\bm{G};\beta,\delta,\theta)$. From Theorem \ref{prop:welfare_objects}, we know $CS^*=(1+\beta)(\bm{x}^*)^T(\bm{x}^*)$, therefore $CS^*(\bm{G}';\beta,\delta,\theta) \geq CS^*(\bm{G};\beta,\delta,\theta)$.

The proof of Proposition \ref{prop:consumer_network} is complete.
\end{proof}

\subsection{Proof of Proposition \ref{prop:platform_network}}

\begin{proof}
Suppose Assumptions \ref{ass:stability} and \ref{ass:platform_symmetry} hold. According to \eqref{proof_of_EP_4} and \eqref{16}, we have
\begin{align*}
	\begin{aligned}
		\bm{x}^*=(\bm{I}-\delta \bm{G})\bm{D}^{-1}\bm{V}^{-1}(\bm{a}-\bm{c});
		\\
		\bm{p}^*-\bm{c}\bm{K}\bm{V}^{-1}(\bm{a}-\bm{c})
	\end{aligned}.
\end{align*}
Therefore, 
\begin{align*}
	\begin{aligned}
		\bm{x}^*&=\frac{1}{(2-\beta)(1+\beta)}(\bm{I}-\delta \bm{G})\left[ \bm{I} - \frac{\delta(2-\theta)}{2-\beta}\bm{G} \right]^{-1}\left[\bm{I} - \frac{\delta(1+\theta)}{1+\beta}\bm{G} \right]^{-1}(\bm{a}-\bm{c})
		\\
		&=\frac{1}{(2-\beta)(1+\beta)}(\bm{I}-\delta \bm{G})\left[ \bm{I} + \frac{\delta(2-\theta)}{2-\beta}\bm{G} + \mathcal{O}(\delta^2)\right]\left[ \bm{I} + \frac{\delta(1+\theta)}{1+\beta}\bm{G} + \mathcal{O}(\delta^2) \right](\bm{a}-\bm{c})
		\\
		&=\frac{1}{(2-\beta)(1+\beta)}\left[ \bm{I} + \frac{\delta (\beta -\theta)}{2-\beta}\bm{G} + \mathcal{O}(\delta^2) \right]\left[ \bm{I} + \frac{\delta(1+\theta)}{1+\beta}\bm{G} + \mathcal{O}(\delta^2) \right](\bm{a}-\bm{c})
		\\
		&=\frac{1}{(2-\beta)(1+\beta)}\left[ \bm{I}+\frac{\delta(\beta-\theta)}{2-\beta}\bm{G} +\frac{\delta(1+\theta)}{1+\beta}\bm{G} + \mathcal{O}(\delta^2) \right](\bm{a}-\bm{c})
		\\
		&=\frac{1}{(2-\beta)(1+\beta)}\left[\bm{I} + \frac{\delta(2+\beta^2 +\theta -2\theta\beta)}{(1+\beta)(2-\beta)}\bm{G}  \right](\bm{a}-\bm{c}) + \mathcal{O}(\delta^2)
	\end{aligned},
\end{align*}
and
\begin{align*}
	\begin{aligned}
		\bm{p}^*-\bm{c} &=\bm{K}\bm{V}^{-1}(\bm{a}-\bm{c})
		\\
		&=\frac{1}{2-\beta}[(1-\beta)\bm{I} - \delta(1-\theta)\bm{G}] \left[ \bm{I} + \frac{\delta (2-\theta)}{2-\beta}\bm{G} + \mathcal{O}(\delta^2)\right](\bm{a}-\bm{c})
		\\
		&=\frac{1}{2-\beta}\left[ (1-\beta)\bm{I} + \frac{\delta(2-\theta)(1-\beta)}{2-\beta}\bm{G} - \delta(1-\theta)\bm{G} + \mathcal{O}(\delta^2) \right](\bm{a}-\bm{c})
		\\
		&=\frac{1}{2-\beta}\left[ (1-\beta)\bm{I} + \frac{\delta(\theta -\beta)}{(2-\beta)}\bm{G} \right](\bm{a}-\bm{c}) + \mathcal{O}(\delta^2)
	\end{aligned}.
\end{align*}
$\bm{a}-\bm{c}=\bm{r}$, then
\begin{align*}
	\begin{aligned}
		\Pi^*(\bm{G};\beta,\delta,\theta)&=<\bm{p}^*-\bm{c}, \bm{x}^*>
		\\
		&=\left[ \frac{1}{(2-\beta)(1+\beta)}\bm{r}^T + \frac{\delta(2+\beta^2 + \theta -2\theta\beta)}{(1+\beta)^2(2-\beta)^2}\bm{r}^T\bm{G} \right]\left[ \frac{1-\beta}{2-\beta}\bm{r} + \frac{\delta(\theta - \beta)}{(2-\beta)^2}\bm{G}\bm{r} \right] + \mathcal{O}(\delta^2)
		\\
		&=\frac{1-\beta}{(1+\beta)(2-\beta)^2}\bm{r}^T\bm{r} + \frac{\delta}{(1+\beta)^2(2-\beta)^3}\mathcal{Q}(\beta,\theta)\bm{r}^T\bm{G}\bm{r} + \mathcal{O}(\delta^2)
	\end{aligned},
\end{align*}
where
\begin{align*}
	\begin{aligned}
		\mathcal{Q}(\beta,\theta)&=(\theta -\beta)(1+\beta) + (1-\beta)(2+\beta^2+\theta -2\theta \beta)
		\\
		&=2-3\beta-\beta^3+2\theta(1-\beta+\beta^2)
	\end{aligned}.
\end{align*}

The sign of $\mathcal{Q}(\beta, \theta)$ determines the direction of the network effect to platforms' profits. Let $\bar{\theta}(\beta) = \frac{\beta^3 + 3\beta -2}{2(1-\beta +\beta^2)}$, and when $\theta>\bar{\theta}(\beta)$, $\mathcal{Q}(\beta,\theta)>0$, conversly, $\mathcal{Q}(\beta,\theta)<0$. If $\mathcal{Q}(\beta,\theta)=o$, then the substitution effect and the interoperation effect cancel each other out.

The proof of Proposition \ref{prop:platform_network} is complete.
\end{proof}

\section{Uniform versus Discriminatory Pricing}
\label{app:uniform_pricing}

\subsection{Proof of Theorem \ref{thm:uniform_pricing}}

\begin{proof}
	(i) If Assumptions \ref{ass:stability} and \ref{ass:platform_symmetry} hold, if both platforms charge uniform prices, then the equilibrium profit of a single platform A should be $\Pi^A=<\bm{x}^A_u, p^A_u\bm{1}-\bm{c}>$. Therefore, the FOC is
\begin{align*}
	\begin{aligned}
		\frac{\partial \Pi^A}{\partial p^A_u}=<\bm{x}^A_u,\bm{1}> + <\frac{\partial \bm{x}^A_u}{\partial p^A_u}, p^A_u\bm{1}-\bm{c}>=0
	\end{aligned},
\end{align*}
and from \eqref{eq:consumption_equilibrium}, we have $\frac{\partial \bm{x}^A_u}{\partial p^A_u}=-\frac{\bm{M}^++\bm{M}^-}{2}\bm{1}$. Then
\begin{align*}
	\begin{aligned}
		<\bm{x}^A_u, \bm{1}>=<\frac{\bm{M}^+ +\bm{M}^-}{2}\bm{1}, p^A_u\bm{1}-\bm{c}>
	\end{aligned}.
\end{align*}

Because Assumption \ref{ass:platform_symmetry} holds, there is $\bm{x}^A_u=\bm{x}^B_u=\bm{x}^u$ and $p^A_u=p^B_u=p^u$, hence
\begin{align*}
	\begin{aligned}
		<\bm{M}^+(\bm{a}-p^u\bm{1}),\bm{1}> = <\frac{\bm{M}^+ +\bm{M}^-}{2}\bm{1}, p^u\bm{1}-\bm{c}>
	\end{aligned},
\end{align*}
and
\begin{align*}
	\begin{aligned}
		\bm{1}^T\bm{M}^+\bm{a}-p^u\bm{1}^T\bm{M}^+\bm{1}&=\frac{1}{2}p^u\bm{1}^T(\bm{M}^++\bm{M}^-)\bm{1}-\frac{1}{2}\bm{1}^T(\bm{M}^+ +\bm{M}^-)\bm{c};
		\\
		2\bm{1}^T\bm{M}^+\bm{a}+\bm{1}^T&\bm{M}^+\bm{c}+\bm{1}^T\bm{M}^-\bm{c}=p^u\bm{1}(3\bm{M}^+ + \bm{M}^-)\bm{1},
	\end{aligned}
\end{align*}
therefore
\begin{align*}
	\begin{aligned}
		p^u=\frac{<\bm{1},\bm{M}^+(2\bm{a}+\bm{c})> + <\bm{1},\bm{M}^-\bm{c}>}{<\bm{1},(3\bm{M}^++\bm{M}^-)\bm{1}>}
	\end{aligned}
\end{align*}

If Assumption \ref{ass:full_symmetry} also holds, we have
\begin{align*}
	\begin{aligned}
		p^u=\frac{2a\bm{1}^T\bm{M}^+\bm{1}+c\bm{1}^T\bm{M}^+\bm{1} + c\bm{1}^T\bm{M}^-\bm{1}}{3\bm{1}^T\bm{M}^+\bm{1} + \bm{1}^T\bm{M}^-\bm{1}},
	\end{aligned}
\end{align*}
then
\begin{align*}
	\begin{aligned}
		p^u=\frac{2a\bm{1}^T\bm{M}^+\bm{1}+c(3\bm{1}^T\bm{M}^+\bm{1}+\bm{1}^T\bm{M}^-\bm{1})-2c\bm{1}^T\bm{M}^+\bm{1}}{3\bm{1}^T\bm{M}^+\bm{1}+\bm{1}^T\bm{M}^-\bm{1}}.
	\end{aligned}
\end{align*}
Finally, we have
\begin{align*}
	p^u=c+2(a-c)\frac{<\bm{1},\bm{M}^+\bm{1}>}{<\bm{1},(3\bm{M}^+ +\bm{M}^-)\bm{1}>}.
\end{align*}
The proof of Theorem \ref{thm:uniform_pricing} is complete.
\end{proof}

\subsection{Proof of Corollary \ref{cor:uniform_price_approx}}

\begin{proof}
When Assumptions \ref{ass:stability} and \ref{ass:platform_symmetry} hold, and $\delta$ is small enough, we have
\begin{align*}
	\begin{aligned}
		<\bm{1},\bm{M}^+\bm{1}>&=\frac{1}{1+\beta}\bm{1}^T(\bm{I}+\frac{\delta(1+\theta)}{1+\beta}\bm{G} +\mathcal{O}(\delta^2))\bm{1}
		\\
		&=\frac{1}{1+\beta}(n+\frac{\delta(1+\theta)}{1+\beta}\mathop{\Sigma}_{i=1}^{n}d_i) + \mathcal{O}(\delta^2)
	\end{aligned}.
\end{align*}
Similarly,
\begin{align*}
	\begin{aligned}
		<\bm{1},\bm{M}^-\bm{1}>=\frac{1}{1-\beta}(n + \frac{\delta(1-\theta)}{1-\beta}\mathop{\Sigma}_{i=1}^{n}d_i) + \mathcal{O}(\delta^2)
	\end{aligned}.
\end{align*}

Let 
\begin{align*}
	\begin{aligned}
		R_1=\frac{3}{1+\beta} + \frac{1}{1-\beta}=\frac{2(2-\beta)}{(1-\beta^2)} ~\text{ and }~ R_2=\frac{3(1+\theta)}{(1+\beta)^2} + \frac{1-\theta}{(1-\beta)^2}
	\end{aligned},
\end{align*}
then we have
\begin{align*}
	\begin{aligned}
		p^u&=c+ (a-c)\left[\frac{\frac{2n}{1+\beta} + 2\delta \mathop{\Sigma}_{i=1}^{n}d_i\frac{1+\theta}{(1+\beta)^2}}{nR_1 + \delta\mathop{\Sigma}_{i=1}^{n}d_i R_2} + \mathcal{O}(\delta^2)\right]
		\\
		&=c+ (a-c)\left[\frac{\frac{2}{(1+\beta)R_1} + 2\delta \bar{d}\frac{(1+\theta)}{R_1(1+\beta)^2}}{1+\delta \bar{d}\frac{R_2}{R_1}} + \mathcal{O}(\delta^2)\right]
	\end{aligned}.
\end{align*}

Because $\delta$ is small enough, therefore $(1+\delta\bar{d}\frac{R_2}{R_1})^{-1} = 1 - \delta\bar{d}\frac{R_2}{R_1} + \mathcal{O}(\delta^2)$, then
\begin{align*}
	\begin{aligned}
		p^u&=c+ (a-c)\left[\left( \frac{2}{(1+\beta)R_1} + \frac{2\delta\bar{d}(1+\theta)}{R_1(1+\beta)^2} \right)\left( 1- \delta\bar{d}\frac{R_2}{R_1} +\mathcal{O}(\delta^2) \right) + \mathcal{O}(\delta^2)\right]
		\\
		&=c+ (a-c)\left[ \frac{2}{(1+\beta)R_1} + \frac{2\delta\bar{d}}{R_1}\left( \frac{(1+\theta)}{(1+\beta)^2}-\frac{R_2}{(1+\beta)R_1} \right) \right] + \mathcal{O}(\delta^2)
		\\
		&=c+ (a-c)\left[ \frac{1-\beta}{2-\beta} + \frac{2\delta\bar{d}}{R_1}\left( \frac{(1+\theta)}{(1+\beta)^2}-\frac{1-\beta}{2(2-\beta)}\left( \frac{3(1+\theta)}{(1+\beta)^2} + \frac{1-\theta}{(1-\beta)^2} \right) \right) \right] + \mathcal{O}(\delta^2)
		\\
		&=c+ (a-c)\left[ \frac{1-\beta}{2-\beta} + \frac{2\delta \bar{d}}{R_1}\left( \frac{(1+\theta)}{(1+\beta)^2}\left( 1-\frac{3(1-\beta)}{2(2-\beta)} \right)-\frac{1-\theta}{2(2-\beta)(1-\beta)} \right) \right] + \mathcal{O}(\delta^2)
		\\
		&=c+ (a-c)\left[ \frac{1-\beta}{2-\beta} +\frac{2\delta \bar{d}}{R_1}\left( \frac{1+\theta}{(1+\beta)^2} \cdot \frac{1+\beta}{2(2-\beta)} -\frac{1-\theta}{2(2-\beta)(1-\beta)} \right) \right] + \mathcal{O}(\delta^2)
		\\
		&=c+ (a-c)\left( \frac{1-\beta}{2-\beta} + 2\delta\bar{d}\frac{\theta-\beta}{R_1(2-\beta)(1-\beta^2)} \right) + \mathcal{O}(\delta^2)
		\\
		&=c+ (a-c)\left( \frac{1-\beta}{2-\beta} + \delta\bar{d}\frac{\theta-\beta}{(2-\beta)^2} \right) + \mathcal{O}(\delta^2)
	\end{aligned}
\end{align*}
The proof of Theorem \ref{cor:uniform_price_approx} is complete.
\end{proof}





\subsection{Proof of Proposition \ref{prop:aggregate_uniform_discriminatory}}

\begin{proof}
From Corollary \ref{cor:uniform_price_approx}, we have 
\begin{align*}
	\begin{aligned}
		p^u-c=(a-c)\left( \frac{1-\beta}{2-\beta} + \delta\bar{d}\frac{\theta-\beta}{(2-\beta)^2} \right) + \mathcal{O}(\delta^2)
	\end{aligned},
\end{align*}
therefore,
\begin{align}
	\begin{aligned}
		\bm{x}^u&=(a-p^u)\bm{M}^+\bm{1}
		\\
		&=\left[ a-c-(a-c)\left( \frac{1-\beta}{2-\beta} + \delta\frac{(\theta-\beta)}{(2-\beta)^2} \right)+ \mathcal{O}(\delta^2) \right]\bm{M}^+\bm{1} 
		\\
		&=(a-c)\left( 1- \frac{1-\beta}{2-\beta} - \delta\frac{\theta-\beta}{(2-\beta)^2} + \mathcal{O}(\delta^2)\right)\bm{M}^+\bm{1}
		\\
		&=\frac{a-c}{(2-\beta)(1+\beta)}\left( 1-\delta\frac{\theta-\beta}{2-\beta}\bar{d} +\mathcal{O}(\delta^2)\right)\left( \bm{I} + \delta \frac{1+\theta}{1+\beta}\bm{G} + \mathcal{O}(\delta^2) \right)\bm{1}
		\\
		&=\frac{a-c}{(2-\beta)(1+\beta)}\left( \bm{1} + \delta\frac{1+\theta}{1+\beta}\bm{G}\bm{1} - \delta\frac{\theta-\beta}{2-\beta}\bar{d}\bm{1} \right) + \mathcal{O}(\delta^2)
	\end{aligned}
\end{align}

The equilibrium price and consumption volume under discriminatory pricing, as obtained above\footnote{In the Proof of Proposition \ref{prop:platform_network}}, are:
\begin{align*}
	\begin{aligned}
		&\bm{p}^* - c\bm{1}= \frac{a-c}{2-\beta}\left((1-\beta)\bm{1} + \delta\frac{\theta-\beta}{2-\beta}\bm{d}\right)+ \mathcal{O}(\delta^2);
		\\
		&\bm{x}^*=\frac{a-c}{(2-\beta)(1+\beta)}\left( \bm{1} + \delta\frac{2+\beta^2+\theta -2\theta\beta}{(2-\beta)(1+\beta)}\bm{1} \right) + \mathcal{O}(\delta^2)
		\\
		&~~~=\frac{a-c}{(2-\beta)(1+\beta)}\left( \bm{1} + \delta\frac{1+\theta}{1+\beta}\bm{G}\bm{1} - \delta\frac{\theta-\beta}{2-\beta}\bm{G}\bm{1}\right) +\mathcal{O}(\delta^2).
	\end{aligned}
\end{align*}

Let $\varepsilon=\frac{1+\theta}{1+\beta }$; $\sigma=\frac{\theta-\beta}{2-\beta}$ and  $\tau=\frac{a-c}{(2-\beta)(1+\beta)}$, then

\begin{align*}
	\begin{aligned}
		&\bm{x}^* = \tau\left( \bm{1} + \delta(\varepsilon-\sigma)\bm{d}\right) + \mathcal{O}(\delta^2);
		\\
		&\bm{x}^u = \tau\left( \bm{1} + \delta\varepsilon\bm{d} - \delta \sigma\bar{d}\bm{1} \right) + \mathcal{O}(\delta^2);
		\\
		&\bm{p}^* -c\bm{1} = \tau \left[(1-\beta^2)\bm{1} + \delta \sigma(1+\beta)\bm{d}\right] + \mathcal{O}(\delta^2);
		\\
		&p^u\bm{1} -c\bm{1} = \tau \left[ (1-\beta^2)\bm{1} + \delta \sigma(1+\beta)\bar{d}\bm{1}  \right] + \mathcal{O}(\delta^2).
	\end{aligned}
\end{align*}

According to the previous text, we have $CS=(1+\beta)\bm{x}^T\bm{x}$. Using the first-order approximation of equilibrium consumption, the corresponding approximation of consumer surplus is
\begin{align*}
	\begin{aligned}
		CS^* &= (1+\beta)\tau^2 (\bm{1} + \delta(\varepsilon-\sigma)\bm{d} + \mathcal{O}(\delta^2))^T(\bm{1} + \delta(\varepsilon-\sigma)\bm{d} + \mathcal{O}(\delta^2));
		\\
		\widetilde{CS}^*& = \frac{(a-c)^2}{(2-\beta)^2(1+\beta)}\left[ n + 2\delta(\varepsilon - \sigma)\mathop{\Sigma}_{i=1}^{n}d_i + \delta^2(\varepsilon - \sigma)^2\mathop{\Sigma}_{i=1}^{n}d_i^2 \right],
	\end{aligned}
\end{align*}
and
\begin{align*}
	\begin{aligned}
		CS^u&=(1+\beta)\tau^2(\bm{1} + \delta \varepsilon\bm{d} -\delta\sigma\bar{d}\bm{1} + \mathcal{O}(\delta^2))^T(\bm{1} + \delta \varepsilon\bm{d} -\delta\sigma\bar{d}\bm{1} + \mathcal{O}(\delta^2));
		\\
		\widetilde{CS}^u& = \frac{(a-c)^2}{(2-\beta)^2(1+\beta)} \left[ n + 2\delta(\varepsilon-\sigma)\mathop{\Sigma}_{i=1}^{n}d_i + \delta^2\varepsilon^2\mathop{\Sigma}_{i=1}^{n}d_i^2 + \delta^2\sigma^2n\bar{d}^2 - 2\delta^2\varepsilon\sigma  n \bar{d}^2 \right] .
	\end{aligned}
\end{align*}

Therefore, we can obtain
\begin{align*}
	\begin{aligned}
		\widetilde{\Delta CS} &= \widetilde{CS}^u-\widetilde{CS}^*
		\\
		&= \frac{(a-c)^2}{(2-\beta)^2(1+\beta)}\left[ \delta^2\varepsilon^2\mathop{\Sigma}_{i=1}^{n}d_i^2 + \delta^2\sigma^2n\bar{d}^2 - 2\delta^2 \varepsilon\sigma n \bar{d}^2 -  \delta^2(\varepsilon - \sigma)^2\mathop{\Sigma}_{i=1}^{n}d_i^2\right]
		\\
		& = \frac{\delta^2(a-c)^2}{(2-\beta)^2(1+\beta)}\left[ \sigma(\sigma - 2\varepsilon)n\bar{d}^2 - \sigma(\sigma-2\varepsilon)\mathop{\Sigma}_{i=1}^{n}d_i^2 \right]
		\\
		& =\frac{\delta^2(a-c)^2}{(2-\beta)^2(1+\beta)}\sigma(2\varepsilon-\sigma)(\mathop{\Sigma}_{i=1}^{n}d_i^2-n\bar{d}^2)
	\end{aligned}.
\end{align*}

It is obvious that $\mathop{\Sigma}_{i=1}^{n}d_i^2-n\bar{d}^2 = \mathop{\Sigma}_{i=1}^{n}(d_i-\bar{d})^2 > 0$, and $2\varepsilon-\sigma=\frac{3\theta(1-\beta) + [4-\beta(1-\beta)]}{(1+\beta)(2-\beta)} > 0$. Hence, the sign of $\widetilde{\Delta CS}$ depends on $\sigma=\frac{\theta-\beta}{2-\beta}$.

Finally, we have
\begin{align*}
	\begin{aligned}
		\theta > \beta &\iff  \widetilde{CS}^u>\widetilde{CS}^*  ;
		\\
		\theta = \beta &\iff  \widetilde{CS}^u=\widetilde{CS}^* ;
		\\
		\theta < \beta &\iff \widetilde{CS}^u<\widetilde{CS}^* .
	\end{aligned}
\end{align*}

And according to previous text, the profit of a single platform is $\Pi=< \bm{p}-c\bm{1}, \bm{x}>$, using the first-order approximation of equilibrium price and consumption, the corresponding approximation is
\begin{align*}
	\begin{aligned}
		\Pi^* &=<\bm{p}^*-c\bm{1}, \bm{x}^*>
		\\
		&=\tau^2\left( (1-\beta^2)\bm{1} + \delta\sigma(1+\beta)\bm{d} + \mathcal{O}(\delta^2)\right)^T(\bm{1}+\delta(\varepsilon-\sigma)\bm{d} + \mathcal{O}(\delta^2));
		\\
		\widetilde{\Pi}^*&=\tau^2\left[  (1-\beta^2)n + \delta\sigma(1+\beta)n\bar{d} + \delta(\varepsilon-\sigma)(1-\beta^2)n\bar{d} + \delta^2\sigma(\varepsilon-\sigma)(1+\beta)\mathop{\Sigma}_{i=1}^n d_i^2\right],
	\end{aligned}
\end{align*}
and 
\begin{align*}
	\begin{aligned}
		\Pi^u &= (p^u-c)(\bm{x}^u)^T\bm{1}
		\\
		&=\tau^2((1-\beta^2)+\delta\sigma(1+\beta)\bar{d}+\mathcal{O}(\delta^2))(\bm{1} + \delta\varepsilon\bm{d} - \delta\sigma\bar{d}\bm{1} + \mathcal{O}(\delta^2) )^T\bm{1}
		\\
		&=\tau^2((1-\beta^2)+\delta\sigma(1+\beta)\bar{d}+\mathcal{O}(\delta^2))(n + \delta\varepsilon n \bar{d} - \delta\sigma n\bar{d} + \mathcal{O}(\delta^2));
		\\
		\widetilde{\Pi}^u&= \tau^2\left[ (1-\beta^2)n + \delta\sigma(1+\beta)n\bar{d} + \delta(\varepsilon-\sigma)(1-\beta^2)n\bar{d}+\delta^2\sigma(\varepsilon-\sigma)(1+\beta)n\bar{d}^2 \right],
	\end{aligned}
\end{align*}
hence,
\begin{align*}
	\begin{aligned}
		\widetilde{\Delta \Pi}&=\widetilde{\Pi}^u-\widetilde{\Pi}^*
		\\
		& = \tau^2\delta^2(1+\beta)\sigma(\sigma-\varepsilon)(\mathop{\Sigma}_{i=1}^n d_i^2 - n\bar{d}^2)
	\end{aligned}.
\end{align*}

It is obvious that $(\sigma-\varepsilon) =\frac{-[2(1-\theta\beta) + \theta +\beta]}{(2-\beta)(1+\beta)} < 0$, therefore, the sign of $\Delta \Pi$ depends on the sign of $\sigma$, then 
\begin{align*}
	\begin{aligned}
		&\theta > \beta \iff \widetilde{\Pi}^u<\widetilde{\Pi}^**;
		\\
		&\theta = \beta \iff \widetilde{\Pi}^u=\widetilde{\Pi}^*;
		\\
		&\theta < \beta \iff \widetilde{\Pi}^u>\widetilde{\Pi}^*.
	\end{aligned}
\end{align*}
The proof of Theorem \ref{prop:aggregate_uniform_discriminatory} is complete.
\end{proof}

\bibliographystyle{ecta}
\bibliography{reference}

\end{document}